\documentclass[12pt]{article} 
\usepackage{etruscan}
\pdfoutput=1
\usepackage[authoryear]{natbib}
\usepackage{graphicx,epstopdf}
\usepackage{setspace}
\usepackage[margin=2.5cm]{geometry}
\usepackage{array}
\usepackage{hyperref}
\usepackage{xcolor}
\usepackage{xfrac}

\usepackage{pdflscape}
\usepackage{graphicx}
\usepackage[utf8]{inputenc}
\usepackage[english]{babel}
\usepackage[normalem]{ulem}
\usepackage{amsmath, amsthm, amssymb, amsfonts}
\usepackage{url} 
\usepackage{comment}
\usepackage{booktabs}

\usepackage{graphicx}
\usepackage{subcaption}

\makeatletter

\theoremstyle{definition}
\newtheorem{defn}{\protect\definitionname}
\theoremstyle{plain}
\newtheorem{lem}{\protect\lemmaname}
\theoremstyle{plain}
\newtheorem{thm}{\protect\theoremname}
\theoremstyle{plain}
\newtheorem{prop}{\protect\propname}
\theoremstyle{plain}
\newtheorem{cor}{\protect\corollaryname}
\theoremstyle{plain}
\newtheorem{assumption}{\protect\assumptionname}

\usepackage{pdflscape}
\makeatother

\usepackage{babel}
\providecommand{\assumptionname}{Assumption}
\providecommand{\corollaryname}{Corollary}
\providecommand{\definitionname}{Definition}
\providecommand{\lemmaname}{Lemma}
\providecommand{\propname}{Proposition}
\providecommand{\theoremname}{Theorem}

\usepackage{thmtools}

\usepackage{float}
\floatstyle{ruled}
\newfloat{algorithm}{tbp}{loa}
\providecommand{\algorithmname}{Algorithm}
\floatname{algorithm}{\protect\algorithmname}

\usepackage{ifpdf}
\ifpdf
  \DeclareGraphicsExtensions{.pdf,.png,.jpg}
\else
  \DeclareGraphicsExtensions{.eps}
\fi

\usepackage{natbib}
\setcitestyle{authoryear,open={(},close={)}}

\usepackage{enumitem}
\usepackage[section]{placeins}
\usepackage{framed}

\makeatletter
\providecommand{\sf@counterlist}{}
\makeatother

\begin{document}

\title{Homophily and transitivity in dynamic network formation}
\author{Kevin Dano, Bryan S. Graham and Yassine Sbai Sassi\thanks{\underline{Dano}: Department of Economics, Princeton University, Washington Rd, Princeton, NJ 08544, e-mail: \url{kdano@princeton.edu}, web:  \href{https://kevindano.github.io}{https://kevindano.github.io}. \newline  \underline{Graham}: Department of Economics, University of California - Berkeley, 530 Evans Hall \#3380, Berkeley, CA 94720-3880 and National Bureau of Economic Research, e-mail: \url{bgraham@econ.berkeley.edu}, web:  \href{http://bryangraham.github.io/econometrics/}{http://bryangraham.github.io/econometrics/}. \newline \underline{Sbai Sassi}: Department of Economics, New York University 19 West 4th Street, 6th Floor New York, NY 10012, e-mail: \url{ys6799@nyu.edu}, web:  \href{https://yassinesbaisassi.github.io/}{https://yassinesbaisassi.github.io/}. \newline This paper substantially develops and extends ideas initially reported in the second author's working paper ``Homophily and transitivity in dynamic network formation". That paper, in turn, expanded upon comments prepared in response to Guido Imbens' \emph{Journal of Business and Economic Statistics Annual Lecture} delivered at the American Economic Association meetings in Chicago (January 7th, 2012). We are grateful to Guido, and then JBES editors Kei Hirano and Jonathan Wright, for providing the initial opportunity to consider the ideas explored in this paper. We thank current JBES editor, Michal Kolesár, for the opportunity to develop this material further. This paper additionally develops previously unreported findings discovered by the first author in the context of dissertation research, a counter-example discovered by the third author, as well as substantial new work done by all three authors collaboratively in preparation for the 2026 JBES lecture in Philadelphia. We are also grateful to Peter Bickel, Paul Goldsmith-Pinkham, Jim Heckman, Bo Honore, Noureddine El Karoui, Michael Leung as well as to participants at the Measuring and Interpreting Inequality (MIE) Inaugural Conference (February 18th, 2012), the CEME Conference on Inference in Nonstandard Problems (June 15th \& 16th, 2012), the invited Social Interactions session of the ESEM (August 27 - 31, 2012), the CEMMAP conference on Estimation of Complementarities in Matching and Social Networks (October 5 \& 6, 2012), the 9th Invitational Choice Symposium (June 12 -- 16, 2013), the Berkeley Statistics Department's NSF reading group on networks, the Third European Meeting on Networks (June 18-19th, 2015), the USC-INET Conference on Networks (November 20-21st, 2015), the 2026 AEA/ES Winter Meetings, Structural Econometrics and Models of Strategic Interactions Conference in Toulouse (May 28th \& 29th, 2026) and the UPENN Econometric Seminar for valuable comments. Financial support, for the second author, from NSF Grant SES \#1357499 is gratefully acknowledged. Generative AI was used to catch typos, notational inconsistencies and check proofs. We used Gemini 3.1 Pro and GPT-5/6 for these purposes. The authors report there are no competing interests to declare. All the usual disclaimers apply.}} 
	
\date{\today 	}
	
\maketitle
	\thispagestyle{empty}
\newpage
\begin{abstract}
\begin{singlespace}
\noindent{\footnotesize In social and economic networks linked agents
often share connections in common. There are two competing explanations
for this phenomenon. First, agents may have a structural taste for
transitive links -- the returns to linking may be higher if two agents
share a common connection. Second, agents may assortatively match on unobserved
attributes, a process called homophily. We study parameter identifiability
in a simple model of dynamic network formation with both effects.
Agents form, maintain, and dissolve links over time to maximize
utility. The return to linking may be higher if agents share connections
in common. A pair-specific utility component allows for arbitrary
homophily on time-invariant agent attributes. We derive conditions
under which it is possible to detect the presence of a taste for transitivity
in the presence of assortative matching on unobservables. We leave
the joint distribution of the initial network and the pair-specific
utility component, both high dimensional nuisance parameters, unrestricted. Our identification result is constructive, suggesting an analog estimator, whose finite and (single) large network properties we characterize. We show, via examples, the delicacy of information accumulation in the single (large) network setting.\smallskip{}
}{\footnotesize\par}

\noindent{\footnotesize\uline{JEL Codes:}}{\footnotesize{} C31, C33,
C35}{\footnotesize\par}

{\footnotesize\smallskip{}
}{\footnotesize\par}

\noindent{\footnotesize\uline{Keywords}}{\footnotesize : Strategic
Network Formation, Homophily, Transitivity, Conditional Likelihood, Triads, Assortative Matching, Initial Conditions Problem, Panel Data, Fixed Effects, Incidental Parameters}{\footnotesize\par}
\end{singlespace}
\end{abstract}
	
\newpage
	
\pagenumbering{arabic}
\onehalfspacing
\renewcommand\thmcontinues[1]{Continued}

\newpage

Let $i,j$ and $k$ index the agents of a \emph{triad}, drawn at random from a simple undirected network. Let $D_{ij}=1$ if agents $i$ and $j$ in this triad are connected and zero otherwise.\footnote{Since links are
undirected, $D_{ij}=D_{ji}$ for all pairs $ij$. Self-ties
are also ruled out, so that $D_{ii}=0$ for all $i$.} The triad $ijk$ must be wired, up to isomorphisms, in one of the four ways shown in Figure \ref{fig: Triad-types}. The two-star and triangle configurations interest us here. Consider the probability that edge $ij$ is present \emph{conditional} on the presence of edges $ik$ and $jk$; that is, the probability that agents $i$ and $j$ are connected given that they share $k$ as a mutual friend:
\begin{equation}
\rho_{\mathrm{CC}}=\Pr\left(\left.D_{ij}=1\right|D_{ik}=1,\thinspace D_{jk}=1\right).\label{eq: transitivity_index}
\end{equation}
The sample analog of (\ref{eq: transitivity_index}) is called the
transitivity index or global clustering coefficient in the networks
literature \citep{Harary_Kommel_JMS1979,Dekker_et_al_NS2019}. In real world social and economic
networks $\rho_{\mathrm{CC}}$ is generally higher, often substantially so, than $\rho_{\mathrm{D}}=\Pr\left(D_{ij}=1\right),$
the unconditional frequency at which agents link \citep{Jackson_NetBook08}.
Networks exhibit \emph{transitivity}: agents link more frequently if they have connections in common (``the friend of my friend is also my friend'').

\begin{figure}[tbh]
\caption{Triad configurations in undirected networks\label{fig: Triad-types}}

\centering{}\includegraphics{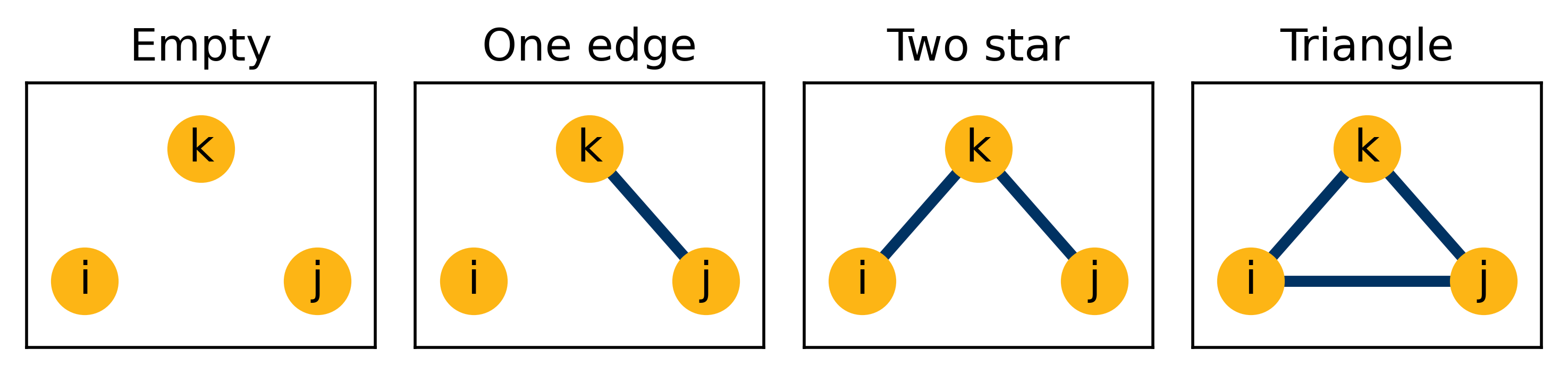}
\end{figure}    

Measured transitivity reflects two distinct phenomena. First, agents
may \emph{prefer} transitive ties. That is, the payoff associated with a link between any two agents may increase with the number of neighbors they share in common. \citet{Coleman_Book1990} argues that closed subnetworks, like triangles, facilitate the formation of social and human
capital.
\citet{Jackson_et_al_AER12} provide a game-theoretic foundation
for transitivity, arguing that common friends, by monitoring transactions
between agents, help to sustain cooperation. More mundanely socializing
may be easier and more enjoyable when individuals share friends in
common.

Second, transitivity may reflect assortative matching
on unobserved attributes. Link payoffs may be
greater across similar agents, leading to dense ties among them. The
tendency for individuals to assortatively match on gender, race, age
and other observed characteristics, called \emph{homophily} by network
researchers, is widely documented \citep{McPherson_et_al_ARS01}. 

Understanding how to detect a structural taste for transitive ties in the presence of unobserved homophily is the primary subject of this paper. Distinguishing between a taste for transitivity and homophily as drivers of clustering in networks is of considerable scientific
interest and policy relevance. When agents prefer transitive links, temporary outside interventions can induce long-run changes in network structure. Whereas, if clustering primarily reflects assortative matching on unobserved agent attributes, such interventions generally do not generate
durable changes in network structure \citep{Graham_HBE2020}. 

There is a useful analogy between our research problem, both in terms
of scientific motivation and technical content, and that
of discriminating between state dependence and unobserved heterogeneity
in single-agent dynamic binary choice analysis \citep*{Cox_JRSS58,Heckman_AI78,Heckman_SADD81a,Heckman_SADD81b,Heckman_SLM81,Chamberlain_LALMD85}.
The multi-agent aspect of our problem generates new challenges, nevertheless
we will utilize intuitions and proof strategies from research in this area \citep[e.g.,][]{Honore_Kyriazidou_EM00}.

Section \ref{sec: model}, which appears next, introduces a dynamic model of network formation. Our analysis presumes the researcher observes only a \emph{single} network, consisting of $N$ agents, for a small number of time periods, $T$.  The model features two high-dimensional nuisance parameters: (i) the $\binom{N}{2}$ dyad-specific unobserved `incidental' parameters $\mathbf{A}=\left[A_{ij}\right]_{1\leq i \neq j \leq N}$, which capture heterogeneity in the payoff to link-formation across pairs of agents,  and (ii) the probability mass function for the initial network (which assigns probabilities to the $2^{\binom{N}{2}}$ possible initializations of the observed sequence of networks).

Learning from a single network sequence is challenging. The presence of cross-dyad dependence in link formation, both for structural reasons and via (unmodeled) dependence across the elements of $\mathbf{A}$, complicates identification, estimation and inference. The likelihood for the observed network sequence does not naturally factor into a product of independent multiplicands; how information grows, for example, with the number of agents, $N$, is not readily apparent \citep[cf.,][]{Goldsmith-Pinkham_Imbens_JBSE13}. Although our results are specific to a particular data generating process, we expect that several of the methods introduced below could be adapted to analyze other models of dynamic network formation (in fixed $T$ settings).

We proceed as follows. Section \ref{sec: conditional_likelihood} derives a conditional likelihood for the observed network sequence that is invariant to both the dyad-specific heterogeneity and initial network condition. This likelihood is complicated; posing both computational and conceptual challenges. We therefore consider a more restrictive conditioning event. The associated conditional likelihood in this case \emph{is} computationally and analytically tractable. With some work, it can be factored into a product of conditionally independent multiplicands. This facilitates likelihood evaluation, as well as formal statistical analysis.

Section \ref{sec: inference}, building on the results of Section \ref{sec: conditional_likelihood}, introduces two approaches to conducting inference on the common structural parameters -- namely the coefficients indexing, respectively, the strength of structural state-dependence, $\alpha$, and agents' taste for transitivity, $\beta$. The first approach is \emph{exact}, as in \cite{Fisher_JRSS1922}. We propose a simple Metropolis-Hasting algorithm for enumeration of the elements of the conditioning set (and hence null distribution simulation). The second approach involves an asymptotic approximation as the number of agents in the \emph{single} observed network sequence grows large. Information accumulation in this setting is delicate. An upshot of our formal analysis is that it provides insight into what types of real-world settings our methods are \emph{a priori} mostly likely to be of value. 

We further explore whether and how information accumulates as the network grows via two examples. In the first example the $(t=0)$ initial network is an Erdős-Renyi random graph. The network evolves in subsequent periods ($t=1,2,3$) according to our dynamic model. We assume that $A_{ij}$ lies in a bounded interval for all dyads. In this setting, we show that the probability of our conditioning event shrinks to zero as $N$ grows large. Consistent estimation is not possible, although informative finite sample inference may be.

In a second example, we use a random geometric graph structure (see, for example,\cite{Penrose_Bk2003}) to impose a strong level of homophily on unobservables. In this example, any pair of agents that are ``far enough" apart never link. We show that information \emph{does} accumulate with $N$ in this setting, supporting consistent estimation.

While our identification argument imposes no restrictions on the distribution of unobserved heterogeneity, a researcher's ability to consistently estimate model parameters, as these examples illustrate, does vary with features of this distribution. This is an old, albeit underappreciated, observation in the context of single-agent panel data analysis (see, for example, \cite{Andersen_JRSS70}). In our more complicated multi-agent network setting, the heterogeneity distribution appears to more overtly limit statistical learning.

Section \ref{sec: Monte_Carlo_experiments} explores the finite sample properties of our methods via a series of Monte Carlo experiments. These experiments closely track the predictions of our asymptotic theory. In Section \ref{sec : empirical_application} we revisit the co-authorship network compiled by \cite{Ductoretal2014}. As a ``proof-of-concept" we use these data, and our methods, to explore the initiation and dissolution of co-authorship among researchers.

In follows we interchangeably use unit, node, vertex, agent and
individual to refer to the $i=1,\ldots,N$ vertices of the sampled network
or graph. We denote random variables by capital Roman letters, specific
realizations by lower case Roman letters, and their support by blackboard bold
Roman letters. That is $Y$, $y$ and $\mathbb{Y}$ respectively denote a generic
random draw of, a specific value of, and the support of, $Y$. In general, bold-facing, denotes a matrix-valued random variable, $\mathbf{D}$, or its sample realization, $\mathbf{d}$. We also use the shorthand $\Pr\left(\left.y\right|x\right)=\Pr\left(\left.Y=y\right|X=x\right)$ throughout. As is standard in the literature on U-Statistics, we use the shorthand $\sum_{i<j}$ for $\sum_{i=1}^{N-1}\sum_{j=i+1}^{N}$. 

\section{A dynamic network formation model} \label{sec: model}

Let $\mathcal{V} =\left\{1,\ldots,N\right\}$ be the set of vertices (agents) on which a graph of undirected edges $\mathcal{E}_t \subseteq \mathcal{V} \times \mathcal{V} \ \setminus \ \Delta(\mathcal{V})$ forms.\footnote{Here $\Delta(\mathcal{V})$ represents the diagonal set of disallowed self-loops}. We assume the graph
is observed in each of periods $t=0,1,\ldots,T.$ The adjacency matrix $\mathbf{D}_t \overset{def}{\equiv} \left[D_{ijt}\right]_{1\leq i,j \leq N}$ records the presence or absence of all possible edges in the period $t$ graph $G_t=\mathcal{G}\left(\mathcal{V},\mathcal{E}_t\right)$. This matrix is symmetric, with zeroes on its diagonal (since self-loops are not allowed). We let $\mathbb{D}_{N}$ denote the set of all $2^{ \binom{N}{2}}$ possible $N \times N$ adjacency matrices.

Our analysis leaves the initial condition of the network, $\mathbf{D}_0$, unmodeled. For periods $t=1,\ldots,T$ we assume that agents add, subtract, or maintain links in order to maximize a per-period payoff function. Let $\nu_{i}\thinspace:\thinspace\mathbb{D}_{N}\rightarrow\mathbb{R}$ denote agent $i$'s payoff function. This function maps adjacency matrices – equivalently networks – into the hedonimeter-measurable utils of Francis Ysidro Edgeworth, as described in his seminal book \emph{Mathematical Psychics},
\begin{equation} \label{eq: utility_per_period}
\nu_{i}\left(\left.\mathbf{d}_t\right|\mathbf{D}_{t-1},\mathbf{\tilde{U}}_t;\mathbf{\tilde{A}};\theta_{0}\right)=\sum_{j}d_{ijt}\left[\frac{\alpha_{0}}{2}D_{ijt-1}+\frac{\beta_{0}}{2}\left(\sum_{k}D_{ikt-1}D_{jkt-1}\right)+\tilde{A}_{ij}-\tilde{U}_{ijt}\right],
\end{equation}
with $\mathbf{\tilde{U}}_t \overset{def}{\equiv} \left[\tilde{U}_{ijt}\right]_{1\leq i \neq j \leq N}$ a matrix of \emph{time-varying} random utility shifters, $\mathbf{\tilde{A}} \overset{def}{\equiv} \left[\tilde{A}_{ij}\right]_{1\leq i \neq j \leq N}$ a matrix of \emph{time-invariant} dyad-specific heterogeneity, and $\theta=\left(\alpha,\beta\right)' \in \Theta$ the target parameter of interest. While agents observe $\mathbf{\tilde{A}}$ and $\mathbf{\tilde{U}}_t$, the econometrician does not. The goal is to learn the value of $\theta$ from the observed sequence of networks $\mathbf{D}_{0}, \mathbf{D}_{1}, \ldots, \mathbf{D}_{T}$.

Specification \eqref{eq: utility_per_period} implies that $i$ attaches greater utility to a link with $j$ in period $t$ if (i) they
were connected in the prior period and (ii) they shared friends in common in the prior period (the term $\sum_{k}D_{ikt-1}D_{jkt-1}$ equals a count of the number of agents $k$ which were connected to both $i$ and $j$ in period $t-1$).\footnote{For simplicity we assume that $\alpha>0$ and $\beta>0$ during the exposition, although this is not a restriction of the model.} 

We assume that the random utility \emph{pairs} $\left\{\tilde{U}_{ijt},\tilde{U}_{jit}\right\}_{1\leq i<j\leq N, t=1,\dots,T}$ are independently and identically distributed across dyads as well as over time. Below we make additional parametric assumptions on this distribution. In contrast we treat $\mathbf{\tilde{A}}$ as a random matrix whose distribution is left nonparametric.\footnote{Much of our analysis follows when $\mathbf{\tilde{A}}$ is instead viewed as a nonstochastic (and very high-dimensional) parameter. }$^,$\footnote{Although we place no formal restrictions on the distribution of $\mathbf{A}$, its realized value will influence the informativeness of our exact inference methods, and its underlying distribution the rate at which information accumulates in our asymptotic analysis. Similar issues arise in single agent binary choice panel data models. See, for example, \cite{Andersen_JRSS70} and \cite{Chamberlain_ReStud80}.} Consequentially, our setup accommodates complex forms of dependence across the elements of $\mathbf{\tilde{A}}$. It may be, for example, that agent $i$ is an extrovert and generically enjoys connecting with others, such that $\tilde{A}_{ij}$ and $\tilde{A}_{ik}$ positively co-vary. Alternatively, if $i,j$ and $k$ all share a latent attribute upon which agents homophilously sort, then $\tilde{A}_{ij}$, $\tilde{A}_{ik}$ and $\tilde{A}_{jk}$ will co-vary. Dyadic ``clustering" of this type arises frequently in network settings \citep{Fafchamp_Gubert_JDE07,Graham_HBE2020}. 

The marginal utility received by agent $i$ when edge $ij$ is added to, or subtracted from, $\mathbf{D}_t$ equals
\[
MU_{ij}\left(\mathbf{D}_t\right)=\left\{ \begin{array}{cc}
\nu_{i}\left(\mathbf{D}_t\right)-\nu_{i}\left(\mathbf{D}_t-ij\right) & \text{if}\thinspace D_{ijt}=1\\
\nu_{i}\left(\mathbf{D}_t+ij\right)-\nu_{i}\left(\mathbf{D}_t\right) & \text{if}\thinspace D_{ijt}=0
\end{array}\right.
\]
where $\mathbf{D}_t - ij$ is the adjacency matrix associated with the network obtained after deleting edge $ij$ and $\mathbf{D}_t + ij$ the one obtained via link addition (and we suppress the dependence of $\nu_i$ on terms other than $\mathbf{D}_t$)

We assume that utility is transferable across agents, such that $i$ and $j$ may freely share any surplus generated by an edge between them.\footnote{\cite{Goldsmith-Pinkham_Imbens_JBSE13} study a dynamic network formation model with non-transferrable utility.} This assumption allows us to invoke the pairwise stability with transfers equilibrium concept of \cite{Bloch_Jackson_IJGT06}.

\begin{defn}
    {\label{def: pairwise_stability} \textsc{(Pairwise stability with Transfers)} The network $G\left(\mathcal{V},\mathcal{E}_t\right)$ is pairwise stable with transfers if \\ 
    (i) $\forall ij \in\mathcal{E}_t\left(G\right),\thinspace MU_{ij}\left(\mathbf{D}_t\right)+MU_{ji}\left(\mathbf{D}_t\right)\geq0$ \\
    (ii) $\forall ij \notin\mathcal{E}_t\left(G\right),\thinspace   MU_{ij}\left(\mathbf{D}_t\right)+MU_{ji}\left(\mathbf{D}_t\right)<0$}    
\end{defn}
Pairwise stable networks are those where any dyad $ij$, on net, weakly benefits from edge $D_{ijt}$ when present and \emph{would not} benefit from adding it to the network when not present (otherwise the dyad could form the edge and split the surplus).

When the utility function is of the form given in \eqref{eq: utility_per_period} the marginal utility agent $i$ gets from a link with $j$ equals
\[
MU_{ij}\left(\mathbf{d}_t\right)=\frac{\alpha_{0}}{2}D_{ijt-1}+\frac{\beta_{0}}{2}\left(\sum_{k}D_{ikt-1}D_{jkt-1}\right)+\tilde{A}_{ij}-\tilde{U}_{ijt}. 
\]
Definition \ref{def: pairwise_stability} then implies that -- conditional on  $\mathbf{\tilde{A}}$, the realized values $\left\{\tilde{U}_{ijt},\tilde{U}_{jit}\right\}_{1\leq i<j\leq N}$, and the magnitude of $\theta_0$ -- the networks must each satisfy, for $i=1,\ldots,N-1$ and $j=i+1,\ldots,N$,
\begin{equation} \label{eq:ij_Link_Model}
    D_{ijt}=\mathbf{1}\left(\alpha_0 D_{ijt-1}+\beta_{0}\left(\sum_{k}D_{ikt-1}D_{jkt-1}\right) + A_{ij} \geq U_{ijt}\right),\ t=1,\ldots,T,
\end{equation}
where $A_{ij}\overset{def}{\equiv}\tilde{A}_{ij}+\tilde{A}_{ji}$ and $U_{ijt}\overset{def}{\equiv}\tilde{U}_{ijt}+\tilde{U}_{jit}$. 

Observe that since $\left(\tilde{U}_{ijt},\tilde{U}_{jit}\right)$ is distributed independently and identically across dyads and over time, so is $U_{ijt}$. Let $F_{U}\left(u\right)$ denote the distribution function of $U_{ijt}$, we have 
\begin{equation}
F\left(U_{121},\ldots,U_{12T},\ldots,U_{N-1N1},\ldots,U_{N-1NT}\right)=\prod_{i<j}\prod_{t=1}^{T}F_{U}\left(U_{ijt}\right).\label{eq:U_ijt_Independence}
\end{equation}
Throughout we assume that $F_{U}\left(u\right)$ is strictly increasing on $\mathrm{\mathbb{R}}^{1}$. 

Finally, since \eqref{eq:ij_Link_Model} only applies to
periods $t=1,\ldots,T$ we leave the joint distribution of $\left(\mathbf{D}_{0},\mbox{\textbf{A}}\right)$,
the initial condition of the heterogeneity, unrestricted:
\begin{equation}
\left(\mathbf{D}_{0},\mbox{\textbf{A}}\right)\sim\Pi_{0}.\label{eq:Initial_Condition}
\end{equation}
The mass-density function evaluated at $\mathbf{D}_{0}=\mathbf{d}_{0}$,
$\mathbf{A}=\mathbf{a}$ is denoted by $\pi_{0}\left(\mathbf{d}_{0},\mathbf{a}\right)$.

\subsection*{Some observations}
Before proceeding, it is helpful to make a few observations about the network formation process described by equations \eqref{eq:ij_Link_Model},  \eqref{eq:U_ijt_Independence} and \eqref{eq:Initial_Condition}. First, the nuisance parameter space is very ``large"; $\mathbf{A}$ consists of $\binom{N}{2}$ real-valued unknown heterogeneity terms, while $\Pi_0$ assigns probability mass to all possible $2^{\binom{N}{2}}$ initial network configurations. In contrast, the target parameter, $\theta$, consists of just two elements.

Second, the network evolves according to a heterogenous Markov process.\footnote{While this process is heterogeneous due to $\mathbf{A}$, it is time homogenous.}
\[
    \Pr\left(\left.\mathbf{D}_{t}=\mathbf{d}_{t}\right|\mathbf{D}_{t-1}=\mathbf{d}_{t-1},\ldots,\mathbf{D}_{0}=\mathbf{d}_{0},\mathbf{A}\right)=\Pr\left(\left.\mathbf{D}_{t}=\mathbf{d}_{t}\right|\mathbf{D}_{t-1}=\mathbf{d}_{t-1},\mathbf{A}\right)
\]
The process is Markov conditional on $\mathbf{A}$, but will generally not be unconditionally. Third, and relatedly, the network follows a type of nonlinear vector-autoregressive process; albeit one with substantial structure on how $\mathbf{D}_{t-1}$ influences $\mathbf{D}_{t}$.

Fourth, the model parsimoniously includes three forces
long-hypothesized by researchers as important drivers of link formation (cf.,
\citealp{Snijders_AR11}). First, there is \emph{state
dependence}: agents $i$ and $j$ are more likely to form a link in period $t$ if they were connected in period $t-1$ (parameterized by $\alpha$). Second, agents link more frequently when doing so generates ``triadic closure": links which ``close" two-stars left open in the prior period are more likely (parameterized by $\beta$). Third, links may form because of unobserved good `fundamentals' (i.e., $A_{ij}$ is high). 

One source of `good fundamentals' is that the pair $ij$ might be
similar in some salient unobserved dimension. The tendency for individuals
to assortatively match on various \emph{observed} characteristics
is well documented \citep{McPherson_et_al_ARS01}. Here $A_{ij}$
might reflect utility from assortative matching on \emph{unobserved}
attributes. To make this idea concrete let $\xi_{i}$ be a vector
of latent agent-specific characteristics and $g\left(\xi_{i},\xi_{j}\right)$
a measure of the distance between $i$ and $j$ (i.e., $g\left(\cdot,\cdot\right)$
is a distance function). If $A_{ij}=-g\left(\xi_{i},\xi_{j}\right)$,
then rule (\ref{eq:ij_Link_Model}) implies that a link between $i$
and $j$ is more likely if they are similar in terms of $\xi$. Note
$A_{ij}$ could reflect more than just homophily. For example, setting
\begin{equation}
A_{ij}=\lambda_{i}+\lambda_{j}-g\left(\xi_{i},\xi_{j}\right),\label{eq:KHRH}
\end{equation}
allows for the possibility that certain individuals may uniformly generate higher
friendship surplus. Put differently $\lambda_{i}$, and hence $A_{ij}$,
might be high because individual $i$ is a `good friend'. Effects
of this type give rise to \emph{degree heterogeneity} or variation
in the number of links maintained by different individuals \citep[e.g.,][]{Krivitsky_et_al_SN09, Graham_EM17}. Degree heterogeneity is also a feature of many real-world networks.

Both a \emph{structural} taste for transitivity in relationships,
here parameterized by $\beta_0$, and homophily on unobserved attributes, captured by $\mathbf{A}$, can generate high levels of clustering in networks. Heuristically, this is why identification and estimation are challenging in our setting.

Although we utilize an equilibrium notion drawn from game theory (Definition \ref{def: pairwise_stability} above), agents in our model -- while maximizing utility -- are not playing a game (at least a non-trivial one). In each period, agents condition their actions on the structure of the network in the \emph{prior} period and add, subtract, or maintain links myopically. Extending the work reported here to accommodate forward-looking agents, and the game-theoretic difficulties that doing so would raise, merits exploration.

As an alternative to specifying a dynamic game, one might ask what can be learned from a single network observation. \cite{Pelican_Graham_ReStud2026} and \cite{Graham_Pelican_BookCh2020} study static, complete information, games of network formation in the presence of agent-level heterogeneity. While they report positive results, doing so requires much stronger assumptions on the structure of unobserved heterogeneity, $\mathbf{A}$, than we maintain here. Without restrictions on $\mathbf{A}$, \emph{any} network configuration can be generated by an appropriate draw of $\mathbf{A}$. Our results circumvent this problem by exploiting the availability of multiple network observations over time and the assumed \emph{time-invariance} of $\mathbf{A}$.

\subsection*{Likelihood}

Equations \eqref{eq:ij_Link_Model}, \eqref{eq:U_ijt_Independence} and \eqref{eq:Initial_Condition} specify a dynamic model of network
formation. To ease the notation, let $R_{ijt}\overset{def}{\equiv}\sum_{k}D_{ikt}D_{jkt}$. The joint probability mass-density of a specific sequence of
network configurations $\mathbf{D}^{T}\overset{def}{\equiv}\left(\mathbf{D}_{0},\mathbf{D}_{1},\ldots,\mathbf{D}_{T}\right)$ and realization of $\mathbf{A}$ is
\begin{eqnarray}
    p\left(\mathbf{d}^{T},\mathbf{a};\theta,\pi\right) & = & \pi\left(\mathbf{d}_{0},\mathbf{a}\right)\nonumber \\
    &  & \times\prod_{i<j}\prod_{t=1}^{T}F\left(\alpha d_{ijt-1}+\beta r_{ijt-1}+a_{ij}\right)^{d_{ijt}}\nonumber \\
    &  & \times\left[1-F\left(\alpha d_{ijt-1}+\beta r_{ijt-1}+a_{ij}\right)\right]^{1-d_{ijt}}.\label{eq: likelihood}
\end{eqnarray}
One approach to estimating $\theta$ would involve (i) assuming $F_U$ is known (e.g., $F_U\left(u\right)=\Phi\left(u\right)$ with $\Phi\left(u\right)$ the CDF of a standard normal random variable) and (ii) additionally
parameterizing $\pi\left(\mathbf{d}_0,\mathbf{a}\right)$. This is the approach taken by \cite{Goldsmith-Pinkham_Imbens_JBSE13} in their own \emph{Journal of Business and Economic Statistics} lecture over 10 years ago (a lecture that directly inspired the beginning of the research reported in this paper). Let $\pi_{\left.\mathbf{D}_{0}\right|\mathbf{A}}\left(\left.\mathbf{d}_{0}\right|\mathbf{a};\eta_1\right)$ denote a parametric family, indexed by $\eta_1$, for $\Pr\left(\left.\mathbf{D}_{0}=\mathbf{d}_{0}\right|\mathbf{A}=\mathbf{a}\right)$, the distribution of the initial network given $\mathbf{A}=\mathbf{a}$. Let $\pi_{\mathbf{A}}\left(\mathbf{a}; \eta_2\right)$ denote a parametric family, indexed by $\eta_2$, for the marginal
density function for $\mathbf{A}$. The integrated likelihood for
the observed data is then
\begin{eqnarray}
p^{\mathrm{I}}\left(\mathbf{d}^{T};\theta,\eta\right) & =\displaystyle\int \ldots \int & \prod_{t=1}^{T}\prod_{i<j}\left\{ F_{U}\left(\alpha d_{ijt-1}+\beta r_{ijt-1}+a_{ij}\right)^{d_{ijt}}\right.\nonumber \\
 & & \left.\times\left[1-F_{U}\left(\alpha d_{ijt-1}+\beta r_{ijt-1}+a_{ij}\right)\right]^{1-d_{ijt}}\right\} \nonumber \\
 & & \times\pi_{\left.\mathbf{D}_{0}\right|\mathbf{A}}\left(\left.\mathbf{d}_{0}\right|\mathbf{a};\eta_1\right)\pi_{\mathbf{A}}\left(\mathbf{a};\eta_2\right)\mathrm{d}a_{21},\ldots,\mathrm{d}a_{NN-1}\label{eq:obs_likelihood}
\end{eqnarray}
with $\eta=\left(\eta_1',\eta_2'\right)'$. Although \eqref{eq:obs_likelihood} represents a parametric family for the observed data, using it as
a basis for estimation and inference raises at least three conceptual and computational issues.

First, as is familiar from prior work on dynamic discrete choice analysis,
rule \eqref{eq:ij_Link_Model} provides no guidance on how to specify
$\pi_{\left.\mathbf{D}_{0}\right|\mathbf{A}}\left(\left.\mathbf{d}_{0}\right|\mathbf{a},\eta_1\right)$
\citep{Heckman_SLM81,Honore_Tamer_EM06}. In analogy to its panel counterpart, we call this the
\emph{initial network problem}. Rule \eqref{eq:ij_Link_Model} does
suggest that the probability of the event $\mathbf{D}_{0}=\mathbf{d}_{0}$
should vary with the realized value of $\mathbf{A}$, but little else.
One approach, again inspired by single agent models, would involve
assuming that $\pi_{\left.\mathbf{D}_{0}\right|\mathbf{A}}\left(\left.\mathbf{d}_{0}\right|\mathbf{a},\eta_1\right)$
coincides with the steady state distribution implied by (\ref{eq:ij_Link_Model})
(e.g., \citealp*{Heckman_SADD81a,Card_Sullivan_EM88}). Even if this
is empirically plausible, operationalizing it would be non-trivial. 

A second problem is that, even if $\pi_{\left.\mathbf{D}_{0}\right|\mathbf{A}}\left(\left.\mathbf{d}_{0}\right|\mathbf{a},\eta_1\right)$
were correctly specified, evaluating (\ref{eq:obs_likelihood}) requires
computing a very high-dimensional integral. Since it seems reasonable,
at the very minimum, to choose a specification for $\pi_{\mathbf{A}}\left(\mathbf{a};\eta_2\right)$
which allows $A_{ij}$ and $A_{ik}$ to covary, there is no obvious
way to factor (\ref{eq:obs_likelihood}) to reduce the dimensionality
of the required integration. \citet*{Goldsmith-Pinkham_Imbens_JBSE13}
assume that $A_{ij}=\alpha_{\xi}\left|\xi_{i}-\xi_{j}\right|$ with
$\xi_{i}$ binary, independent of $\xi_{j}$, and $\Pr\left(\xi_{i}=1\right)=p$.
Directly evaluating the integrated likelihood in this case involves
a weighted sum of the likelihood given $\mathbf{A}$ over its $2^{N}$
possible realizations.\footnote{\citet{vanDuijn_et_al_SN04} introduce MCMC methods that also might
be adapted in order to evaluate \eqref{eq:obs_likelihood} in special
cases.}

A third problem, related to the second, and emphasized by \citet*{Goldsmith-Pinkham_Imbens_JBSE13}, is that even if the maximum likelihood estimate could be computed,
it would not be clear how to conduct large sample inference using
$\left(\hat{\alpha}_{ML},\hat{\beta}_{ML}\right)$; at least with
data drawn from a single network. This motivates their recourse to
Bayesian methods, which are attractive for computational reasons and
also for providing a principled approach to inference. The main approach
to inference developed below, in contrast, is frequentist (albeit one that involves a delicate asymptotic argument).

\section{Conditional likelihood analysis} \label{sec: conditional_likelihood}
While integrated likelihood analysis merits further exploration, our focus will instead lie with methods which remain valid irrespective of the precise form of $\pi$ and/or realized configuration of $\mathbf{A}$: so called \emph{fixed effects} methods.

Hereon we maintain a logistic assumption on the $\left\{U
_{ijt}\right\}_{i<j,t=1,\ldots,T}$.\footnote{That is, the CDF of $U_{ijt}$ equals $\Pr\left(U_{ijt} \leq u\right)=F_U(u)=\frac{\exp\left(u\right)}{1+\exp\left(u\right)}$.} For simplicity we also focus on the $T=3$ case.\footnote{By aggregating across (possibly overlapping) sub-panels of length four, we can accommodate longer network sequences. However, fully exploiting the identifying content of longer panels would require substantively extending the results which follow.} Under the logistic assumption, reminiscent of \cite{Chamberlain_LALMD85}, we can write the likelihood of the event $\mathbf{D}^{3}=\mathbf{d}^{3}$ as
\begin{multline} \label{eq: likelihood_chamberlain}
    \Pr\left(\mathbf{D}^{3} =\mathbf{d}^{3};\theta,\mathbf{A},\pi\right) =  \pi\left(\mathbf{d}_{0};\mathbf{A}\right) \\    \times\prod_{i<j}\prod_{t=1}^{3}\left[\frac{\exp\left(\alpha d_{ijt-1}+\beta r_{ijt-1}+A_{ij}\right)}{1+\exp\left(\alpha d_{ijt-1}+\beta r_{ijt-1}+A_{ij}\right)}\right]^{d_{ijt}}\\ 
     \times\left[\frac{1}{1+\exp\left(\alpha d_{ijt-1}+\beta r_{ijt-1}+A_{ij}\right)}\right]^{1-d_{ijt}} \\ 
     = \frac{\pi\left(\mathbf{d}_{0};\mathbf{A}\right)\times\prod_{i<j}\exp\left(\alpha\sum_{t=1}^{3}d_{ijt-1}d_{ijt}+\beta\sum_{t=1}^{3}r_{ijt-1}d_{ijt}+A_{ij}\sum_{t=1}^{3}d_{ijt}\right)}{\prod_{i<j}\prod_{\delta=0,1}\prod_{\rho=0}^{N-2}\left[1+\exp\left(\alpha\delta+\beta\rho+A_{ij}\right)\right]^{m_{ij}\left(\delta,\rho; \mathbf{d}^3\right)}},
\end{multline}
with
\[
m_{ij}\left(\delta,\rho;\mathbf{d}^{3}\right)=\left|\left\{ t=1,2,3\ :\ d_{ij,t-1}=\delta,r_{ij,t-1}=\rho\right\} \right|
\]
equal to the within-$ij$-dyad count of the number of link decisions $d_{ijt}$
preceded by different combinations of feasible values for $d_{ijt-1}\in\left\{ 0,1\right\} $
and $r_{ijt-1}\in\left\{ 0,1,\ldots,N-2\right\} $. For example, $m_{ij}\left(0,3\right)$ counts the number of times dyad $ij$ considers forming a link when, in the period prior, they were (i) unlinked and (ii) shared $3$ friends in common.

Recall that $\mathbb{D}_{N}$ denotes the set of $N\times N$ undirected binary adjacency
matrices. Let $\mathbb{D}_{N}^{3}\overset{def}{\equiv}\left(\mathbb{D}_{N},\mathbb{D}_{N},\mathbb{D}_{N},\mathbb{D}_{N}\right)$ denote the set of all possible length $T=3$ sequences of such networks, and consider the restricted set of network sequences:
\begin{align}
    \mathbb{B}^{*}\left(\mathbf{d}^{3}\right) = & \left\{ \mathbf{b}^{3}=\left(\mathbf{b}_{0},\mathbf{b}_{1},\mathbf{b}_{2},\mathbf{b}_{3}\right) \ : \ \mathbf{b}_{0}=\mathbf{d}_{0}, \right. \nonumber \\
    & \forall\ i=1,\ldots,N-1,\ j=i+1,\ldots,N,\nonumber \\ 
    & \&\ \forall\ \delta=0,1\ \&\ \rho=0,1,\ldots,N-2,\nonumber \\ 
    & \sum_{t=1}^{3}b_{ijt}=\sum_{t=1}^{3}d_{ijt},\nonumber \\
    & \&\ m_{ij}\left(\delta,\rho;\mathbf{b}^{3}\right)=m_{ij}\left(\delta,\rho;\mathbf{d}^{3}\right),\nonumber \\ 
    & \left. \&\ b_{ijt}=b_{jit}\in\left\{ 0,1\right\} \mbox{ for }t=0,1,2,3 \right\}.\label{eq: reference_set_v1}
\end{align}
The denominator of  $\Pr\left(\mathbf{D}^{3}=\mathbf{b}^{3};\theta,\mathbf{A},\pi\right)$ coincides with that appearing in \eqref{eq: likelihood_chamberlain} for all $\mathbf{b}^{3} \in \mathbb{B}^{*}\left(\mathbf{d}^{3}\right)$. The initial condition, $\pi$, as well as the contributions of the $A_{ij}$ appearing in the numerator of \eqref{eq: likelihood_chamberlain}, additionally coincide. These observations deliver the conditional probability:
\begin{align}
    \Pr\left(\left.\mathbf{d}^{3}\right|\mathbf{d}^{3}\in\mathbb{B}^{*};\theta,\mathbf{A},\pi\right) & =\frac{\prod_{i<j}\exp\left(\alpha\sum_{t=1}^{3}d_{ijt-1}d_{ijt}+\beta\sum_{t=1}^{3}r_{ijt-1}d_{ijt}\right)}{\sum_{\mathbf{b}\in\mathbb{B}^{*}\left(\mathbf{d}^{3}\right)}\prod_{i<j}\exp\left(\alpha\sum_{t=1}^{3}b_{ijt-1}b_{ijt}+\beta\sum_{t=1}^{3}s_{ijt-1}b_{ijt}\right)},\label{eq: conditional_likelihood_v1}
\end{align}
where we let $s_{ijt}\overset{def}{\equiv}\sum_{k}b_{ikt}b_{jkt}$. The right-hand-side of \eqref{eq: conditional_likelihood_v1} does not vary with $\mathbf{A}$ or $\pi\left(\mathbf{d}_{0};\mathbf{A}\right)$; we have found an implication of the model, indexed by $\theta$, that is free of the nuisance parameters.

The set $\mathbb{B}^{*}\left(\mathbf{d}^{3}\right)$ appears rather complicated. Indeed, a superficial glance does not indicate whether its cardinality is large, and hence using \eqref{eq: conditional_likelihood_v1} for inference on $\alpha$ and $\beta$ likely informative, or whether ``conditioning away" the high-dimensional nuisance parameters $\mathbf{A}$ and $\pi$ leaves effectively no identifying variation to work with. Furthermore, \eqref{eq: conditional_likelihood_v2} suggest no obvious independence structure indicating how, or even if, information might accumulate as the number of agents, $N$, in the network grows large.

Our first result, Proposition \ref{prop: conditioning_set} immediately below, provides an alternative characterization of $\mathbb{B}^*\left(\mathbf{d}^{3}\right)$; a characterization that facilitates the derivation of several results presented below. Defining (see below for commentary)
\begin{align}
\mathbb{B}\left(\mathbf{d}^{3}\right) = & \left\{ \mathbf{b}^{3}=\left(\mathbf{b}_{0},\mathbf{b}_{1},\mathbf{b}_{2},\mathbf{b}_{3}\right)\ :\ \mathbf{b}_{0}=\mathbf{d}_{0},\ \mathbf{b}_{3}=\mathbf{d}_{3},\right.\nonumber \\
 & \mathbf{b}_{1}+\mathbf{b}_{2}=\mathbf{d}_{1}+\mathbf{d}_{2},\ \forall\ i=1,\ldots,N-1,j=i+1,\nonumber \\
 & \left(s_{ij1},s_{ij2}\right)=\left(r_{ij1},r_{ij2}\right)\ \text{if}\ \left(b_{ij1},b_{ij2}\right)=\left(d_{ij1},d_{ij2}\right)\nonumber \\
 & \&\ \left(s_{ij1},s_{ij2}\right)=\left(r_{ij2},r_{ij1}\right)\ \text{if}\ \left(b_{ij1},b_{ij2}\right)=\left(d_{ij2},d_{ij1}\right)\nonumber, \\
 & \left.\&\ b_{ijt}=b_{jit}\in\left\{ 0,1\right\} \mbox{ for } t=0,1,2,3\right\} \label{eq: reference_set_v2},
\end{align}
allows us to state:
\begin{prop}
\label{prop: conditioning_set}\textsc{(Sufficient Conditioning Set)}
\\
(i) $\mathbf{d}_{0}$, $\sum_{t=1}^{3}d_{ijt}$, and $m_{ij}\left(\delta,\rho;\mathbf{d}^{3}\right)$
for $\forall\ i=1,\ldots,N-1,j=i+1,\ldots,N$ are sufficient statistics for $\mathbf{A}$ and $\pi$;\\
 (ii) $\mathbb{B}^{*}\left(\mathbf{d}^{3}\right)=\mathbb{B}\left(\mathbf{d}^{3}\right)$.
\end{prop}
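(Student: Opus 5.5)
The proof has two parts. Part (i) follows from factorizing the likelihood. Part (ii) is a counting argument on the multiset of lagged states $(d_{ij,t-1},r_{ij,t-1})$, $t=1,2,3$, that each dyad faces.

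\emph{Part (i).} I would read sufficiency off the second expression in \eqref{eq: likelihood_chamberlain} via the Neyman--Fisher factorization criterion. The likelihood splits into three pieces:
\begin{itemize}
\item $\pi(\mathbf{d}_0;\mathbf{A})$, which involves the data only through $\mathbf{d}_0$;
\item $\prod_{i<j}\exp\bigl(A_{ij}\sum_{t}d_{ijt}\bigr)$, which involves the data only through the dyad sums $\sum_t d_{ijt}$;
\item the denominator $\prod_{i<j}\prod_{\delta,\rho}[1+\exp(\alpha\delta+\beta\rho+A_{ij})]^{m_{ij}(\delta,\rho;\mathbf{d}^3)}$, which involves the data only through the counts $m_{ij}$.
\end{itemize}
The remaining factor, $\exp\bigl(\alpha\sum_t d_{ijt-1}d_{ijt}+\beta\sum_t r_{ijt-1}d_{ijt}\bigr)$, is free of $(\mathbf{A},\pi)$. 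So the likelihood is $h(\mathbf{d}^3;\theta)\,g(\text{statistics};\theta,\mathbf{A},\pi)$, and the listed statistics are sufficient for $(\mathbf{A},\pi)$ at each fixed $\theta$. This is also exactly what justifies \eqref{eq: conditional_likelihood_v1}.

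\emph{Part (ii).} For a dyad $ij$, let $M_{ij}(\mathbf{d}^3)$ be the multiset $\{(d_{ij0},r_{ij0}),(d_{ij1},r_{ij1}),(d_{ij2},r_{ij2})\}$. Equality of all the counts $m_{ij}(\delta,\rho;\cdot)$ is equivalent to $M_{ij}(\mathbf{b}^3)=M_{ij}(\mathbf{d}^3)$. I read the two displayed conditions in \eqref{eq: reference_set_v2} disjunctively: at least one of the two orderings holds. This matters when $d_{ij1}=d_{ij2}$, where both hypotheses are triggered. The paired condition then says exactly that $\{(b_{ij1},s_{ij1}),(b_{ij2},s_{ij2})\}=\{(d_{ij1},r_{ij1}),(d_{ij2},r_{ij2})\}$ as multisets.

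\emph{Inclusion $\mathbb{B}\subseteq\mathbb{B}^*$.}
\begin{enumerate}
\item Since $\mathbf{b}_0=\mathbf{d}_0$, we have $s_{ij0}=r_{ij0}$, so the first element of $M_{ij}$ agrees.
\item By the paired condition, the other two elements agree as a multiset. Hence $M_{ij}(\mathbf{b}^3)=M_{ij}(\mathbf{d}^3)$ and all $m_{ij}$ coincide.
\item The dyad sums agree because $\mathbf{b}_1+\mathbf{b}_2=\mathbf{d}_1+\mathbf{d}_2$ and $\mathbf{b}_3=\mathbf{d}_3$.
\end{enumerate}

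\emph{Inclusion $\mathbb{B}^*\subseteq\mathbb{B}$.}
\begin{enumerate}
\item Take $\mathbf{b}^3\in\mathbb{B}^*$. Summing $m_{ij}(\delta,\rho)$ over $\rho$ with $\delta=1$ gives $b_{ij0}+b_{ij1}+b_{ij2}=d_{ij0}+d_{ij1}+d_{ij2}$.
\item Since $b_{ij0}=d_{ij0}$, this yields $\mathbf{b}_1+\mathbf{b}_2=\mathbf{d}_1+\mathbf{d}_2$ entrywise.
\item Subtracting this from the constraint $\sum_{t=1}^3 b_{ijt}=\sum_{t=1}^3 d_{ijt}$ gives $\mathbf{b}_3=\mathbf{d}_3$.
\item Since $\mathbf{b}_0=\mathbf{d}_0$, the element $(d_{ij0},r_{ij0})$ is common to both multisets $M_{ij}$. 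Removing it leaves equality of the two-element multisets for $t=1,2$, which is the paired condition in $\mathbb{B}$.
\end{enumerate}
Symmetry and binarity hold trivially in both directions.

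The main obstacle is the tie case in the inclusion $\mathbb{B}^*\subseteq\mathbb{B}$. This occurs when $d_{ij1}=d_{ij2}$ but $r_{ij1}\neq r_{ij2}$, or when $(d_{ij0},r_{ij0})$ coincides with one of the later pairs. The multiset cancellation argument handles both situations cleanly, but only under the disjunctive reading of \eqref{eq: reference_set_v2}. Under a literal conjunctive reading, the characterization would fail in the first tie case. So I would state the "either ordering" interpretation explicitly before the argument.
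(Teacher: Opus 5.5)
Your proposal is correct and follows essentially the same route as the paper: part (i) comes from factoring the logit likelihood \eqref{eq: likelihood_chamberlain}, and part (ii) from matching lagged-state counts using $\mathbf{b}_0=\mathbf{d}_0$, with $m_{ij}(1,\rho;\cdot)$ summed over $\rho$ to recover $\mathbf{b}_1+\mathbf{b}_2=\mathbf{d}_1+\mathbf{d}_2$ and then $\mathbf{b}_3=\mathbf{d}_3$. Your multiset-cancellation step goes slightly beyond the paper, whose converse stops at $\mathbf{b}_3=\mathbf{d}_3$ and never checks the paired $(s_{ij1},s_{ij2})$ condition. Your disjunctive reading of \eqref{eq: reference_set_v2} is also the one the paper needs: under the conjunctive reading, $\mathbf{d}^3$ itself would fall outside $\mathbb{B}(\mathbf{d}^3)$ whenever a stayer has $r_{ij1}\neq r_{ij2}$.
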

\proof{See Appendix \ref{app: main-results}.}

Proposition \ref{prop: conditioning_set} gives a set of sufficient statistics for $\mathbf{A}$ and $\pi$. It also provides a convenient alternative characterization of the set of all network sequences where these sufficient statistics coincide with those found in the observed network sequence, $\mathbf{d}^3$.
 
Using Proposition \ref{prop: conditioning_set} -- see also \cite{dano2025binary} -- we can rewrite the conditional likelihood \eqref{eq: conditional_likelihood_v1} as:
\begin{multline}
\Pr\left(\left.\mathbf{d}^{3}\right|\mathbf{d}^{3}\in\mathbb{B}\left(\mathbf{d}^{3}\right);\theta,\mathbf{A},\pi\right) =\Pr\left(\left.\mathbf{d}^{3}\right|\mathbf{d}^{3}\in\mathbb{B}\left(\mathbf{d}^{3}\right);\theta\right)  \\    =\frac{\exp\left(\alpha\sum_{i<j}\sum_{t=1}^{3}d_{ijt-1}d_{ijt}+\beta\sum_{i<j}\sum_{t=1}^{3}r_{ijt-1}d_{ijt}\right)}{\sum_{\mathbf{b}\in\mathbb{B}\left(\mathbf{d}^{3}\right)}\exp\left(\alpha\sum_{i<j}\sum_{t=1}^{3}b_{ijt-1}b_{ijt}+\beta\sum_{i<j}\sum_{t=1}^{3}s_{ijt-1}b_{ijt}\right)}\label{eq: conditional_likelihood_v2}.
\end{multline}
The likelihood \eqref{eq: conditional_likelihood_v2}, by conditioning on sufficient statistics for $\mathbf{A}$ and $\pi$, generates a probability law for $\left.\mathbf{D}^{3} \ \right| \ \mathbf{D}^{3}\in\mathbb{B}\left(\mathbf{d}^{3}\right)$ indexed only by the finite-dimensional parameter $\theta$. For a given population value of $\theta=\theta_0$, equation \eqref{eq: conditional_likelihood_v2} assigns different \emph{ex ante} probabilities to network sequences in $\mathbb{B}\left(\mathbf{d}^{3}\right)$. The (conditional) maximum likelihood principle suggests choosing  $\hat\theta$ to make this \emph{ex ante} probability maximal for the realized network sequence, $\mathbf{d}^3$.

Although free of nuisance parameters, using \eqref{eq: conditional_likelihood_v2} as a basis for estimation of, and inference on, $\theta$ is not straightforward. Evaluating the denominator of \eqref{eq: conditional_likelihood_v2} requires enumerating the elements of $\mathbb{B}\left(\mathbf{d}^{3}\right)$, a set with non-trivial geometry and potentially intractably large cardinality. 

Although exact evaluation of \eqref{eq: conditional_likelihood_v2} appears challenging, it may be possible to construct a simulation-based estimate of it. Doing so requires a method of sampling from $\mathbb{B}\left(\mathbf{d}^{3}\right)$ uniformly at random. The observed network sequence $\mathbf{D}^3 = \mathbf{d}^3$ is an element of $\mathbb{B}\left(\mathbf{d}^{3}\right)$ by construction. To find a new element in $\mathbb{B}\left(\mathbf{d}^{3}\right)$ we can randomly rewire $\mathbf{d}^3$.

We call dyads where $\left(d_{ij1},d_{ij2}\right)=\left(0,1\right)$ or $\left(d_{ij1},d_{ij2}\right)=\left(1,0\right)$ \emph{movers}, since they change their link status between periods $1$ and $2$. We say mover $ij$'s edges are ``swapped" in $\mathbf{b}^3$ relative to $\mathbf{d}^3$ if $\left(b_{ij1},b_{ij2}\right)=\left(d_{ij2},d_{ij1}\right)$. Since the initial and terminal conditions of $\mathbf{b}^3$ are constant across all $\mathbf{b}^3 \in \mathbb{B}\left(\mathbf{d}^{3}\right)$ and, furthermore, $b_{ij1}+b_{ij2}=d_{ij1}+d_{ij2}$ for all dyads $ij$, a move from $\mathbf{d}^3$ to \emph{any} $\mathbf{b}^3 \in \mathbb{B}\left(\mathbf{d}^{3}\right)$ can be engineered by swapping edges for a subset of \emph{mover} dyads.

Implementing swaps of this type, however, may induce violations of the constraint that $\left(s_{ij1},s_{ij2}\right)=\left(r_{ij1},r_{ij2}\right)$ if $\left(b_{ij1},b_{ij2}\right)=\left(d_{ij1},d_{ij2}\right)$ and
$\left(s_{ij1},s_{ij2}\right)=\left(r_{ij2},r_{ij1}\right)$ if $\left(b_{ij1},b_{ij2}\right)=\left(d_{ij2},d_{ij1}\right)$. To understand why, observe that $d_{ijt}$ is a component of $r_{ikt}$ for $k \neq i,j$ and similarly for $r_{jkt}$ (for $k \neq i,j$). Hence, if $b_{ij1}$ and $b_{ij2}$ are a swap of the observed $d_{ij1}$ and $d_{ij2}$, then up to $2 \times2(N-2)$ of the ``friends-in-common" regressor values in $\mathbf{b}^3$ may differ from their corresponding values in $\mathbf{d}^3$. The set $\mathbb{B}\left(\mathbf{d}^{3}\right)$ allows for such differences, but only of a very particular form. Specifically, it requires that if $b_{ij1}$ and $b_{ij2}$ are a swap of the observed $d_{ij1}$ and $d_{ij2}$, then it must be the case that $s_{ij1}$ and $s_{ij2}$ are also a swap of the observed $r_{ij1}$ and $r_{ij2}$.

Consequently, successfully moving from one element in $\mathbb{B}\left(\mathbf{d}^{3}\right)$ to another may require the simultaneous swapping of \emph{many} movers' edges to ensure that any constraint violations induced by one group of edge swaps is appropriately offset by another.

A Markov Chain Monte Carlo (MCMC) algorithm for constrained graph simulation introduced by \cite{Pelican_Graham_ReStud2026} provides a template for how one might approach this problem. We do not pursue this approach. Instead, we will characterize a subset of $\mathbb{B}\left(\mathbf{d}^{3}\right)$, that we \emph{can} explicitly enumerate and develop corresponding conditional estimators. Our approach, since it conditions on only a subset of those network sequences with common sufficient statistics for $\mathbf{A}$ and $\pi\left(\mathbf{d}_{0};\mathbf{A}\right)$, involves some loss of information. Its virtues are analytical and computational tractability.

\subsection*{Star systems}

In this subsection we define a \emph{subset} of $\mathbb{B}\left(\mathbf{d}^{3}\right)$ whose elements can be enumerated by simultaneously swapping all period $1$ and period $2$ edges among a set of dyads which we will call \emph{identifying star systems}.  An identifying star system is a set of dyads such that, if we swap all its period $1$ edges with their period $2$ counterparts, then we move to a new network sequence that is \emph{guaranteed} to remain in $\mathbb{B}\left(\mathbf{d}^{3}\right)$. Identifying star systems provide a simple way to move through a (subset) of the elements of $\mathbb{B}\left(\mathbf{d}^{3}\right)$. Star systems vary in the number of dyads constituting them. We show how to utilize all those systems composed of $p=2 \ldots K$ agents.

Let $\mathbb{B}_{K}\left(\mathbf{d}^{3}\right) \subset \mathbb{B}\left(\mathbf{d}^{3}\right)$ denote the set of all network sequences reachable by swapping the edges of identifying star systems of size $K$ or smaller (defined formally below). If the observed network sequence, $\mathbf{d}^3$, contains $\mathbf{m}_{2:K,N}$ identifying star systems of size $K$ or smaller, then $\left|\mathbb{B}_{K}\left(\mathbf{d}^{3}\right)\right|=2^{\mathbf{m}_{2:K,N}}$, corresponding to all the possible ways in which the $\mathbf{m}_{2:K,N}$ identifying K-star systems may have their edges swapped in $\mathbf{b}^3 \in \mathbb{B}_{K}\left(\mathbf{d}^{3}\right)$ relative to their configuration in $\mathbf{d}^3$ or not. We say a star system's edges are ``swapped" in $\mathbf{b}^3$ if they are permuted relative to their configuration in the observed network sequence $\mathbf{d}^3$.

In what follows we explore estimators which condition on the event $\mathbf{D}^3 \in \mathbb{B}_{K}\left(\mathbf{d}^{3}\right)$. Considering the conditional probability of observing the network sequence in hand, given that it belongs to a set of network sequences that can be enumerated by swapping the edges of identifying star systems, provides an analytically tractable conditioning event and an associated criterion function amenable to asymptotic analysis.

To describe our estimator and characterize its sampling properties we need to first carefully describe $\mathbb{B}_{K}\left(\mathbf{d}^{3}\right)$.

\begin{defn}
\label{def: k-star-system}(\textsc{p-Star System}) The dyad set $\mathcal{S}_{p}=\left\{ i_{1}i_{2},i_{1}i_{3},\ldots,,i_{1}i_{p}\right\} $
corresponds to a $p$-\emph{star system} with central vertex $i_{1}$. Let $v\left(\mathcal{S}_{p}\right)=\left\{ i_{1},\ldots,i_{p}\right\}$ denote the agents belonging to $\mathcal{S}_{p}.$ 
\end{defn}

\begin{defn}
\label{def: neighborhood}(\textsc{Neighborhood}) The period $t$ \emph{neighborhood} of $\mathcal{S}_{p}$ includes all agents not in $v\left(\mathcal{S}_{p}\right)$ but connected to an agent in this set during period $t$:
\begin{equation}
n_{t}\left(\mathcal{S}_{p}\right)=\left\{j\ :\ d_{jit}=1\ \text{for a least one }  i \in v\left(\mathcal{S}_{p}\right) \right\} \setminus\left\{ i_{1},\ldots,i_{p}\right\} .\label{eq: neighborhood}
\end{equation}
\end{defn}
\begin{defn}
\label{def: stable-neighborhood}(\textsc{Stable Neighborhood}) The
neighborhood of $\mathcal{S}_{p}=\left\{ i_{1}i_{2},i_{1}i_{3},\ldots,,i_{1}i_{p}\right\}$ is \textit{stable} if\\
(i) $\forall i\in v\left(\mathcal{S}_{p}\right)$ and $j \in \left\{1,\ldots,N\right\} \setminus v\left(\mathcal{S}_{p}\right)$: $d_{ij1}=d_{ij2}$;\\
(ii) $\forall j\in n_{1}\left(\mathcal{S}_{p}\right)$ and $l \in \left\{1,\ldots,N\right\}$: $d_{jl1}=d_{jl2}$; \\
(iii) $\forall i_{k}i_{l}$ with $2\leq k,l\leq p$ and $k\neq l$
$d_{i_{k}i_{l}1}=d_{i_{k}i_{l}2}$. \\
\end{defn}
A p-star system is embedded in a stable neighborhood if all agents not in $v\left(\mathcal{S}_{p}\right)$, but connected to someone in $v\left(\mathcal{S}_{p}\right)$, maintain a fixed link structure in periods $1$ and $2$. This includes edges into $v\left(\mathcal{S}_{p}\right)$, as well as edges to other agents in the network (Parts (i) and (ii) of Definition \ref{def: stable-neighborhood}). We also require that any edges among agents in $v\left(\mathcal{S}_{p}\right)$ that are not among the dyads in $\mathcal{S}_{p}$ remain fixed across periods $1$ and $2$ (Part (iii) of Definition \ref{def: stable-neighborhood}). The joint impact of these three conditions is that, other than any edges among the dyads in $\mathcal{S}_{p}$, the network structure up to two degrees away from any agent in $v\left(\mathcal{S}_{p}\right)$ is stable across periods $1$ and $2$.

While we require \emph{stability} of those links surrounding $\mathcal{S}_{p}$, we require \emph{movement} of the links within it.

\begin{defn}
\label{def: identifying-star-system}(\textsc{Identifying p-Star System}) $\mathcal{S}_{p}=\left\{ i_{1}i_{2},i_{1}i_{3},\ldots,,i_{1}i_{p}\right\}$ is
\foreignlanguage{estonian}{\textit{identifying}} if \\
(i) $\mathcal{S}_{p}$ is embedded in a stable neighborhood,\\
(ii) $\forall i_{1}i_{k}$ with $2\leq k\leq p$ $d_{i_{1}i_{k}1}\neq d_{i_{1}i_{k}2}$.
\end{defn}

Observe that if we remove one agent from the identifying $p$-star system $\mathcal{S}_{p}$, the remaining $p-1$ agents do not constitute an identifying $(p-1)$-star system (condition (i) of Definition \ref{def: stable-neighborhood} is violated). Conversely, if $\mathcal{S}_{p}$ is \emph{not} identifying, there may be subsets of agents in $v\left(\mathcal{S}_{p}\right)$ which \emph{do} form an identifying system. 

Lemma \ref{lem: separation-nodes-lemma} and a corollary of it, as well as Lemma \ref{lem: invariance-lemma}, which immediately follows, play important roles in helping us analyze the conditional likelihood  $\Pr\left(\left.\mathbf{D}^{3}=\mathbf{d}^{3}\right|\mathbf{D}^{3}\in\mathbb{B}_{K}\left(\mathbf{d}^{3}\right);\theta\right)$.

\begin{lem}(\textsc{Agent Separation})   \label{lem: separation-nodes-lemma}
    Any node $i\in \mathcal{V}$ belongs to at most one identifying star system.
\end{lem}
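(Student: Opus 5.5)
Suppose node $i$ belongs to two identifying star systems, $\mathcal{S}$ and $\mathcal{S}'$. I will show that $\mathcal{S}=\mathcal{S}'$ as dyad sets, in two steps. First I show the two systems have the same vertex set. Then I show that this vertex set determines the dyad set. The only inputs are Definitions \ref{def: stable-neighborhood} and \ref{def: identifying-star-system}, used through three facts about an identifying system $\mathcal{S}_p$:
\begin{itemize}
\item[(a)] every dyad in $\mathcal{S}_p$ is a mover, i.e.\ $d_{uw1}\neq d_{uw2}$ (condition (ii) of Definition \ref{def: identifying-star-system});
\item[(b)] every dyad joining $v(\mathcal{S}_p)$ to an agent outside $v(\mathcal{S}_p)$ satisfies $d_{uw1}=d_{uw2}$ (condition (i) of Definition \ref{def: stable-neighborhood});
\item[(c)] every dyad between two peripheral vertices $i_k,i_l$ ($k,l\geq2$) satisfies $d_{i_ki_l1}=d_{i_ki_l2}$ (condition (iii) of Definition \ref{def: stable-neighborhood}).
\end{itemize}

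\emph{Step 1: $v(\mathcal{S})=v(\mathcal{S}')$.} The key observation is a propagation property. Let $u\in v(\mathcal{S})\cap v(\mathcal{S}')$ and let $uw\in\mathcal{S}$. By (a), $uw$ is a mover. If $w\notin v(\mathcal{S}')$, then $uw$ joins $v(\mathcal{S}')$ to an outside agent, so (b) applied to $\mathcal{S}'$ says $d_{uw1}=d_{uw2}$, a contradiction. Hence $w\in v(\mathcal{S}')$.

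The dyads of $\mathcal{S}$ form a star, so they connect all of $v(\mathcal{S})$. Starting from $i$, apply the propagation property along star edges: if $i$ is the center, one step reaches every vertex; if $i$ is a leaf, go to the center and then to every other leaf. This gives $v(\mathcal{S})\subseteq v(\mathcal{S}')$. The symmetric argument gives the reverse inclusion, so $v(\mathcal{S})=v(\mathcal{S}')=:V$.

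\emph{Step 2: the vertex set pins down the system.} Among dyads with both endpoints in $V$, the mover dyads are exactly those of $\mathcal{S}$. Indeed, every dyad of $\mathcal{S}$ is a mover by (a). Every other dyad inside $V$ joins two peripheral vertices of $\mathcal{S}$, so by (c) it is not a mover. Applying the same reasoning to $\mathcal{S}'$, its dyads are also exactly the mover dyads inside $V$. Therefore $\mathcal{S}=\mathcal{S}'$.

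The main subtlety I expect is bookkeeping rather than substance. A system is identified with its dyad set, not with a choice of center; for $p=2$ either endpoint can serve as the center, and the argument above never uses which one is chosen. I also need to check that, in the propagation step, Definition \ref{def: stable-neighborhood}(i) is applied to the system $\mathcal{S}'$ (not $\mathcal{S}$), with the mover status of $uw$ coming from $\mathcal{S}$.
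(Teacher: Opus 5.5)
Your proof is correct, but it is organized differently from the paper's. The paper argues by contradiction, splitting on the role $i$ plays in the two systems: center of both, center of exactly one, or leaf of both. In each configuration it exhibits a particular mover dyad that breaks part (i) or part (iii) of Definition \ref{def: stable-neighborhood} for one of the systems, and it reduces the leaf--leaf case to the center--center case through the common center.

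You instead prove $\mathcal{S}=\mathcal{S}'$ directly, giving each stability condition a single job. Part (i) makes $v(\mathcal{S}')$ closed under mover dyads, so membership propagates from $i$ along the star to all of $v(\mathcal{S})$. Part (iii) makes a system recoverable from its vertex set, as the set of mover dyads inside it. In effect you show that the vertex set of an identifying system is a connected component of the graph of dyads with $d_{uw1}\neq d_{uw2}$, and that this component's edges are exactly the star; two systems sharing a vertex must therefore coincide.

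Your route buys three things. The proof is case-free. It treats the center ambiguity of $2$-stars uniformly. It also absorbs without comment the subcase that the paper's Case 2 settles by ``without loss of generality'' (one system equal to $\{ij\}$, the other with further leaves). Along the way it isolates a reusable structural fact: an identifying system is determined by its vertex set. The paper's version is longer, but it names the violated condition configuration by configuration, in the same style as the proofs of Lemmas \ref{lem: invariance-lemma} and \ref{lem: permutation-lemma}.
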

\begin{proof}
    Towards a contradiction, suppose that $i \in v(\mathcal{S}_{p}) \cap v(\mathcal{S}_{l})$ where $\mathcal{S}_{p}$ and $\mathcal{S}_{l}$ are two distinct identifying star systems. \\
    \textbf{Case 1:} If $i$ is the central node of $\mathcal{S}_{p}$ and $\mathcal{S}_{l}$. Since $\mathcal{S}_{p}\neq \mathcal{S}_{l}$, either $\exists k\in v(\mathcal{S}_{p}) \setminus v(\mathcal{S}_{l})$ or $\exists k\in v(\mathcal{S}_{l}) \setminus v(\mathcal{S}_{p})$  and part (ii) of Definition  \ref{def: identifying-star-system} implies $d_{ik1}\neq d_{ik2}$. This change in links status of dyad $ik$ violates the assumption that  $\mathcal{S}_{l}$ or  $\mathcal{S}_{p}$  is embedded in a stable neighborhood. \\
    \textbf{Case 2:} If $i$ is the central node of $\mathcal{S}_{p}$ but not $\mathcal{S}_{l}$. Let $j$ denote the center of $\mathcal{S}_{l}$. If $j\notin v(\mathcal{S}_{p})$, then part (i) of Definition \ref{def: stable-neighborhood} implies that $d_{ij1}=d_{ij2}$. However, since $j$ is the center node of $\mathcal{S}_{l}$, this contradicts part (ii) of Definition  \ref{def: identifying-star-system} which requires $d_{ij1}\neq d_{ij2}$. Hence, $j\in v(\mathcal{S}_{p})$ which means that $j$ must be a leaf node of $\mathcal{S}_{p}$. There are two cases. Either $\mathcal{S}_{p}=\mathcal{S}_{l}=\{ij\}$ contradicting the premise that $\mathcal{S}_{p}$ and $\mathcal{S}_{l}$ are two distinct identifying star systems. Or, without loss of generality, there exists $k\in v(\mathcal{S}_{p})\setminus \{i,j\}$ and $d_{ik1}\neq d_{ik2}$ by part (ii) of Definition  \ref{def: identifying-star-system}. If $k \notin v(\mathcal{S}_{l})$, this contradicts part (i) of Definition \ref{def: stable-neighborhood}. If $k \in v(\mathcal{S}_{l})$, then by part (ii) of Definition  \ref{def: identifying-star-system}, $d_{jk1}\neq d_{jk2}$. However, $j$ and $k$ are both leaves of $\mathcal{S}_{p}$, so part (iii) of Definition \ref{def: stable-neighborhood} requires $d_{jk1}=d_{jk2}$, a contradiction. \\
    \textbf{Case 3}: If $i$ is the central node of $\mathcal{S}_{l}$ but not $\mathcal{S}_{p}$, we are back to Case 2 after swapping the roles of  $\mathcal{S}_{l}$ and $\mathcal{S}_{p}$.\\
    \textbf{Case 4:} If $i$ is not the central node of either $\mathcal{S}_{p}$ or $\mathcal{S}_{l}$. Let $j$ denote the central node of $\mathcal{S}_{p}$. Then, we know by part (ii) of Definition \ref{def: identifying-star-system} that $d_{ji1}\neq d_{ji2}$. Hence, if $j$ is not the central node of $\mathcal{S}_{l}$, we again violate the neighborhood stability of $\mathcal{S}_{l}$. Finally if $j$ is also the central node of $\mathcal{S}_{l}$, we are back to Case 1 with $j$ substituting $i$.
\end{proof}

\begin{cor}(\textsc{Dyad Separation})\label{cor: separation-dyads-corro}
    Any dyad $ij$ belongs to at most one identifying star system, hence there can be at most $\frac{N}{2}$ identifying star systems.
\end{cor}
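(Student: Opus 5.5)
The corollary follows from Lemma \ref{lem: separation-nodes-lemma} together with a counting argument on vertex sets. There are two claims: dyad separation, and the bound $N/2$ on the number of identifying star systems.

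\emph{Dyad separation.} Suppose dyad $ij$ belongs to two identifying star systems $\mathcal{S}_p$ and $\mathcal{S}_l$. By Definition \ref{def: k-star-system}, a dyad in a star system has both endpoints in its vertex set. Hence $i \in v(\mathcal{S}_p)\cap v(\mathcal{S}_l)$, and Lemma \ref{lem: separation-nodes-lemma} forces $\mathcal{S}_p=\mathcal{S}_l$. Node separation therefore gives dyad separation at once. The lemma is in fact stronger: distinct identifying systems have disjoint vertex sets.

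\emph{Counting.} Let $\mathcal{S}^{(1)},\ldots,\mathcal{S}^{(m)}$ be the distinct identifying star systems in $\mathbf{d}^3$. By Lemma \ref{lem: separation-nodes-lemma}, their vertex sets $v(\mathcal{S}^{(r)})$ are pairwise disjoint subsets of $\mathcal{V}=\{1,\ldots,N\}$. Each system is a $p$-star with $p\geq 2$. A star system must contain at least one dyad $i_1i_2$; a $1$-star is empty and cannot meet part (ii) of Definition \ref{def: identifying-star-system} in any meaningful sense. So each vertex set has at least two elements, and
\[
2m\leq \sum_{r=1}^{m}\left|v(\mathcal{S}^{(r)})\right| \leq N ,
\]
which gives $m\leq N/2$.

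The only point needing care is the claim that every identifying system has at least two vertices. This is the sizes $p=2,\ldots,K$ convention adopted in the text, and I would state it explicitly, since a degenerate one-vertex ``system'' would break the bound. The rest is immediate from the lemma.
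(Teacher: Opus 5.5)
Your proof is correct and follows the paper's route. The paper states the corollary without proof as an immediate consequence of Lemma~\ref{lem: separation-nodes-lemma}: dyad separation follows from node separation, and the $N/2$ bound follows because distinct identifying systems have pairwise disjoint vertex sets, each with at least two agents (the $p\geq 2$ convention), which is exactly how you handle it.
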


Corollary \ref{cor: separation-dyads-corro}, as we shall see, implies limits on the rate-of-convergence of our preferred estimate of $\theta$. 

Lemma \ref{lem: invariance-lemma} demonstrates that direct connections between agents in two identifying star systems are impossible.

\begin{lem}
\label{lem: invariance-lemma}\textsc{(Invariance of links across identifying star systems)} Let $\mathcal{S}_{p}$ and $\mathcal{S}_{l}$ denote two distinct identifying star systems. Then, for all nodes $i\in v(\mathcal{S}_{p})$ and $j\in v(\mathcal{S}_{l})$, we have
\begin{itemize}
    \item[(A)] $D_{ij1}=D_{ij2}=0$
    \item[(B)] $R_{ij1}=R_{ij2}$.
\end{itemize}
\end{lem}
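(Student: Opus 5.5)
The plan is to reduce both parts to the stability conditions of Definition \ref{def: stable-neighborhood}, using Lemma \ref{lem: separation-nodes-lemma} to guarantee that $i$ and $j$ sit strictly outside each other's systems. Fix $i\in v(\mathcal{S}_{p})$ and $j\in v(\mathcal{S}_{l})$ with $\mathcal{S}_{p}\neq\mathcal{S}_{l}$ identifying. By Lemma \ref{lem: separation-nodes-lemma}, $i\notin v(\mathcal{S}_{l})$ and $j\notin v(\mathcal{S}_{p})$.

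\textbf{Part (A).} Since $i\in v(\mathcal{S}_{p})$ and $j\notin v(\mathcal{S}_{p})$, part (i) of Definition \ref{def: stable-neighborhood} applied to $\mathcal{S}_{p}$ gives $d_{ij1}=d_{ij2}$. It remains to rule out $d_{ij1}=d_{ij2}=1$. Suppose $d_{ij1}=1$. Then $j\in n_{1}(\mathcal{S}_{p})$, so part (ii) of Definition \ref{def: stable-neighborhood} forces $d_{jl1}=d_{jl2}$ for every $l$; that is, every edge incident to $j$ is stable across periods $1$ and $2$. But $j$ belongs to the identifying system $\mathcal{S}_{l}$, so part (ii) of Definition \ref{def: identifying-star-system} contradicts this in either case:
\begin{itemize}
\item if $j$ is the center of $\mathcal{S}_{l}$, any leaf $k$ of $\mathcal{S}_{l}$ has $d_{jk1}\neq d_{jk2}$;
\item if $j$ is a leaf of $\mathcal{S}_{l}$ with center $c$, then $d_{cj1}\neq d_{cj2}$.
\end{itemize}
Hence $d_{ij1}=d_{ij2}=0$.

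\textbf{Part (B).} Write $R_{ijt}=\sum_{k}d_{ikt}d_{jkt}$ and show that each summand takes the same value in $t=1$ and $t=2$. Split on the location of $k$:
\begin{itemize}
\item \emph{$k\in v(\mathcal{S}_{p})$.} By Lemma \ref{lem: separation-nodes-lemma}, $k\notin v(\mathcal{S}_{l})$. Part (A) applied to the pair $(k,j)$ then gives $d_{jk1}=d_{jk2}=0$, so the summand is zero in both periods.
\item \emph{$k\in v(\mathcal{S}_{l})$.} Symmetrically, $d_{ik1}=d_{ik2}=0$, so the summand is again zero in both periods.
\item \emph{$k$ in neither set.} Part (i) of Definition \ref{def: stable-neighborhood} for $\mathcal{S}_{p}$ gives $d_{ik1}=d_{ik2}$. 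The same condition for $\mathcal{S}_{l}$ gives $d_{jk1}=d_{jk2}$. So the product is unchanged.
\end{itemize}
Summing over $k$ yields $R_{ij1}=R_{ij2}$.

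The only delicate step is the contradiction in part (A). There one must observe that neighborhood stability of $\mathcal{S}_{p}$ freezes \emph{all} edges of a neighbor $j$, whereas membership of $j$ in an identifying system forces at least one of its edges to move. Everything else is bookkeeping once Lemma \ref{lem: separation-nodes-lemma} has placed $i$ and $j$ outside each other's systems.
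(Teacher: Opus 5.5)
Your proof is correct and follows the paper's argument essentially step for step. In part (A) you use parts (i) and (ii) of Definition \ref{def: stable-neighborhood} to freeze every edge of $j$, which contradicts part (ii) of Definition \ref{def: identifying-star-system}. In part (B) you use the same three-way split of the common-neighbor sum over $v(\mathcal{S}_{p})$, $v(\mathcal{S}_{l})$ and the remaining vertices; your explicit appeal to Lemma \ref{lem: separation-nodes-lemma} in part (A) simply makes visible a step the paper leaves implicit.
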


\begin{proof}
    We begin with claim (A). Towards a contradiction, assume $D_{ij1}=1$ or $D_{ij2}=1$. In either case, it follows from parts \textcolor{black}{(i)} and \textcolor{black}{(ii)} of Definition \ref{def: stable-neighborhood} that $j\in n_{1}\left(\mathcal{S}_{p}\right)$ and $D_{jl1}=D_{jl2}$ for all $l=1,\ldots,N$. This violates Definition \ref{def: identifying-star-system}.(ii). \\
    Next, for claim (B) we can write the decomposition:
\begin{align*}
R_{ij2}-R_{ij1} =&  \sum_{k=1}^{N}\left(D_{ik2}D_{jk2}-D_{ik1}D_{jk1}\right) \\
=&  \sum_{k\in v\left(\mathcal{S}_{p}\right)}\left(D_{ik2}D_{jk2}-D_{ik1}D_{jk1}\right)+\sum_{k \in v\left(\mathcal{S}_{l}\right)}\left(D_{ik2}D_{jk2}-D_{ik1}D_{jk1}\right) \\
&+\sum_{k \notin v\left(\mathcal{S}_{p}\right)\cup v\left(\mathcal{S}_{l}\right)}\left(D_{ik2}D_{jk2}-D_{ik1}D_{jk1}\right)  \\
=& \sum_{k\in v\left(\mathcal{S}_{p}\right)}\left(D_{ik2}D_{jk2}-D_{ik1}D_{jk1}\right)+\sum_{k \in v\left(\mathcal{S}_{l}\right)}\left(D_{ik2}D_{jk2}-D_{ik1}D_{jk1}\right)+0 
\end{align*}
where the second equality follows from Lemma \ref{lem: separation-nodes-lemma} and the third equality follows from the neighborhood stability of $\mathcal{S}_{p}$ and $\mathcal{S}_{l}$. By claim (A), $\forall k\in v\left(\mathcal{S}_{p}\right)$, $D_{jk1}=D_{jk2}=0$ and $\forall k\in v\left(\mathcal{S}_{l}\right)$, $D_{ik1}=D_{ik2}=0$. Consequently, $R_{ij1}=R_{ij2}$ as claimed.
\end{proof}

Lemma \ref{lem: invariance-lemma} has important implications for the type of observed network sequences that are likely to contain information about $\theta$. Such networks will involve at least some separated clusters of links. They will not be ``hairball" graphs, although they may include giant components.

\subsection*{Permutation Lemmas}
As noted above, setting $\left(b_{ij1},b_{ij2}\right)=\left(d_{ij2},d_{ij1}\right)$ need not result in a network sequence in $\mathbb{B}\left(\mathbf{d}^3\right)$: such swaps may induce violations of the constraint that $\left(s_{ij1},s_{ij2}\right)=\left(r_{ij1},r_{ij2}\right)$, if $\left(b_{ij1},b_{ij2}\right)=\left(d_{ij1},d_{ij2}\right)$, and
$\left(s_{ij1},s_{ij2}\right)=\left(r_{ij2},r_{ij1}\right)$, if $\left(b_{ij1},b_{ij2}\right)=\left(d_{ij2},d_{ij1}\right)$.

In this section we show that swapping the period 1 and 2 edges of an identifying star system \emph{does} result in a new network sequence which is \emph{guaranteed to} lie in $ \mathbb{B}\left(\mathbf{d}^{3}\right)$. We define the set $\mathbb{B}_{K}\left(\mathbf{d}^{3}\right) \subset \mathbb{B}\left(\mathbf{d}^{3}\right)$ to be the set of all network sequence obtainable by ``swapping the edges" (or not) of identifying star systems in $\mathbf{d}^3$ of size $K$ or less.

Lemma \ref{lem: permutation-lemma}, stated and proved below, is the key tool used to establish this result. To state and prove Lemma \ref{lem: permutation-lemma} requires some additional notation and definitions. In what follows we use $\land$ to denote ``and'', $\lor$ for ``or'', and $\veebar$ for the ``exclusive or''. We also use $\exists!$ to denote ``there exists exactly one".

Let $Z_{ij,p}$ denote an indicator function equal to one if dyad $ij$ belongs to an identifying star system of size $p\geq 2$ and define the following dyad \emph{and} vertex/agent sets.

(i) Dyads contained in an identifying p-star system: $\mathcal{D}_{p}=\{ij\ | \ Z_{ij,p}=1\}$. \\
(ii) Agents belonging to an identifying p-star system: $\mathcal{V}_{p}=\{i\in \mathcal{V}\ | \ \exists j\in \mathcal{V}: Z_{ij,p}=1\}$.  \\
(iii) Dyads composed of agents in a common p-star system, but where the dyad itself is not part of the system:
\[
\mathcal{D}_{p}^{fd}=\left\{ ij\ | \ ij\notin\mathcal{D}_{p}\land i\in \mathcal{V}_{p} \land j\in \mathcal{V}_{p} \land \left(\exists! m\in \mathcal{V}_{p}: (im,jm) \in \mathcal{D}_{p}^2 \right)\right\}.
\]
\textcolor{black}{Observe that for $p=2$, $\mathcal{D}_{p}^{fd}=\emptyset$.}

(iv) Agents belonging to some identifying star system of maximal size $K$: 
$\mathcal{V}_{2:K}=\bigcup\limits_{p=2}^{K} \mathcal{V}_{p}$. 

(v) Agents \emph{not} belonging to an identifying star system of maximal size $K$: $\mathcal{V}^{o}=\mathcal{V}\setminus \mathcal{V}_{2:K}$. 

(vi) Dyads composed of agents belonging to two distinct identifying star-systems: 
\begin{align*}
    \mathcal{D}_{k,l}^{pd}&=\left\{ ij \ | \ \left(i\in \mathcal{V}_{k} \land j\in \mathcal{V}_{l}\right)\veebar \left(i\in \mathcal{V}_{l} \land j\in \mathcal{V}_{k}\right)\right\}, \quad \text{ if } l<k \\
    \mathcal{D}_{k,k}^{pd}&=\left\{ ij \ | \ ij\notin\mathcal{D}_{k}\land i\in \mathcal{V}_{k} \land j\in \mathcal{V}_{k} \land \left(\not\exists \ m \in \mathcal{V}_{k}: (im,jm) \in \mathcal{D}_{k}^2 \right)\right\}.
\end{align*}
(vii) Dyads with one agent in $\mathcal{V}_{p}$ and
the other $\mathcal{V}^{o}$: 
\[
\mathcal{D}_{p,o}^{pd}=\left\{ ij \ | \ \left(i\in \mathcal{V}_{p} \land j\in \mathcal{V}^{o}\right)\veebar \left(i\in \mathcal{V}^{o} \land j\in \mathcal{V}_{p}\right)\right\}.
\]
(viii) Dyads with neither agent belonging to an identifying star system of size less than $K$
\[
\mathcal{D}_{K}^{od}=\left\{ ij \ | \ i \in \mathcal{V}^{o} \land j\in \mathcal{V}^{o}\right\}.
\]

An implication of Lemma \ref{lem: separation-nodes-lemma} and Corollary \ref{cor: separation-dyads-corro},  specifically the fact that an agent and a dyad can belong to at most one identifying star system, means we can partition the set of all $\binom{N}{2}$ possible dyads $\mathcal{D}$ as follows:
\begin{align} \label{eq: dyad_partitioning}
    \mathcal{D} = \left(\bigcup\limits_{p=2}^{K} \mathcal{D}_p \cup \mathcal{D}_{p}^{fd} \cup \mathcal{D}_{p,o}^{pd}\right)\cup \left(\bigcup\limits_{p=2}^{K} \bigcup\limits_{l=2}^{K} \mathcal{D}_{p,l}^{pd} \right) \cup \mathcal{D}_{K}^{od}. 
\end{align}
Partition \eqref{eq: dyad_partitioning}, by grouping dyads into analytically convenient subsets, simplifies likelihood analysis.

Finally, we formally define set $\mathbb{B}_{K}\left(\mathbf{d}^{3}\right)$ to consist of all network sequences obtained via permutations (i.e., ``swapping the edges") of the period $t=1$ and $t=2$ link decisions of identifying star systems. Permutations are applied simultaneously to all dyads in a given $\mathcal{S}_K$. All other link decisions in any $\mathbf{b}^{3} \in \mathbb{B}_{K}\!\left(\mathbf{d}^{3}\right)$ are held fixed at their observed values.

With the above definitions established we can state and prove the following Lemma.

\begin{lem}
\label{lem: permutation-lemma}\textsc{(Permutations)} Define $S_{ijt}=\sum_{k}B_{ikt}B_{jkt}$ for $\mathbf{B}^{3}\in\mathbb{B}_{K}\left(\mathbf{D}^{3}\right).$
Let $\mathcal{S}_{p}$ be an identifying $p$-star system in $\mathbf{D}^{3}$
with center $i$. If $\left(B_{ik1},B_{ik2}\right)=\left(D_{ik1},D_{ik2}\right)$
for all $k \neq i$ with $k \in v\left(\mathcal{S}_{p}\right)$, then for each
\begin{itemize}
  \item[(A)] $ij\in\mathcal{S}_{p}$, $\left(S_{ij1},S_{ij2}\right)=\left(R_{ij1},R_{ij2}\right)$;
  \item[(B)] $(j,k) \in v(\mathcal{S}_{p})\times v(\mathcal{S}_{p})$ but with $jk\in \mathcal{D}_{p}^{fd}$, $\left(S_{jk1},S_{jk2}\right)=\left(R_{jk1},R_{jk2}\right);$
  \item[(C)]$j\in v\left(\mathcal{S}_{p}\right)$ and $l\notin v\left(\mathcal{S}_{p}\right)$, $\left(S_{jl1},S_{jl2}\right)=\left(R_{jl1},R_{jl2}\right)$.
\end{itemize}
If, instead, $\left(B_{ik1},B_{ik2}\right)=\left(D_{ik2},D_{ik1}\right)$
for all $k \neq i$ with $k\in v\left(\mathcal{S}_{p}\right)$, then (A), (B) and (C) continue to hold after
replacing $\left(R_{\cdot\cdot1},R_{\cdot\cdot2}\right)$ with $\left(R_{\cdot\cdot2},R_{\cdot\cdot1}\right)$.
\end{lem}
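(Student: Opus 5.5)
The plan is to expand each friends-in-common count $S_{xy t}=\sum_m B_{xmt}B_{ymt}$ term by term and compare it with $R_{xyt}$. I sort the intermediate nodes $m$ according to whether the two edges $xm$ and $ym$ are altered in $\mathbf{B}^3$. The key structural facts come from three earlier results.

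First, by the definition of $\mathbb{B}_K$, the only dyads whose period-1 and period-2 values differ between $\mathbf{B}^3$ and $\mathbf{D}^3$ are dyads lying in some identifying star system. Second, by Lemma \ref{lem: separation-nodes-lemma} and Corollary \ref{cor: separation-dyads-corro}, any altered dyad incident to a node of $v(\mathcal{S}_p)$ must be one of the spokes $ik\in\mathcal{S}_p$. Third, by Lemma \ref{lem: invariance-lemma}(A), nodes of $v(\mathcal{S}_p)$ have no links at $t=1,2$ to nodes of any other identifying system. Finally, stability (Definition \ref{def: stable-neighborhood}) gives $D_{xm1}=D_{xm2}$ for every dyad incident to $v(\mathcal{S}_p)$ other than the spokes. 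It also gives the same equality for every dyad incident to a neighbor $l\in n_1(\mathcal{S}_p)$.

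I first treat the unswapped case.

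\textbf{Claim (C).} Take $j\in v(\mathcal{S}_p)$ and $l\notin v(\mathcal{S}_p)$. A term $B_{jmt}B_{lmt}$ can differ from $D_{jmt}D_{lmt}$ only if $jm$ or $lm$ is altered.
\begin{itemize}
\item The dyad $jm$ is incident to $v(\mathcal{S}_p)$, so it can be altered only if it is a spoke, and spokes are unswapped by hypothesis.
\item The dyad $lm$ can be altered only if it lies in another identifying system $\mathcal{S}_q$, so that $l,m\in v(\mathcal{S}_q)$. In that case Lemma \ref{lem: invariance-lemma}(A) gives $B_{jmt}=D_{jmt}=0$ (note $m\notin v(\mathcal{S}_p)$), and the term vanishes in both sequences.
\end{itemize}
Hence $S_{jlt}=R_{jlt}$.

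\textbf{Claims (A) and (B).} Now both endpoints lie in $v(\mathcal{S}_p)$.
\begin{itemize}
\item For $m\notin v(\mathcal{S}_p)$, the edges $jm$ and $km$ are incident to $v(\mathcal{S}_p)$ and are not spokes, so they are unaltered.
\item For $m\in v(\mathcal{S}_p)$, the only altered edges are spokes, and these are unswapped.
\end{itemize}
So every term agrees.

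The swapped case is the substantive step; I expect it to be the main obstacle. Here $(B_{ik1},B_{ik2})=(D_{ik2},D_{ik1})$ for every spoke, and I must show $S_{xy1}=R_{xy2}$ (and symmetrically $S_{xy2}=R_{xy1}$).

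For each term of $S_{xy1}$, I examine every product $B_{xm1}B_{ym1}$. Each factor is one of two kinds:
\begin{itemize}
\item a spoke, which equals its period-2 value in $\mathbf{D}$;
\item a non-spoke edge incident to $v(\mathcal{S}_p)$ or to a neighbor $l\in n_1(\mathcal{S}_p)$, which is stable, so its period-1 and period-2 values coincide.
\end{itemize}
In either case the factor equals $D_{\cdot\cdot2}$, so each product equals $D_{xm2}D_{ym2}$ and $S_{xy1}=R_{xy2}$.

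The delicate cases are pairs $(x,y)$ in (C) with $l$ outside $n_1(\mathcal{S}_p)\cup v(\mathcal{S}_p)$, and factors $lm$ belonging to a different identifying system $\mathcal{S}_q$ that may itself be swapped.
\begin{itemize}
\item For $l$ outside the neighborhood, note that $D_{jm1}D_{lm1}\neq0$ forces $m\in n_1(\mathcal{S}_p)$. Then $lm$ is incident to a neighbor and hence stable. Alternatively, if $m\in v(\mathcal{S}_p)$, then $D_{lm1}=D_{lm2}=0$ by Definition \ref{def: stable-neighborhood}(i) and the fact that $l$ is not a neighbor. I will check these subcases carefully, using $n_1=n_2$, which holds by stability (i).
\item For factors in another system $\mathcal{S}_q$, Lemma \ref{lem: invariance-lemma}(A) again kills the term, since one factor is then a dyad between two distinct systems.
\end{itemize}
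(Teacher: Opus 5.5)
Your proof is correct and follows essentially the same route as the paper's: both compare $S_{\cdot\cdot t}$ with $R_{\cdot\cdot t}$ summand by summand, split intermediate nodes into those inside and outside $v(\mathcal{S}_p)$, apply parts (i)--(iii) of Definition \ref{def: stable-neighborhood}, and use Lemma \ref{lem: invariance-lemma}(A) to dispose of nodes lying in other identifying systems. Your upfront classification of which dyads can be altered is a somewhat cleaner organization than the paper's case-by-case algebraic identities. In the delicate subcase of (C), the natural trigger is $B_{jm1}=D_{jm2}=1$ rather than $D_{jm1}D_{lm1}\neq 0$; by part (i) this forces $m\in n_1(\mathcal{S}_p)$ whenever $m\notin v(\mathcal{S}_p)$, so your sketch goes through as written.
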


Lemma \ref{lem: permutation-lemma} states that if, relative to their configuration in $\mathbf{d}^{3}$, the periods $1$ and $2$ edges of all dyads constituting an identifying p-star system, $\mathcal{S}_{p}$, are swapped, then the number of ``friends-in-common" terms are also swapped. Furthermore, this swapping of ``friends-in-common" extends to all dyads where at least one vertex is part of $v\left(\mathcal{S}_{p}\right)$. A consequence of this observation is that the likelihoods for the events $\mathbf{D}^{3}=\mathbf{b}^{3}$ for any $\mathbf{b}^{3} \in \mathbb{B}_{K}\left(\mathbf{d}^{3}\right)$ will, after permutation, include many multiplicands-in-common. This, as we shall see, simplifies our conditional likelihood analysis.


The proof of Lemma \ref{lem: permutation-lemma} is straightforward and is provided in Appendix \ref{app: preliminary-results-main-text}. The idea is to repeatedly decompose the common-friends term into contributions from agents within the identifying $p$-star system $\mathcal{S}_{p}$ and from agents outside it. The implications of Definition \ref{def: stable-neighborhood} are then repeatedly applied to establish the stated results.

\subsection*{Conditional star system likelihood}

We are now ready to define the criterion function on which our identification and estimation results rely. We begin with a formal defintion of $\mathbb{B}_{K}\!\left(\mathbf{d}^{3}\right)$:
\begin{align}
\mathbb{B}_{K}\!\left(\mathbf{d}^{3}\right)
= & \Bigl\{ \left(\mathbf{b}_{0},\mathbf{b}_{1},\mathbf{b}_{2},\mathbf{b}_{3}\right)\in\mathbb{D}_{N}^{3} : 
\ \mathbf{b}_{0}=\mathbf{d}_{0},\ \mathbf{b}_{3}=\mathbf{d}_{3}, \label{eq:conditioning_set} \\
& \quad \text{for each identifying $p$-star system $\mathcal{S}_{p}$ in $\mathbf{d}^{3}$, } p=1,\ldots,K, \notag \\
& \quad  b_{ij1}=d_{ij1}\land b_{ij2}=d_{ij2} \ \forall\, ij\in\mathcal{S}_{p} \notag \\
& \quad \text{or }  b_{ij1}=d_{ij2}\land b_{ij2}=d_{ij1} \ \forall\, ij\in\mathcal{S}_{p}, \notag \\
& \quad \text{otherwise } b_{ij1}=d_{ij1}\land b_{ij2}=d_{ij2}\ \text{ for all other dyads } ij \Bigr\}. \nonumber
\end{align}
Next consider the conditional likelihood of observing $\mathbf{D}^{3}=\mathbf{d}^{3}$
given that $\mathbf{D}^{3}\in\mathbb{B}_{K}\left(\mathbf{d}^{3}\right)$:
\begin{equation}
\ell^{css}\left(\mathbf{d}^{3};\theta,\mathbf{A},\pi\right)=\frac{\Pr\left(\mathbf{D}^{3}=\mathbf{d}^{3};\theta,\mathbf{A},\pi\right)}{\sum_{\mathbf{b}^{3}\in\mathbb{B}_{K}\left(\mathbf{d}^{3}\right)}\Pr\left(\mathbf{D}^{3}=\mathbf{b}^{3};\theta,\mathbf{A},\pi\right)}.\label{eq: conditional_likelihood_ss_v1}
\end{equation}
The `css' superscript denotes ``conditional star-system'' (likelihood). Like, \eqref{eq: conditional_likelihood_v2} introduced earlier, \eqref{eq: conditional_likelihood_ss_v1} is invariant to the population values of $\mathbf{A}$ and $\pi$ (see below). Hence we will often just write $\ell^{css}\left(\mathbf{d}^{3};\theta\right)$ = $\ell^{css}\left(\mathbf{d}^{3};\theta,\mathbf{A},\pi\right)$ in order to economize on notation.

Define:
\begin{align}
    b(q_{ij};\theta)\overset{def}{\equiv}&\exp\Bigl(-\alpha \left(d_{ij2}-d_{ij1}\right)\left(d_{ij3}-d_{ij0}\right)\notag \\
    &-\beta \Bigl(\left[d_{ij3}-d_{ij1}\right](r_{ij2}-r_{ij0})-\left[d_{ij3}-d_{ij2}\right](r_{ij1}-r_{ij0})\Bigr)\Bigr),\label{eq: b_function_def}
\end{align}
with $q_{ij}\overset{def}{\equiv}\left(d_{ij0},d_{ij1},d_{ij2},d_{ij3},r_{ij0},r_{ij1},r_{ij2}\right)'.$

Let $\mathcal{G}_{p}=\left\{ \mathcal{S}_{p\mathbf{1}},\ldots,\mathcal{S}_{p\mathbf{m}_{p,N}}\right\} $
denote the set of $\mathbf{m}_{p,N}$ identifying $p$-star systems
contained in $\mathbf{D}^{3}=\mathbf{d}^{3}$ for $p=2,\ldots,K$. We use the bold subscripts $\mathbf{i}=\mathbf{1},\ldots,\mathbf{m}_{p,N}$ to index the identifying star systems in $\mathcal{G}_{p}$ in order to emphasize that they're composed of \emph{multiple} ($p\geq2$) agents.

Let $\mathcal{S}_{p}$
denote a generic identifying system in $\mathcal{G}_{p}$. Next let $\mathbf{m}_{2:K,N} \overset{def}{\equiv} \sum_{p=2}^K \mathbf{m}_{p,N}$ denote the \emph{total} number of identifying star systems of size $K$ or less. Finally, for a $p$-star system $S_{p\mathbf{i}}\in \mathcal{G}_{p}$, define the dyad set $\bar{\mathcal{S}}_{p\mathbf{i}}=\mathcal{S}_{p\mathbf{i}}\cup \mathcal{S}_{p\mathbf{i}}^{fd} \cup \mathcal{S}_{p\mathbf{i},o}^{pd}$ with $\mathcal{S}_{p\mathbf{i}}^{fd}=\{ij \in \mathcal{D}_{p}^{fd}|  (i,j)\in v(\mathcal{S}_{p\mathbf{i}})\times v(\mathcal{S}_{p\mathbf{i}})\}$ and $\mathcal{S}_{p\mathbf{i},o}^{pd} = \{ij \in \mathcal{D}_{p,o}^{pd}|  (i,j)\in v(\mathcal{S}_{p\mathbf{i}})\times \mathcal{V}^{o}\}$. 

Our main result about $\ell^{css}\left(\mathbf{d}^{3};\theta,\mathbf{A},\pi\right)$ is stated in the following theorem.
\begin{thm}
\label{thm: condition-star-system-likelihood}\textsc{(Conditional Star System Likelihood)} Under the dynamic network formation process defined by equations \eqref{eq:ij_Link_Model}, \eqref{eq:U_ijt_Independence}, \eqref{eq:Initial_Condition} and logistic random utility shocks, the conditional likelihood of the event $\mathbf{D}^3=\mathbf{d}^3$ given $\mathbf{D}^{3}\in\mathbb{B}_{K}\left(\mathbf{d}^{3}\right)$, equation \eqref{eq: conditional_likelihood_ss_v1} above, exhibits the equivalent representation: 
\begin{align}
\ell^{css}\left(\mathbf{d}^{3};\theta,\mathbf{A},\pi\right)&=\prod_{p=2}^K \prod_{\mathbf{i}=\mathbf{1}}^{\mathbf{m}_{p,N}} \left[1+\prod_{ij\in \bar{\mathcal{S}}_{p\mathbf{i}}} b(q_{ij};\theta)\right]^{-1} \notag \\
&=\prod_{p=2}^K \prod_{\mathbf{i}=\mathbf{1}}^{\mathbf{m}_{p,N}} \frac{\exp(t_{p\mathbf{i}}'\theta)}{1+\exp(t_{p\mathbf{i}}'\theta)} \label{eq: conditional_likelihood_ss_v2}
\end{align}
where $t_{p\mathbf{i}}=\left(t_{1,p\mathbf{i}}, t_{2,p\mathbf{i}}\right)'$, with
\begin{align*}
t_{1,p\mathbf{i}}&=   \sum_{ij\in \bar{\mathcal{S}}_{p\mathbf{i}}} \left(d_{ij2}-d_{ij1}\right)\left(d_{ij3}-d_{ij0}\right) \\
t_{2,p\mathbf{i}}&= \sum_{ij\in \bar{\mathcal{S}}_{p\mathbf{i}}} \Bigl(\left[d_{ij3}-d_{ij1}\right](r_{ij2}-r_{ij0})-\left[d_{ij3}-d_{ij2}\right](r_{ij1}-r_{ij0})\Bigr).
\end{align*}
\end{thm}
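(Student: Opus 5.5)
The plan is to use the product structure of $\mathbb{B}_{K}(\mathbf{d}^{3})$: it has $2^{\mathbf{m}_{2:K,N}}$ elements, indexed by a vector $\sigma\in\{0,1\}^{\mathbf{m}_{2:K,N}}$ recording which identifying star systems are swapped. Write $\mathbf{b}^{3}(\sigma)$ for the corresponding sequence. We will show that the likelihood ratio $\Pr(\mathbf{D}^{3}=\mathbf{b}^{3}(\sigma))/\Pr(\mathbf{D}^{3}=\mathbf{d}^{3})$ factors as $\prod_{p,\mathbf{i}:\sigma_{p\mathbf{i}}=1}\prod_{ij\in\bar{\mathcal{S}}_{p\mathbf{i}}}b(q_{ij};\theta)$. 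Given this, the denominator of \eqref{eq: conditional_likelihood_ss_v1} divided by the numerator is $\sum_{\sigma}\prod_{\sigma_{p\mathbf{i}}=1}B_{p\mathbf{i}}=\prod_{p,\mathbf{i}}(1+B_{p\mathbf{i}})$, where $B_{p\mathbf{i}}=\prod_{ij\in\bar{\mathcal{S}}_{p\mathbf{i}}}b(q_{ij};\theta)$. This yields the first line of \eqref{eq: conditional_likelihood_ss_v2}. Since $B_{p\mathbf{i}}=\exp(-t_{p\mathbf{i}}'\theta)$ by the definition \eqref{eq: b_function_def} of $b$, the logistic form on the second line follows immediately.

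To compute the ratio we use \eqref{eq: likelihood_chamberlain} dyad by dyad. The initial condition $\pi(\mathbf{d}_0;\mathbf{A})$ cancels because $\mathbf{b}_0=\mathbf{d}_0$. For each dyad $ij$ we must know how $(b_{ij1},b_{ij2})$ and $(s_{ij1},s_{ij2})$ compare with the observed values, using the partition \eqref{eq: dyad_partitioning}:
\begin{itemize}
\item[(a)] For $ij\in\mathcal{S}_{p\mathbf{i}}$ with $\sigma_{p\mathbf{i}}=1$, both the links and, by Lemma \ref{lem: permutation-lemma}(A), the common-friend counts are swapped.
\item[(b)] For $ij\in\mathcal{S}^{fd}_{p\mathbf{i}}$ and for $ij\in\mathcal{S}^{pd}_{p\mathbf{i},o}$, the links are unchanged: they are stable by Definition \ref{def: stable-neighborhood}(iii) and (i), hence $d_{ij1}=d_{ij2}$. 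The counts $(s_{ij1},s_{ij2})$ are swapped by Lemma \ref{lem: permutation-lemma}(B),(C).
\item[(c)] For dyads in $\mathcal{D}^{pd}_{k,l}$ across two systems, Lemma \ref{lem: invariance-lemma} gives $d_{ij1}=d_{ij2}=0$ and $r_{ij1}=r_{ij2}$, so nothing changes.
\item[(d)] For $\mathcal{D}^{pd}_{k,k}$ dyads, meaning leaf pairs that are not in $\mathcal{D}^{fd}$, the link is stable by (iii). For $p\ge3$, any two leaves share the center as a common neighbour, so these dyads in fact fall into $\mathcal{D}^{fd}$. Either way their counts are handled by Lemma \ref{lem: permutation-lemma}.
\item[(e)] For $ij\in\mathcal{D}^{od}_K$, we need $(s_{ij1},s_{ij2})=(r_{ij1},r_{ij2})$. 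Both endpoints lie outside every system, and any common neighbour $k$ inside a system would have to be linked to both $i$ and $j$. Those links are stable, by Definition \ref{def: stable-neighborhood}(i), and unaffected by swapping, so $s=r$.
\end{itemize}

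Because Lemma \ref{lem: permutation-lemma} treats each system with the others fixed, and swaps in distinct systems touch disjoint dyad sets (Lemma \ref{lem: separation-nodes-lemma}, Corollary \ref{cor: separation-dyads-corro}), swaps in several systems can be applied one at a time without interaction. This is the step I would verify most carefully. The point to check is that a dyad in $\bar{\mathcal{S}}_{p\mathbf{i}}$ has its count affected only by system $p\mathbf{i}$. For $\mathcal{S}^{pd}_{p\mathbf{i},o}$ dyads, the other endpoint lies in $\mathcal{V}^{o}$, and by Lemma \ref{lem: invariance-lemma}(A) it cannot be adjacent to two different systems in a changing way.

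Next we compute the per-dyad log-ratio. Periods $t=1,2,3$ involve the pairs $(d_{ij0},r_{ij0})\to d_{ij1}$, $(d_{ij1},r_{ij1})\to d_{ij2}$ and $(d_{ij2},r_{ij2})\to d_{ij3}$. Swapping the period $1$ and $2$ entries of both $d$ and $r$ permutes the $m_{ij}(\delta,\rho)$ counts: the lagged states at $t=2,3$ are exchanged, and $d_{ij1}+d_{ij2}$ is preserved. Hence the denominators, and the $A_{ij}\sum_t d_{ijt}$ terms, coincide, as in Proposition \ref{prop: conditioning_set}. Only the numerator exponent $\alpha\sum_t d_{ijt-1}d_{ijt}+\beta\sum_t r_{ijt-1}d_{ijt}$ changes. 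Its change equals
\[
\alpha(d_{ij2}-d_{ij1})(d_{ij0}-d_{ij3})+\beta\bigl(r_{ij0}(d_{ij2}-d_{ij1})+r_{ij2}(d_{ij1}-d_{ij3})+r_{ij1}(d_{ij3}-d_{ij2})\bigr).
\]
This rearranges exactly to $\log b(q_{ij};\theta)$. In case (b), where $d_{ij1}=d_{ij2}$ and only $r$ swaps, the same formula remains valid: the $\alpha$ term vanishes and the $\beta$ term reduces correctly, and again the denominators match because the $m_{ij}$ counts are permuted. Finally, the dyads in cases (c) to (e) contribute factor $1$ (indeed $b=1$ there), so including them or not in $\bar{\mathcal{S}}$ is immaterial. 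Multiplying the factors over dyads gives the claimed ratio and completes the argument.
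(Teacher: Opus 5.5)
Your proof is correct. Its skeleton matches the paper's:
\begin{itemize}
\item form the ratio $\Pr(\mathbf{D}^3=\mathbf{b}^3)/\Pr(\mathbf{D}^3=\mathbf{d}^3)$;
\item use the partition \eqref{eq: dyad_partitioning} and Lemmas \ref{lem: separation-nodes-lemma}--\ref{lem: permutation-lemma} to factor it over swapped systems;
\item sum over the $2^{\mathbf{m}_{2:K,N}}$ swap patterns to get $\prod_{p,\mathbf{i}}(1+B_{p\mathbf{i}})$, and invert.
\end{itemize}

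Where you differ is the per-dyad computation. The paper writes out four ratios for a general $F$: $b^{10}$ and $b^{01}$ for spoke dyads, and $c^{11}$ and $c^{00}$ for dyads in $\mathcal{D}^{fd}_p$ and $\mathcal{D}^{pd}_{p,o}$. It rearranges them in the style of Honor\'e and Kyriazidou, specializes each to the logit, and then checks by hand that the combined $c(q_{ij};\theta)$ equals $b(q_{ij};\theta)$ when $d_{ij1}=d_{ij2}$.

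You work instead from the exponential-family form \eqref{eq: likelihood_chamberlain}. A swap only exchanges the lagged states $(d_{ij1},r_{ij1})$ and $(d_{ij2},r_{ij2})$ and preserves $\sum_t d_{ijt}$. In effect you show $\mathbb{B}_K(\mathbf{d}^3)\subseteq\mathbb{B}^*(\mathbf{d}^3)$, so the denominators and the $A_{ij}$ terms cancel identically. A single identity then handles spoke dyads and neighbouring dyads at once: the change in $\alpha\sum_t d_{ijt-1}d_{ijt}+\beta\sum_t r_{ijt-1}d_{ijt}$ equals $\log b(q_{ij};\theta)$, and the algebra checks out. For neighbouring dyads the swap of $d$ is a no-op, so the same identity applies.

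Your route is shorter and makes the disappearance of $\mathbf{A}$ an immediate consequence of Proposition \ref{prop: conditioning_set}(i). It also shows that $\prod_{ij\in\bar{\mathcal{S}}_{p\mathbf{i}}}b(q_{ij};\theta)$ is simply the exponentiated change in the sufficient statistic, the same quantity as the Metropolis--Hastings log acceptance ratio $\theta_0'w$. The paper's general-$F$ expressions instead isolate exactly where the logistic assumption enters, which is useful for thinking about extensions beyond logit.

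A few details need tidying, none of them a gap.
\begin{itemize}
\item Your parenthetical ``$b=1$ there'' holds for cross-system dyads, where $d_{ij1}=d_{ij2}=0$ and $r_{ij1}=r_{ij2}$. It fails for $\mathcal{D}^{od}_K$: those dyads can be movers with unstable neighbourhoods, so $b(q_{ij};\theta)\neq 1$ in general. They contribute a factor of one only because none of their $d$'s or $r$'s change, and they must not be placed in $\bar{\mathcal{S}}_{p\mathbf{i}}$.
\item $\mathcal{D}^{pd}_{k,k}$ is not a set of within-system leaf pairs. Those pairs are already in $\mathcal{D}^{fd}_k$ for $k\geq 3$ and do not exist for $k=2$. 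It is the set of dyads joining two distinct systems of the same size, so it belongs under your item (c).
\item In (c), Lemma \ref{lem: invariance-lemma}(B) concerns observed counts only, whereas you need $s_{ijt}=r_{ijt}$ in every $\mathbf{b}^3$. This holds because, by Lemma \ref{lem: invariance-lemma}(A), any common neighbour of a cross-system dyad lies in $\mathcal{V}^o$, so no swapped edge enters the count.
\item The non-interaction step you flag is simpler than your appeal to Lemma \ref{lem: invariance-lemma}(A) suggests. The count $s_{jlt}$ involves only edges incident to $j$ or $l$. By Lemma \ref{lem: separation-nodes-lemma}, no spoke edge of another system touches an endpoint of a dyad in $\bar{\mathcal{S}}_{p\mathbf{i}}$. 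In any case, Lemma \ref{lem: permutation-lemma} is stated for arbitrary $\mathbf{B}^3\in\mathbb{B}_K(\mathbf{D}^3)$, so it already allows other systems to be swapped.
\end{itemize}
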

\proof{See Appendix \ref{app: main-results}.}

As in analogous results for single-agent dynamic binary choice models (e.g., \citealp*{Cox_JRSS58,Chamberlain_LALMD85,Honore_Kyriazidou_EM00}),
Theorem \ref{thm: condition-star-system-likelihood} codifies a model implication invariant to the value of $\mathbf{A}$; allowing for fixed effects identification of $\theta_0=\left(\alpha_0,\beta_0\right)'$. The result has a number of practical and theoretical implications.

First, direct evaluation of \eqref{eq: conditional_likelihood_ss_v1} requires enumeration of the set $\mathbb{B}_{K}\left(\mathbf{d}^{3}\right)$ (with cardinality $2^{\sum_{p=2}^K \mathbf{m}_{p,N}}$); the so called ``intractable denominator problem". Equation \eqref{eq: conditional_likelihood_ss_v2} shows that we can instead evaluate the log-likelihood by computing a sum of just $\sum_{p=2}^K \mathbf{m}_{p,N}$ elements. This is a substantial simplification.

Second, the product structure of \eqref{eq: conditional_likelihood_ss_v2} hints at the \emph{possibility} of some form of (conditional) independence across its multiplicands. Of course, such independence is not inevitable,\footnote{Indeed composite likelihoods with dependence across multiplicands, feature widely in the econometrics literature on networks; see, for example, \cite{Graham_HBE2020}.} but we show in Section \ref{sec: inference} that the multiplicands in \eqref{eq: conditional_likelihood_ss_v2} are indeed conditionally independent of one another. We utilize this structure to derive single network, many agent, sampling distribution approximations.

Third, observe that three types of dyads make nontrivial contributions to \eqref{eq: conditional_likelihood_ss_v2}: (i) those dyads which directly make up identifying star systems, $\bigcup\limits_{p=2}^{K} \mathcal{D}_p$; (ii) those dyads where both agents belong to a star system, but the dyad itself does not, $\bigcup\limits_{p=2}^{K} \mathcal{D}_{p}^{fd}$; and, finally, (iii) all those dyads with one agent belonging to a star system and one agent not, $\bigcup\limits_{p=2}^{K} \mathcal{D}_{p,o}^{pd}$. This highlights how the entire architecture of the neighborhood surrounding a star-system contributes to identifying $\theta$.

A proof of Theorem \ref{thm: condition-star-system-likelihood} is available in Appendix \ref{app: main-results}. It involves the application of Lemmas \ref{lem: separation-nodes-lemma} to \ref{lem: permutation-lemma} to navigate from \eqref{eq: conditional_likelihood_ss_v1} to \eqref{eq: conditional_likelihood_ss_v2}. The proof provides insight into the nature of both our problem and proposed solution. It also highlights connections between our analysis and prior work on dynamic discrete choice panel data. For example, we use several ``tricks" introduced by \cite{Honore_Kyriazidou_EM00} in our proof.

In the single-agent setting, \cite{Cox_JRSS58} compares the \emph{relative} frequencies of the link sequences $d_{ij0}01d_{ij3}$ and $d_{ij0}10d_{ij3}$ to infer the strength of state-dependence in the presence of unobserved heterogeneity.\footnote{The notation ``$d_{ij0}01d_{ij3}$" is for the link sequence $d_{ij0}$, $d_{ij1}=0$,  $d_{ij2}=1$ and $d_{ij3}$.} \emph{Movers} -- agents who revise their choice between periods $t=1$ and $t=2$ -- contribute to identifying state-dependence.

In the multi-agent network setting, the\emph{ relative }frequencies of the link sequences $d_{ij0}01d_{ij3}$ and $d_{ij0}10d_{ij3}$ similarly provides
information about the signs and magnitudes of $\alpha_{0}$ and $\beta_{0}$.
However, we must confine analysis to sets of dyads -- \emph{star systems} in our nomenclature -- that, in addition to changing their link status across periods $t=1$ and $t=2$, are also embedded in stable neighborhoods (see Definition \ref{def: stable-neighborhood}). We separately learn about $\alpha_{0}$ versus $\beta_{0}$ from star systems embedded in
neighborhoods with varying types of local network architecture.
The need to condition on neighborhood stability arises because of
interdependencies in link decisions across dyads. 

Consider two network sequences,\emph{ identical every respect}, except
that in the first one dyad $ij$'s link history is $d_{ij0}01d_{ij3}$,
while in the second it is $d_{ij0}10d_{ij3}$. The second network
sequence can be derived by permuting the period $t=1$ and $t=2$
link decisions of \emph{just a single dyad}. If linking decisions
were conditionally independent across dyads, then the likelihoods
associated with these two network sequences would differ by only a
single term (corresponding to the direct likelihood contribution of
dyad $ij$). When linking decisions are interdependent, however,
these two likelihoods generally have many terms different, even though the
two network sequences are nearly identical. That two nearly identical
network sequences may have very different likelihoods is a consequence of the interdependence in linking decisions
across dyads induced by a structural taste for transitivity.

To see this, consider the effect, on the form of the likelihood, of
changing $d_{ij1}$ from zero to one. The effect of such a small change
on the structure of the likelihood is complicated. First, due to state
dependence, this change alters the incentive for $i$ and $j$ to
form a link in period $t=2$. Second, the period $t=2$ incentives
for other agents to link with either $i$ or $j$ may change. This
occurs if $r_{il1}$ changes when $d_{ij1}$ does, as would
occur if $l$ and $j$ are linked in period $t=1$. In that case, the
presence of a period $t=1$ link between $i$ and $j$ creates an
opportunity for $i$ and $l$ to engineer triadic closure in period
$t=2$ by linking. Introducing a $ij$ link in period $t=1$ therefore
increases the incentives for certain other links to form in period $t=2$.\footnote{The change in $d_{ij1}$ does not affect period two link incentives for pairs that do not include either $i$ or $j$. This is because a change in $d_{ij1}$ does not alter $r_{kl1}$ for such pairs.}

An implication of Theorem \ref{thm: condition-star-system-likelihood} is that we can control these cascading effects on the likelihood by restricting how the neighborhood surrounding a star system, say $\mathcal{S}_{p\mathbf{i}}$, evolves. A simultaneous swapping of $d_{ij1}$ and $d_{ij2}$ for all dyads $ij \in \mathcal{S}_{p\mathbf{i}}$ leaves the \emph{net} likelihood contribution of all dyads outside of the set $\bar{\mathcal{S}}_{p\mathbf{i}}$ unchanged. Only dyads' likelihood contributions where at least one agent is in $v\left(\mathcal{S}_{p\mathbf{i}}\right)$ are affected by swapping the edges in $\mathcal{S}_{p\mathbf{i}}$. Specifically, while the likelihood contributions of dyads outside the set $\bar{\mathcal{S}}_{p\mathbf{i}}$ may nominally differ, it turns out that, after permuting terms, they can be shown to be identical (this is shown carefully in the Proof). In contrast, the contributions of those dyads in $\bar{\mathcal{S}}_{p\mathbf{i}}$ substantively \emph{do} differ after swapping $d_{ij1}$ and $d_{ij2}$. Crucially, however, the effect of $\mathbf{A}$ on the likelihood does \emph{not} vary across such swaps. This is
why $\mathbf{A}$ does not appear in \eqref{eq: conditional_likelihood_ss_v2}.

\subsection*{Examples of identifying star systems}

To understand the implications of Theorem \ref{thm: condition-star-system-likelihood} it is helpful to consider some explicit examples of identifying star systems. These examples provide insight into how ``swapping the edges" of star systems generate a reference set of networks which conditions away the influence of $\mathbf{A}$, while retaining information about $\theta$. 

\begin{figure}[ht!]
\caption{Identifying a taste for transitive ties}
\begin{center}
    \includegraphics[scale=0.85]{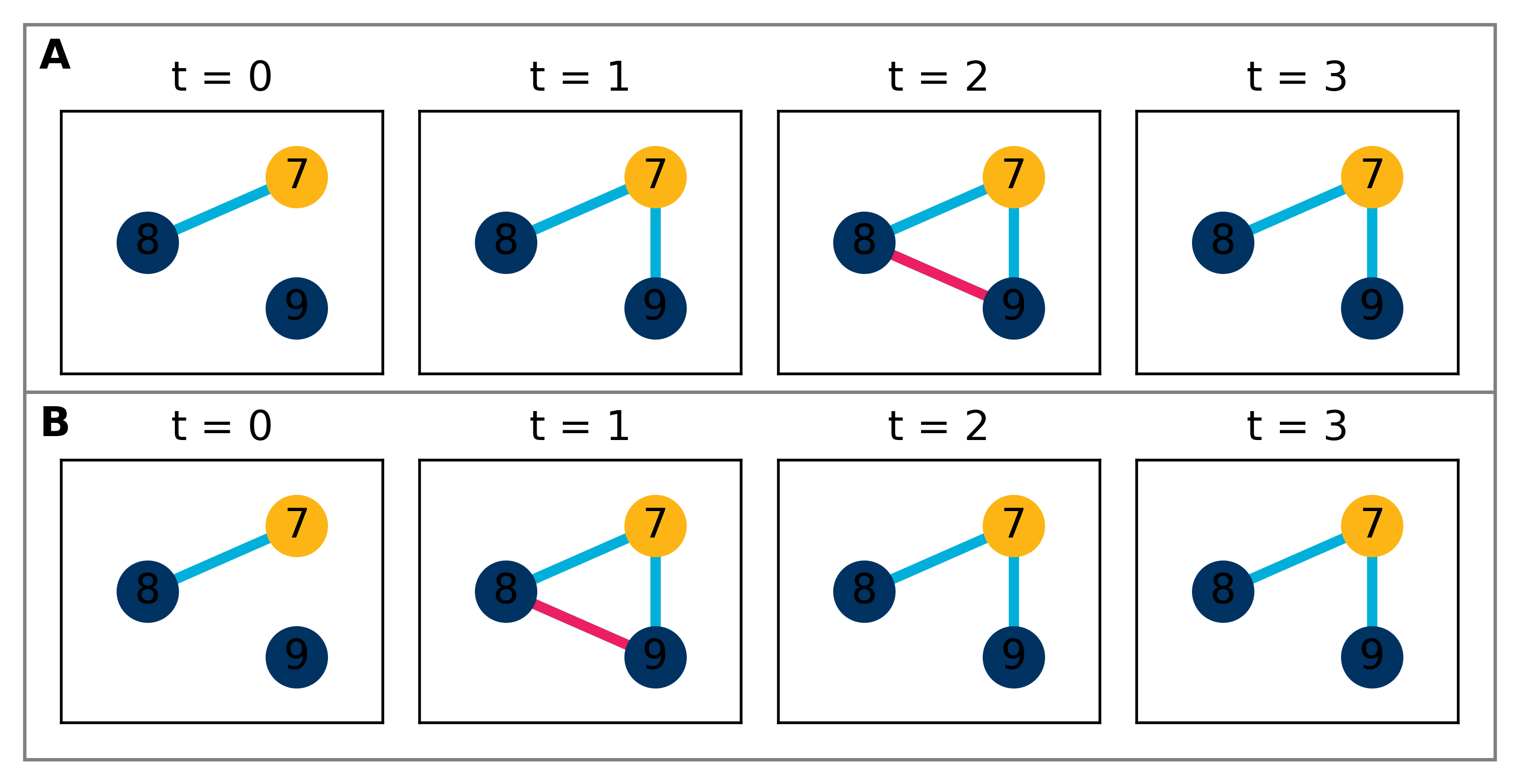}
\end{center}
\caption*{\underline{Source:} Authors' calculations.\hfill\break
\underline{Notes:} The top (Panel A) and bottom (Panel B) rows depict two network sequences. In the top row agents 8 and 9 link in period 2, but not in period 1 (Rose Garden colored edge). In the bottom row they link in period 1, but not in period 2. Observe (i) agents 8 and 9 constitute a stable two star system (since $d_{781}=d_{782}$ and $d_{791}=d_{792}$) and (ii) while they share a link in common (with agent $7$) in period $t=1$, they do not in period $t=0$. Consequently the payoff $8$ and $9$ get from forming a link is higher in period $t=2$ than in period $t=1$. In period $2$ the link generates utility from engineering `triadic closure', no such utility gain is generated by a period $1$ link. Therefore, the top network sequence arises more \emph{relatively} frequently than the bottom in the presence of a structural taste for transitivity in links.\hfill\break }
\label{fig: fig_css_likelihood_example_1}
\end{figure}

Our examples are shown in Figures \ref{fig: fig_css_likelihood_example_1} through \ref{fig: fig_css_likelihood_example_4}. In the figures those nodes which belong to an identifying star system are colored ``Berkeley Blue", while neighboring nodes are colored ``California Gold". Star system edges are colored ``Rose Garden". Each row, or panel, of the figures shows a different star system edge swap.

Our first example is shown in Figure \ref{fig: fig_css_likelihood_example_1}. Agents $8$ and $9$ (the two Berkeley Blue nodes) constitute an identifying $2$ star system. Further observe that, while these two agents share agent $7$ (the California Gold node) as a common friend in period $t=1$, they do not in period $t=0$. Therefore, their returns to linking in period $t=2$, where they can reap the benefits of triadic closure, are higher than the corresponding returns in period $t=1$. In the presence of a structural taste for transitivity, $\beta>0$, we will observe the top sequence \emph{relatively} more frequently than the bottom sequence. 

Figure \ref{fig: fig_css_likelihood_example_2} develops an example of how the relative frequency of two different network sequences provides information about the state dependence parameter $\alpha$. Here the intuition parallels that familiar from the single agent binary choice case (e.g., \citealp*{Cox_JRSS58,Heckman_AI78,Chamberlain_LALMD85}). Namely `runs' in a binary choice sequence, holding the overall number of $1$'s fixed, suggests true state dependence. In our setting, identification additionally requires holding the payoffs from triadic closure fixed. See the notes to the figure for more details.

\begin{figure}[ht!]
\caption{Identifying a structural state dependence}
\begin{center}
    \includegraphics[scale=0.85]{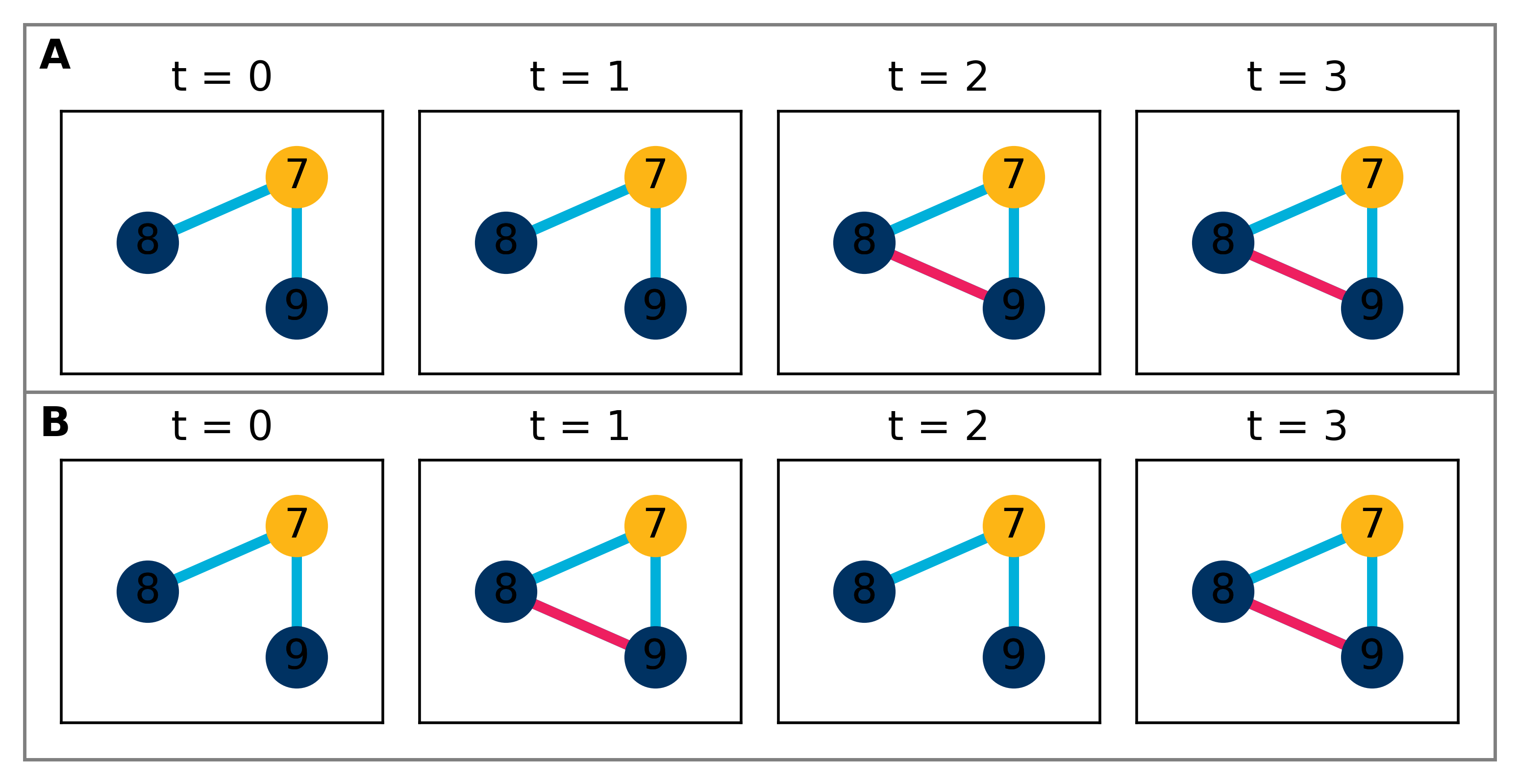}
\end{center}
\caption*{\underline{Source:} Authors' calculations.\hfill\break
\underline{Notes:} See the notes to Figure \ref{fig: fig_css_likelihood_example_1}. In this example $r_{890}=r_{891}=1$, so agents $8$ and $9$ accrue the same transitivity payoff whether linking in periods $1$ or $2$. However,
$d_{890}=0$ and $d_{893}=1$, suggesting that, in the presence of
state dependence ($\alpha>0$), the top sequence (Panel A) will occur \emph{relatively} more
frequently than the bottom (Panel B). The intuition in this case is very similar
to that underlying the results of \citet{Cox_JRSS58}, \citet{Heckman_AI78} and \citet*{Chamberlain_LALMD85}.\hfill\break }
\label{fig: fig_css_likelihood_example_2}
\end{figure}

Figure \ref{fig: fig_css_likelihood_example_3} presents a more complicated example. In this figure dyads $12$ and $13$ form an identifying $3$ star system with $1$ as the center agent. The neighborhood of this system includes agents $4$, $5$, $6$ and $7$ (colored in California Gold). Although tedious, it is straightforward to check that the stability requirements of Definition \ref{def: stable-neighborhood} are satisfied. Inspecting the top sequence, we see that agents $1$ and $2$ share three links in common in period $t=0$ (with agents $4$, $5$ and $6$). Whereas they share no links in common in subsequent periods. In contrast, agents $1$ and $3$ share no connections in common in period $t=0$, but they do share two connections (with agents $4$ and $6$) in periods $t=1$ and $t=2$. This neighborhood architecture implies that the payoff from transitive closure will be larger for agents $1$ and $2$ if they connect in period $t=1$ versus period $t=2$ (since they share many friends in common in period $t=0$). In contrast, the transitivity-driven gains for agents $1$ and $3$ are greater in period $t=2$ versus period $t=1$ (since these two agents shared several connections in common in period $t=1$, but not in period $t=0$). Therefore, in the presence of a structural taste for transitivity, we will observe the top sequence \emph{relatively} more frequently than the bottom one. The reader can verify that the top sequence is also more likely in the presence of structural state dependence.

\begin{figure}[ht!]
\caption{A more complicated example}
\begin{center}
    \includegraphics[scale=0.85]{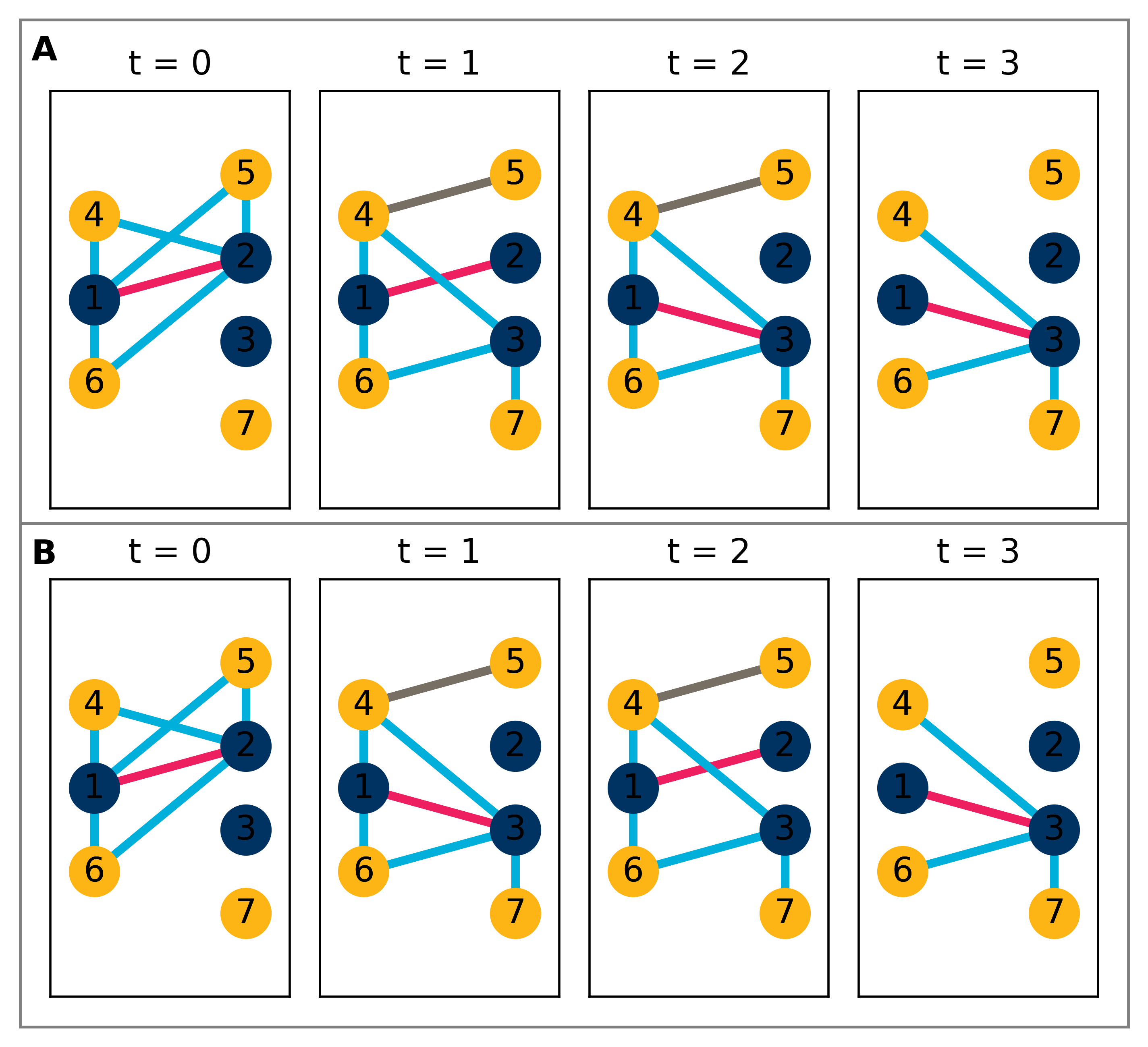}
\end{center}
\caption*{\underline{Source:} Authors' calculations.\hfill\break
\underline{Notes:} The top row (Panel A) depicts an identifying $3$ star, $\mathcal{S}_{3}=\left\{12,13\right\}$. In $t=0$ dyad $12$ shares friends in common with agents $4$, $5$ and $6$. They do not share any friends in common in later periods. In contrast, dyad $13$ has no friends in common in $t=0$, but \emph{does} share connections in common with agents $4$ and $6$ in $t=1$. If $\beta>0$, we would expect to observe the top sequence relatively more frequently than the bottom one (Panel B). This follows since it maximizes the payoffs from transitive closure. \hfill\break }
\label{fig: fig_css_likelihood_example_3}
\end{figure}

Our final example, shown in Figure \ref{fig: fig_css_likelihood_example_4}, combines the star systems shown in Figures \ref{fig: fig_css_likelihood_example_1} and \ref{fig: fig_css_likelihood_example_3} into a single network. Note that the agents forming the two different star systems -- $\mathcal{S}_{3}=\{12,13\}$ and $\mathcal{S}_{2}=\{89\}$ -- although not directly connected, have overlapping neighborhoods (both star systems include agents connected to agent $7$). See Lemmas \ref{lem: separation-nodes-lemma} and \ref{lem: invariance-lemma} above.

Because there are two identifying star systems, the set $\mathbb{B}_{K}\!\left(\mathbf{d}^{3}\right)$ includes four elements (based on whether we swap the edges, or not, in each of the two star systems). For a given value of $\theta$, we can evaluate $\eqref{eq: conditional_likelihood_ss_v2}$. Values of $\theta$ making the observed network sequence, say the one depicted in Panel A, relatively more likely than the other sequences (those in Panels B, C and D) are more data consonant.

Each of the four networks depicted in Figure \ref{fig: fig_css_likelihood_example_4} share common values for those statistics sufficient for $\mathbf{A}$ and $\pi$ (see Proposition \ref{prop: conditioning_set} above). However, the networks \emph{do} vary in their numbers of consecutive links, $\sum_{i<j}\sum_{t=1}^{3}d_{ijt-1}d_{ijt}$, and attempts to engineer triadic closure, $\sum_{i<j}\sum_{t=1}^{3}r_{ijt-1}d_{ijt}$. If, among all networks in the reference set (Panels A, B, C, \& D), the observed network (Panel A) has \emph{relatively} more consecutive links and attempts to close triangles, then we can conclude that $\alpha$ and $\beta$ are larger. The maximum likelihood estimate (MLE) chooses $\hat{\theta}=\left(\hat{\alpha},\hat{\beta}\right)'$ to make likelihood of the network shown in Panel A as large as possible, \emph{relative}, to that of Panels B, C or D. In the next section, we present conditions under which the MLE consistently estimates $\theta_0$.

\section{Inference on $\alpha$ and $\beta$} \label{sec: inference}

In this section we develop two approaches to conditional inference on $\theta=\left(\alpha,\beta\right)'$. The first method is exact (and similar) \citep[cf.,][]{Pelican_Graham_ReStud2026}.\footnote{By `similar' we mean that the size of the test is invariant to the precise instance of the conditioning sufficient statistics. See \cite{Ferguson_Book1967} for a formal definition and/or the related discussion in \cite{Pelican_Graham_ReStud2026}.} This method exploits the fact that the conditional distribution of $\mathbf{D}^3$ given $\mathbf{D}^{3}\in\mathbb{B}_{K}\left(\mathbf{d}^{3}\right)$ is a parametric family indexed by $\theta$ alone. 

The second method involves constructing an estimate of $\theta$ by maximizing \eqref{eq: conditional_likelihood_ss_v2} and characterizing its sampling properties as the network grows large, $N\rightarrow \infty$. This asymptotic argument is delicate: we observe only a single network for $T=3$ periods (plus the initial condition). Consistency and asymptotic normality requires the number of identifying star-systems within the network to grow with $N$; a condition which places subtle restrictions on the form of population network generating process.

\begin{landscape}
\begin{figure}
\caption{Conditional star system (css) likelihood}
\begin{centering}
\includegraphics[scale=0.75]{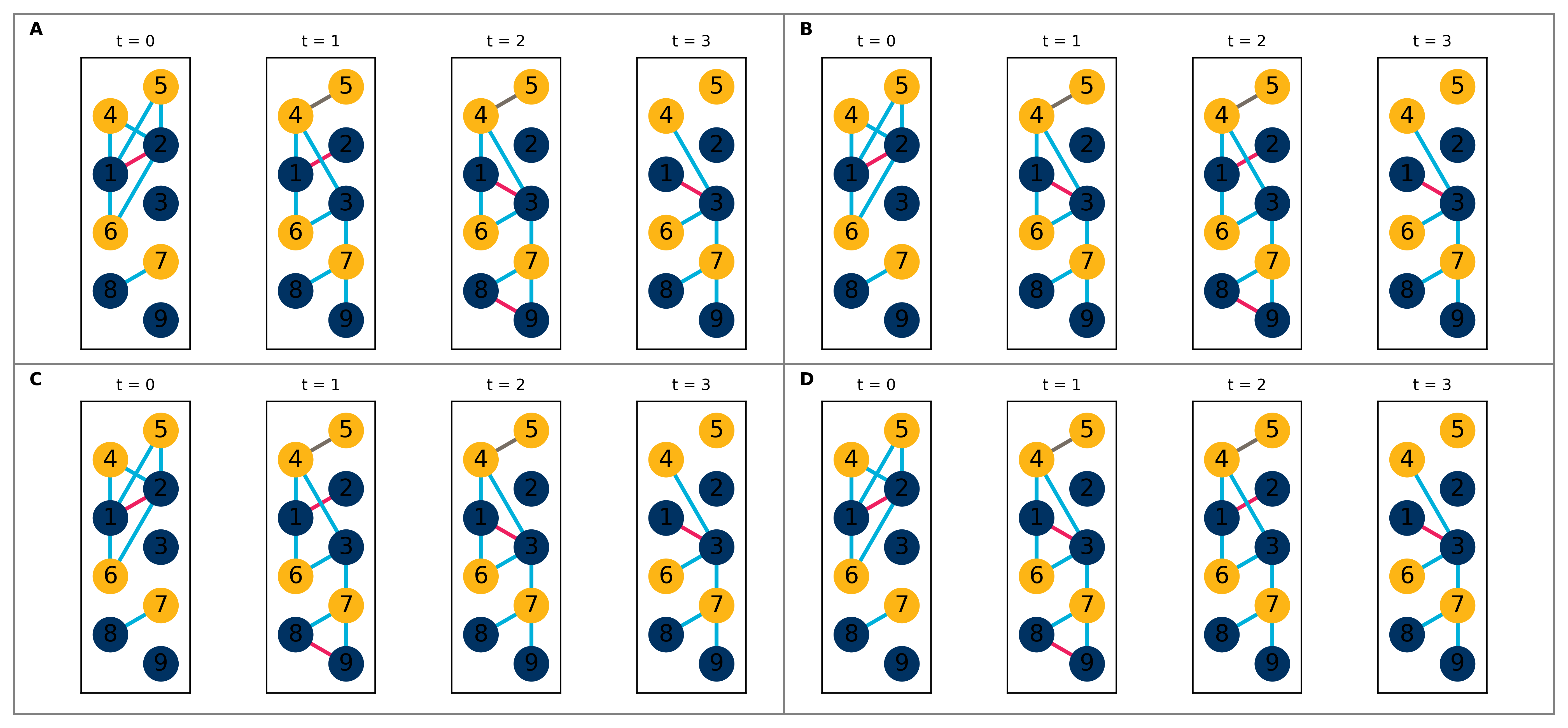}
\end{centering}
\caption*{\underline{Source:} Authors' calculations.\hfill\break
\underline{Notes:} Panel A depicts a hypothetical observed network sequence containing two identifying star systems: the 3-star $\{12,13\}$ and the 2-star $\{89\}$. In this sequence, there are 18 instances of links persisting across two adjacent periods, $\sum_{i<j}\sum_{t=1}^{3}d_{ij,t-1}d_{ijt}=18$, and 16 attempts to engineer triadic closure, $\sum_{i<j}\sum_{t=1}^{3}r_{ij,t-1}d_{ijt}=16$. In Panel B, the period-1 and period-2 edges of the 3-star system are swapped, reducing the persistence total to 16 and the transitivity total to 13. In Panel C, only the period-1 and period-2 edges of the 2-star system are swapped. This leaves the persistence total unchanged at 18 but reduces the transitivity total to 15. In Panel D, both star systems' period-1 and period-2 edges are swapped, yielding a persistence total of 16 and a transitivity total of 12.\hfill }
\label{fig: fig_css_likelihood_example_4}
\end{figure}
\end{landscape}

\subsection*{Exact inference on $\alpha$ and $\beta$}

This subsection describes an \emph{exact} method of inference on $\theta$. The basic approach is due to \cite{Fisher_JRSS1922}. The main difficulty, in the present context, involves enumerating the set $\mathbb{B}_{K}\left(\mathbf{d}^{3}\right)$; a problem which we solve by simulation (as do \cite{McDonald_et_al_SN07} and \cite{Pelican_Graham_ReStud2026} in related contexts). 

\subsubsection*{A Metropolis-Hastings algorithm}
Our main tool is Algorithm \ref{alg: markov_draw}, which describes a Metropolis-Hastings procedure for constructing a random draw from the conditional distribution of $\mathbf{D}^{3}$ given that $\mathbf{D}^{3}\in\mathbb{B}_{K}\left(\mathbf{d}^{3}\right)$. This distribution, for a null structural parameter of $\theta=\theta_{0}$, places a probability weight of $\ell^{css}\left(\mathbf{b}^3;\theta_{0}\right)$ on each $\mathbf{b}^3 \in \mathbb{B}_{K}\left(\mathbf{d}^{3}\right)$ (see equation \eqref{eq: conditional_likelihood_ss_v2} of Theorem \ref{thm: condition-star-system-likelihood} above). 

\begin{algorithm} 
\caption{\textsc{Markov Draw Algorithm}}\label{alg: markov_draw}
\textbf{\underline{Inputs:}} A sequence of adjacency
matrices $\left(\mathbf{b}_{0},\mathbf{b}_{1},\mathbf{b}_{2},\mathbf{b}_{3}\right)=\mathbf{b}^{3}\in\mathbb{B}_{K}\left(\mathbf{d}^{3}\right)$ with a set $\mathcal{G}$ of $\mathbf{m}_{2:K,N}$ stable star systems of size $p=2,\dots,K$; a null value for the common parameter, $\theta_0$;
a mixing time $\sigma.$ \\
\textbf{\underline{Procedure:}}
\begin{enumerate}
\item Set $s=0$.
\item With probability $1-\rho$ go to step 3, with probability $\rho$
go to step $4$.
\item Draw a stable star system $\mathcal{S}$ uniformly at random from $\mathcal{G}$
and permute its period $1$ and $2$ edges (as currently configured
in $\mathbf{b}^{3}$) generating $\mathbf{\tilde{b}}^{3}\in\mathcal{N}\left(\mathbf{b}^{3}\right)\subseteq\mathbb{B}_{K}\left(\mathbf{d}^{3}\right)$.
\begin{enumerate}
\item Draw $U\sim\mathrm{Uniform}\left[0,1\right]$.
\item \textbf{If} $\min\left\{ 1,\frac{\ell^{css}\left(\tilde{\mathbf{b}}^{3};\theta_{0}\right)}{\ell^{css}\left(\mathbf{b}^{3};\theta_{0}\right)}\right\} \geq U$.
\begin{enumerate}
\item Set $\mathbf{b}^{3}=\mathbf{\tilde{b}}^{3}$
\item Go to step 4.
\end{enumerate}
\item \textbf{Else} go to step 4.
\end{enumerate}
\item Set $s=s+1.$
\begin{enumerate}
\item \textbf{If} $s=\sigma$ return $\mathbf{b}^{3}.$
\item \textbf{Else} go to step 2.
\end{enumerate}
\end{enumerate}
\textbf{\uline{Output:}} A random draw, $\mathbf{b}^{3}$, from $\ell^{css}\left(\cdot;\theta_{0}\right).$
\end{algorithm}

Algorithm \ref{alg: markov_draw} stochastically moves between the elements of $\mathbb{B}_{K}\left(\mathbf{d}^{3}\right)$ by (at each iteration $s=1,\ldots,\sigma$) randomly selecting a stable star system and ``swapping its edges" (i.e., permuting its period $1$ and $2$ edges). More precisely, at each iteration we (i) stay at the current state with probability $0<\rho<1$ or (ii), with probability $1-\rho$, pick a stable star system at random and swap its edges. In case (ii) the swap causes the algorithm to move from its current state, say $\mathbf{B}^{3,\left(s\right)}$, to a \emph{proposed} new state -- also in $\mathbb{B}_{K}\left(\mathbf{d}^{3}\right)$ by construction -- of $\tilde{\mathbf{B}}^{3,\left(s+1\right)}$. This proposal is then accepted according to the rule
\[
\mathbf{B}^{3,\left(s+1\right)}=\left\{ \begin{array}{ll}
\tilde{\mathbf{B}}^{3,\left(s+1\right)} & \text{if }-\theta_{0}'w\left(\mathbf{B}^{3,\left(s\right)},\tilde{\mathbf{B}}^{3,\left(s+1\right)}\right)\leq V^{\left(s\right)}\\
\mathbf{B}^{3,\left(s\right)} & \text{otherwise}
\end{array}\right.,
\]
with $V^{\left(s\right)}\sim\mathrm{Exponential}\left(1\right)$\footnote{Recall
that if $U^{\left(s\right)}\sim\mathcal{U}\left[0,1\right]$, then
$-\ln U^{\left(s\right)}\sim\mathrm{Exponential}\left(1\right)$.} and
\[
w\left(\mathbf{B}^{3,\left(s\right)},\tilde{\mathbf{B}}^{3,\left(s+1\right)}\right)=\left(\begin{array}{c}
\sum_{i<j}\sum_{t=1}^{3}\left[\tilde{B}_{ijt-1}^{\left(s+1\right)}\tilde{B}_{ijt}^{\left(s+1\right)}-B_{ijt-1}^{\left(s\right)}B_{ijt}^{\left(s\right)}\right]\\
\sum_{i<j}\sum_{t=1}^{3}\left[\tilde{S}_{ijt-1}^{\left(s+1\right)}\tilde{B}_{ijt}^{\left(s+1\right)}-S_{ijt-1}^{\left(s\right)}B_{ijt}^{\left(s\right)}\right]
\end{array}\right).
\]
Recall that $S_{ijt}=\sum_{k}B_{ikt}B_{jkt}$ and define $\tilde{S}_{ijt-1}$ analogously. Note that $\theta_{0}'w\left(\mathbf{B}^{3,\left(s\right)},\tilde{\mathbf{B}}^{3,\left(s+1\right)}\right)$ equals the logarithm of the likelihood ratio $\sfrac{\ell^{css}\left(\tilde{\mathbf{B}}^{3,\left(s+1\right)};\theta_{0}\right)}{\ell^{css}\left(\mathbf{B}^{3,\left(s\right)};\theta_{0}\right)}$. Inspection of the expressions above reveal, assuming (for clarity of exposition) that the elements of $\theta_0$ are positive, that Algorithm \ref{alg: markov_draw} accepts any move to a network sequence with (sufficiently) ``more" link persistence and/or transitivity. However, moves with reduced link persistence and/or transitivity also occur with positive probability. Because the conditional star system likelihood assigns higher probability to network sequences with more link persistences and/or transitivity, the Markov chain defined by Algorithm \ref{alg: markov_draw} spends more time in regions in $\mathbb{B}_{K}\left(\mathbf{d}^{3}\right)$ assigned high probability by the null model; but it also visits all network sequences in this reference set with positive probability.

Let $\mathbf{B}^{3,\left(0\right)}=\mathbf{d}^{3}$ --
the observed network sequence -- initialize the chain and consider the
simulated chain $\mathbf{B}^{3,\left(1\right)},\mathbf{B}^{3,\left(2\right)},\ldots,\mathbf{B}^{3,\left(\sigma\right)}$.
For $\sigma$ large enough, it is straightforward to verify\footnote{The Markov chain is finite, irreducible and aperiodic and hence ergodic.
The claimed stationary distribution can be derived using, for
example, Theorem 7.10 of \cite{Mitzenmacher_Upfal_PC05}} that
\[
\mathbf{B}^{3,\left(S\right)}\sim\ell^{css}\left(\cdot;\theta_{0}\right).
\]
That is, the ergodic distribution of the Markov process described
above is such that it visits each network in $\mathbb{B}_{K}\left(\mathbf{d}^{3}\right)$
with a long-run frequency of 
\[
\Pr\left(\left.\mathbf{B}^{3,\left(s\right)}=\mathbf{b}^{3}\right|\mathbf{B}^{3,\left(s\right)}\in\mathbb{B}_{K}\left(\mathbf{d}^{3}\right);\theta_{0}\right)=\ell^{css}\left(\mathbf{b}^{3};\theta_{0}\right)
\]
for $\mathbf{b}^{3}\in\mathbb{B}_{K}\left(\mathbf{d}^{3}\right)$
and with zero frequency for $\mathbf{b}^{3}\notin\mathbb{B}_{K}\left(\mathbf{d}^{3}\right).$

\begin{thm}
\label{thm: markov_draw}\textsc{(Null Sampler)} The output of Algorithm \ref{alg: markov_draw}, for $\sigma$ large enough, is a random draw from $\ell^{css}\left(\cdot;\theta_{0}\right).$
\end{thm}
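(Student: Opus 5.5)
The plan is to view Algorithm \ref{alg: markov_draw} as a finite-state Markov chain on $\mathbb{B}_{K}\left(\mathbf{d}^{3}\right)$ and to verify three things: (a) $\ell^{css}\left(\cdot;\theta_0\right)$ is stationary, via detailed balance; (b) the chain is irreducible; (c) it is aperiodic. The standard convergence theorem for finite ergodic chains (e.g.\ Theorem 7.10 of \cite{Mitzenmacher_Upfal_PC05}) then shows that the law of $\mathbf{B}^{3,(\sigma)}$ converges to $\ell^{css}\left(\cdot;\theta_0\right)$ as $\sigma\to\infty$, from any initial state and in particular from $\mathbf{d}^3$.

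First I fix the state space and the proposal. By the definition of $\mathbb{B}_{K}\left(\mathbf{d}^{3}\right)$, the map $\mathbf{b}^3\mapsto (z_{\mathbf{1}},\ldots,z_{\mathbf{m}})\in\{0,1\}^{\mathbf{m}_{2:K,N}}$ is a bijection, where $z$ records which systems are swapped. Lemma \ref{lem: separation-nodes-lemma} and Corollary \ref{cor: separation-dyads-corro} make this well defined: the systems are dyad-disjoint, so swaps of different systems commute. Swapping a chosen system flips one coordinate, so it lands back in $\mathbb{B}_{K}\left(\mathbf{d}^{3}\right)$.

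There is a subtle point to check here. The identifying systems must be those of the \emph{observed} $\mathbf{d}^3$, and $\mathcal{G}$ must stay fixed along the chain. Lemma \ref{lem: permutation-lemma} and the symmetry of Definitions \ref{def: stable-neighborhood}--\ref{def: identifying-star-system} under exchanging periods $1$ and $2$ suggest the set is in fact invariant, but the argument only needs it to be held fixed as the input $\mathcal{G}$.

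Under this encoding the proposal is a random walk on the hypercube. With probability $1-\rho$ one coordinate is chosen uniformly and flipped. The proposal kernel is therefore symmetric: $q(\mathbf{b},\tilde{\mathbf{b}})=q(\tilde{\mathbf{b}},\mathbf{b})=(1-\rho)/\mathbf{m}_{2:K,N}$ for neighbors, and zero for other distinct pairs.

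For detailed balance, take neighbors $\mathbf{b}\neq\tilde{\mathbf{b}}$. The transition probability is
\[
P(\mathbf{b},\tilde{\mathbf{b}})=\frac{1-\rho}{\mathbf{m}_{2:K,N}}\min\left\{1,\frac{\ell^{css}(\tilde{\mathbf{b}};\theta_0)}{\ell^{css}(\mathbf{b};\theta_0)}\right\}.
\]
Multiplying by $\ell^{css}(\mathbf{b};\theta_0)$ gives $\frac{1-\rho}{\mathbf{m}_{2:K,N}}\min\{\ell^{css}(\mathbf{b};\theta_0),\ell^{css}(\tilde{\mathbf{b}};\theta_0)\}$, which is symmetric in $(\mathbf{b},\tilde{\mathbf{b}})$. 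Hence $\ell^{css}(\cdot;\theta_0)$ is reversible and therefore stationary. Here $\ell^{css}(\cdot;\theta_0)$ denotes the conditional law on $\mathbb{B}_{K}\left(\mathbf{d}^{3}\right)$, i.e.\ the ratio \eqref{eq: conditional_likelihood_ss_v1} evaluated at each $\mathbf{b}^3$. Two facts make this legitimate:
\begin{itemize}
\item $\mathbb{B}_{K}(\mathbf{b}^3)=\mathbb{B}_{K}(\mathbf{d}^3)$ for every $\mathbf{b}^3$ in the set, so the normalizing denominator is common to all states.
\item By Theorem \ref{thm: condition-star-system-likelihood}, this law is free of $\mathbf{A},\pi$ and strictly positive. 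The ratio therefore equals $\exp(\theta_0'w)$ and is computable.
\end{itemize}

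Irreducibility follows because every $\ell^{css}>0$: each proposed flip is accepted with positive probability, and any two vertices of the hypercube are joined by at most $\mathbf{m}_{2:K,N}$ flips. Aperiodicity follows from the lazy step, since $P(\mathbf{b},\mathbf{b})\ge\rho>0$. A finite, irreducible, aperiodic chain has a unique stationary distribution and converges to it from any initial state, which gives the theorem. The degenerate case $\mathbf{m}_{2:K,N}=0$ is a point mass and is trivial.

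I expect the main obstacle to be the bookkeeping in the second paragraph. I must confirm that the proposal really is a symmetric flip on a product set, with swaps of distinct systems commuting and never leaving $\mathbb{B}_{K}\left(\mathbf{d}^{3}\right)$, so that the Metropolis ratio uses the correct, common normalization. This rests on the separation results. Once it is settled, the rest is the standard Metropolis--Hastings argument.
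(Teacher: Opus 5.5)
Your proposal is correct and is essentially the paper's proof. Both treat Algorithm~\ref{alg: markov_draw} as a finite Metropolis--Hastings chain on $\mathbb{B}_{K}\left(\mathbf{d}^{3}\right)$ and verify detailed balance for $\ell^{css}\left(\cdot;\theta_{0}\right)$. Both get irreducibility from strictly positive acceptance probabilities and aperiodicity from the lazy self-loop. One small correction: the paper uses Theorem 7.10 of \cite{Mitzenmacher_Upfal_PC05} as the detailed-balance criterion, not as the convergence theorem you cite it for.

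Your hypercube encoding and the invariance $\mathbb{B}_{K}\left(\mathbf{b}^{3}\right)=\mathbb{B}_{K}\left(\mathbf{d}^{3}\right)$, which the paper uses implicitly, are valid. Whether a star system is identifying depends only on which dyads are movers and on the period-$1$ values of non-mover dyads, and no swap changes either.
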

A proof of Theorem \ref{thm: markov_draw} can be found in Appendix \ref{app: markov_draw_proof}. The `large enough' proviso is a limitation. However, based upon related work, we expect the mixing properties of this MCMC chain to be quite good. A rule-of-thumb is to run the chain until the edges of each identifying star system are swapped at least two times, viewing further draws as from the target stationary distribution.

Let $\left(\mathbf{A},\pi\right)\in\triangle$ denote the nuisance
parameter space and $\gamma=\left(\theta,\mathbf{A},\pi\right)$ the
full model parameter. Define the set $\Gamma_{0}=\left\{ \left(\theta,\mathbf{A},\pi\right)\ :\ \theta=\theta_{0},\left(\mathbf{A},\pi\right)\in\triangle\right\} $.
Our goal is to construct a confidence set by inverting a two-sided
test of the composite null
\begin{equation}
H_{0}\ :\ \gamma\in\Gamma_{0}.\label{eq: null_hypothesis}
\end{equation}
Under \eqref{eq: null_hypothesis} the structural parameter is fixed
at $\theta=\theta_{0}$, while the nuisance parameter $\left(\mathbf{A},\pi\right)$
is allowed to range freely over $\triangle$; hence the `composite null'
nomenclature. See \cite{Pelican_Graham_ReStud2026} for more discussion in a related context.

Let $v\left(\mathbf{d}^{3}\right)$ be some test statistic. A starting point might be a count of the number of links persisting for two periods \emph{plus} the number of two-stars `closing' into triangles the following period:
\[v\left(\mathbf{d}^{3}\right)=\sum_{i<j}\sum_{t=1}^{3}d_{ijt-1}d_{ijt}+\sum_{i<j}\sum_{t=1}^{3}r_{ijt-1}d_{ijt}.
\]
Under $H_{0}$ the conditional distribution of $v\left(\mathbf{D}^{3}\right)$
given $\mathbf{D}^{3}\in\mathbb{B}_{K}\left(\mathbf{d}^{3}\right)$
equals
\[
F_{\left.v\left(\mathbf{D}^{3}\right)\right|\mathbf{D}^{3}\in\mathbb{B}_{K}\left(\mathbf{d}^{3}\right)}\left(\left.c\right|H_{0}\right)=\sum_{\mathbf{b}^{3}\in\mathbb{B}_{K}\left(\mathbf{d}^{3}\right)}\mathbf{1}\left(v\left(\mathbf{b}^{3}\right)\leq c\right)\cdot\ell^{css}\left(\mathbf{b}^{3};\theta_{0},\mathbf{A},\pi\right),
\]
which can be approximated by
\begin{equation}
\hat{F}_{\left.v\left(\mathbf{D}^{3}\right)\right|\mathbf{D}^{3}\in\mathbb{B}_{K}\left(\mathbf{d}^{3}\right)}\left(\left.c\right|H_{0}\right)=\frac{1}{S}\sum_{s=1}^{S}\mathbf{1}\left(v\left(\mathbf{B}^{3,\left(s\right)}\right)\leq c\right),\label{eq: test_statistic_null distribution}
\end{equation}
where $s=1,\ldots,S$ indexes draws from the ergodic distribution
of the Markov chain defined by Algorithm \ref{alg: markov_draw}. If, for a given $\theta=\theta_{0}$,
$v\left(\mathbf{D}^{3}\right)$ lies between the $\alpha/2$ and $1-\alpha/2$
quantiles of distribution \eqref{eq: test_statistic_null distribution},
then we can conclude that $\theta_{0}$ lies within a $1-\alpha$
confidence set for $\theta$. By repeating this process for an appropriately
chosen grid of $\theta_{0}$ values we can compute the boundaries
of this confidence set.

Observe that the coverage of this confidence set \textit{exactly equals}  $1-\alpha$
(modulo simulation/numerical approximation error which can be minimized
with sufficient compute).\footnote{There is also a small amount of coverage distortion induced by the
fact that $v\left(\mathbf{D}^{3}\right)$ is a discrete random variable
with finite support. This can be formally resolved by inverting a
randomized test. However, this issue is practically irrelevant when
$N$ is modestly large.} Invariance of the conditional distribution, to where $\left(\mathbf{A},\pi\right)$ in $\triangle$ lies, ensures that the confidence set is \textit{similar} (in the sense that its coverage is exactly $1-\alpha$ irrespective of the population value of the nuisance parameter). We leave a more careful exploration of exact inference on $\theta_0$ to future work.

\subsection*{Asymptotic inference on $\alpha$ and $\beta$}

In this section we study the properties of the maximizer, $\hat\theta$, of the conditional star system log-likelihood (see Theorem \ref{thm: condition-star-system-likelihood} above). Specifically, we consider the approximate repeated sampling properties of $\hat\theta$. The econometrician observes a single network sequence, $\mathbf{D}^3=\mathbf{d}^3$. We consider the behavior of $\hat\theta$ across repeated draws of $\mathbf{D}^3$. Our asymptotic approximation is with respect to the number of agents in these single network sequence draws (i.e., $N \rightarrow \infty)$.

\subsubsection*{Notation and setup}
In what follows, it will be convenient to iterate over all star systems, whether identifying or not. We use $\mathbf{i}$ to index a generic star system (of order $p=2,\ldots,K$) and, also, in an abuse of notation the dyad set which defines it. So that $\mathbf{i}$ can denote both a counting number and a set of the form $\mathbf{i}=\left\{i_1 i_2,i_1 i_3,\ldots,i_1 i_p\right\}$. We further use $v\left(\mathbf{i}\right)$ to denote the vertex set $\left\{i_1, \ldots i_p \right\}$.

Observe that there are $\mathbf{n}_p\overset{def}{\equiv}\binom{N}{p} \times p$ star systems of size $p$. To keep track of whether a star system is identifying, we define the binary indicator variable $Z_{p\mathbf{i}}$, which equals $1$ if the $\mathbf{i}^{th}$ star system of size $p$ is identifying and $0$ otherwise. Define $\mathbf{Z}:=\left(Z_{p\mathbf{i}}\right)_{2\leq p \leq K, \mathbf{i}=\mathbf{1},\ldots,\mathbf{n}_p}$.

Recall that $Q_{ij}=\left(D_{ij0},D_{ij1},D_{ij2},D_{ij3},R_{ij0},R_{ij1},R_{ij2}\right)'$.
Next define the tuples
\[Z_{p\mathbf{i}}\mathbf{Q}_{p\mathbf{i}}:=Z_{p\mathbf{i}} \cdot \left(Q_{ij}\ :\  i \in v\left(\mathbf{i}\right), j \in \mathcal{V} - \cup_{q=2}^{K}\cup_{\mathbf{k}\in\left\{ \mathbf{k}=\mathbf{1},\ldots,\mathbf{n}_{p}\ :\ Z_{q\mathbf{k}}=1\right\} }v\left(\mathbf{k}\right)+v\left(\mathbf{i}\right)-i\right).
\]
Here we multiply by $Z_{p\mathbf{i}}$ to denote the conditional selection (or filtering) of those $Q_{ij}$ associated with stable star system $\mathbf{i}$.  The vertex set $\mathcal{V} - \left\{\cup_{q=2}^{K}\cup_{\mathbf{k}\in\left\{ \mathbf{k}=\mathbf{1},\ldots,\mathbf{n}_{p}\ :\ Z_{q\mathbf{k}}=1\right\} }v\left(\mathbf{k}\right)+v\left(\mathbf{i}\right)-i\right\}$ includes all agents $j$ not in the vertex set of stable star systems \emph{other than} $\mathbf{i}$. Hence $Z_{p\mathbf{i}}\mathbf{Q}_{p\mathbf{i}}$ returns all $Q_{ij}$ where at least one of $i$ or $j$ is in $v\left(\mathbf{i}\right)$, $\mathbf{i}$ is identifying, and every $Q_{ij}$ appears at most once in $Z_{p\mathbf{i}}\mathbf{Q}_{p\mathbf{i}}$. 

At the heart of our large sample theory is the following conditional independence result:

\begin{lem}
\label{lem: conditional-independence}\textsc{(Conditional Independence)} The tuples $Z_{p\mathbf{i}}\mathbf{Q}_{p\mathbf{i}}$  are independent  across indices $\mathbf{i}\in \{i_1i_2,\dots, i_1i_q: q\leq K \}$ conditional on  $\mathbf{D}_0,\mathbf{D}_3,\mathbf{Z}:=(\mathbf Z_{q,\mathbf{i}})_{\mathbf{i}, q\leq K}$ and $\mathbf{A}$.
\end{lem}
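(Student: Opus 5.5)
The plan is to compute the joint conditional law of $(\mathbf{D}_1,\mathbf{D}_2)$ given $(\mathbf{D}_0,\mathbf{D}_3,\mathbf{A})$ and show that, once we also condition on $\mathbf{Z}$, it factors into a product of terms. Each term depends only on the period 1 and 2 edges of one identifying star system together with its neighborhood, plus a common factor that is fixed by the conditioning. Since the tuples $Z_{p\mathbf{i}}\mathbf{Q}_{p\mathbf{i}}$ are measurable functions of these edges and of the conditioning variables, independence of the tuples follows.

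\textbf{Step 1: the Markov likelihood.} From \eqref{eq: likelihood_chamberlain}, conditional on $\mathbf{A}$ and $\mathbf{D}_0$, the probability of $(\mathbf{d}_1,\mathbf{d}_2,\mathbf{d}_3)$ is a product over dyads and $t=1,2,3$ of logistic terms. Dividing by the marginal probability of $\mathbf{D}_3=\mathbf{d}_3$ gives the conditional law of $(\mathbf{D}_1,\mathbf{D}_2)$, which is proportional to the same product.

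\textbf{Step 2: the event $\mathbf{Z}=\mathbf{z}$.} Conditioning on $\mathbf{Z}=\mathbf{z}$ restricts $(\mathbf{d}_1,\mathbf{d}_2)$ to a set $\mathcal{C}(\mathbf{z})$. I plan to describe $\mathcal{C}(\mathbf{z})$ as a product set, in three parts:
\begin{itemize}
\item for each identifying system $\mathbf{i}$, the star dyads are movers, with an orientation vector that is free subject to constraints internal to the system;
\item all dyads touching $v(\mathbf{i})$ outside the star, together with the whole period-1 neighborhood $n_1(\mathbf{i})$, are stable ($d_{\cdot\cdot1}=d_{\cdot\cdot2}$);
\item the remaining dyads satisfy constraints ensuring that no other star system is identifying.
\end{itemize}
The difficulty is that neighborhoods of different systems can overlap (agent 7 in Figure \ref{fig: fig_css_likelihood_example_4}). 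However, the overlapping agents and their links are forced to be stable in periods 1 and 2 by Definition \ref{def: stable-neighborhood}. I would therefore treat the stable links $d_{jl1}=d_{jl2}$ for $j\in\cup_{\mathbf{i}} n_1(\mathbf{i})$ as shared common values and condition on them jointly. More precisely, I will show that, given $\mathbf{Z}$, the only randomness entering the tuples is the star-edge orientations and, through Lemma \ref{lem: invariance-lemma}, quantities determined by them.

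\textbf{Step 3: factorization.} Write the product of logistic terms and group the dyads using partition \eqref{eq: dyad_partitioning}. For dyads in $\bar{\mathcal{S}}_{p\mathbf{i}}$, the factors involving $d_{ij1},d_{ij2},r_{ij0},r_{ij1},r_{ij2}$ depend only on the system-$\mathbf{i}$ edges and on stable neighborhood edges. This uses Lemma \ref{lem: permutation-lemma}(C) and Lemma \ref{lem: invariance-lemma}: cross-system dyads have $D=0$ in periods 1 and 2 and $R_{ij1}=R_{ij2}$ not depending on either system's orientation. The dyads in $\mathcal{D}^{pd}_{k,l}$ and $\mathcal{D}^{od}_K$ then contribute factors that are either free of the star orientations or invariant under swaps, as in the proof of Theorem \ref{thm: condition-star-system-likelihood}. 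The conditional law therefore becomes $\prod_{\mathbf{i}} g_{\mathbf{i}}(\text{system-}\mathbf{i}\text{ variables})\times h(\text{rest})$. This gives independence of the $\mathbf{i}$-blocks, and hence of $Z_{p\mathbf{i}}\mathbf{Q}_{p\mathbf{i}}$, each of which is a function of its block and of the conditioning variables. For non-identifying $\mathbf{i}$ the tuple is identically zero, which is trivially independent of everything.

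\textbf{Main obstacle.} The hard part is making Step 2 rigorous. I must show that the event $\{\mathbf{Z}=\mathbf{z}\}$, together with the shared stable neighborhood values, really is a product set across systems. The concern is that flipping the orientation of one system cannot create or destroy identifiability of another, nor change any $Q_{kl}$ that belongs to a different tuple. I will try Lemma \ref{lem: separation-nodes-lemma} and Corollary \ref{cor: separation-dyads-corro} first. If they are insufficient, I will strengthen the conditioning set to include the stable period-1 and period-2 links outside all star dyads, which leaves only the orientations random.
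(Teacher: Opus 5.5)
Your Step 3 fails, and it cannot be rescued under the conditioning set in the statement. Given only $(\mathbf{D}_0,\mathbf{D}_3,\mathbf{Z},\mathbf{A})$, the non-spoke period-1/period-2 links touching the systems are still random. Some factors you put into $h(\text{rest})$ depend jointly on such links from two different tuples. Being invariant under swaps, which is all the proof of Theorem \ref{thm: condition-star-system-likelihood} uses, is not the same as being free of the tuple variables.

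Concretely, let $N=5$, let $\mathbf{i}=\{12\}$ and $\mathbf{i}'=\{34\}$ be identifying 2-stars, and let $5$ be an outsider. On $\{\mathbf{Z}=\mathbf{z}\}$ the free variables are $x=D_{121}$, $x'=D_{341}$ and $y_a=D_{a51}=D_{a52}$ for $a=1,\ldots,4$. (Lemma \ref{lem: invariance-lemma} forces $D_{ab1}=D_{ab2}=0$ for $a\in\{1,2\}$, $b\in\{3,4\}$.) All $2^6$ values are admissible, so here even your Step 2 holds. Yet $R_{131}=R_{132}=y_1y_3$, so with $d_{133}=0$ dyad $13$ contributes $[1-\Lambda(\beta_0y_1y_3+A_{13})]^2$. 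This is not of the form $u(y_1)v(y_3)$ when $\beta_0\neq0$. On the slice $y_2=y_4=0$ every other factor involves only one tuple's variables, so the conditional pmf cannot factor. Since $\mathbf{Q}_{2\mathbf{i}}$ is a function of, and reveals, $(x,y_1,y_2)$, and likewise $\mathbf{Q}_{2\mathbf{i}'}$ and $(x',y_3,y_4)$, the two tuples are conditionally dependent.

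In richer configurations the event $\{\mathbf{Z}=\mathbf{z}\}$ also couples systems: a shared neighbor's other links must be stable as soon as $y_1=1$ or $y_3=1$. So Step 2 fails as well. The lemma, read literally, is false for $\beta_0\neq0$, and no use of Lemma \ref{lem: separation-nodes-lemma} or Corollary \ref{cor: separation-dyads-corro} will close the gap.

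Your fallback is the right repair, and it is what the paper's argument actually establishes. The paper obtains $\prod_{p,\mathbf{i}}\Lambda(T_{p\mathbf{i}}'\theta)^{Z_{p\mathbf{i}}}$ for the conditional pmf by citing Theorem \ref{thm: condition-star-system-likelihood}. But that product is the probability conditional on $\mathbf{D}^3\in\mathbb{B}_K(\mathbf{d}^3)$, which fixes every non-spoke period-1/period-2 link. (The footnote in the consistency proof treats this event and the value of $\mathbf{Z}$ as carrying the same information; they do not.) So state the lemma with the enlarged conditioning and verify three things.
\begin{itemize}
\item[(a)] Re-orienting the spokes of identifying systems leaves $\mathbf{Z}$ unchanged. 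Mover/stable comparisons are preserved, and a candidate system's period-1 neighborhood can change only through a spoke with exactly one endpoint in it, in which case Definition \ref{def: stable-neighborhood}(i) already fails. Hence the conditioning event is a product over systems.
\item[(b)] Every likelihood factor depends on at most one system's orientation. By Lemma \ref{lem: invariance-lemma}(A) cross-system factors are constant, and outsider--outsider factors involve no spokes.
\item[(c)] Each tuple is a function of its own system's orientation and the conditioning variables.
\end{itemize}
If you condition on $\mathbb{B}_K(\mathbf{d}^3)$ exactly, each system carries a single swap bit. The likelihood-ratio computation in the proof of Theorem \ref{thm: condition-star-system-likelihood} is a product over systems, so the bits are independent. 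If you fix only the non-spoke links, spokes within a system can be oriented separately. Then Lemma \ref{lem: permutation-lemma}, which covers only all-or-none swaps, does not apply directly, and (b) must be checked by hand. Either way, the downstream consistency and normality arguments must then condition on this enlarged $\sigma$-field rather than on $(\mathbf{D}_0,\mathbf{D}_3,\mathbf{Z},\mathbf{A})$.
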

A proof is in Appendix \ref{app: conditional-independence_proof}.

Define the normalized version of the conditional star system log-likelihood:
\begin{align}
\hat{Q}_N\left(\theta\right)&=\frac{1}{\mathbb{E}\left[\sum_{p,\mathbf{i}}Z_{p\mathbf{i}}\right]}\ln\left[\prod_{p=2}^K \prod_{\mathbf{i}=\mathbf{1}}^{\mathbf{m}_{p,N}} \left[1+\prod_{ij\in \bar{\mathcal{S}}_{p\mathbf{i}}} b(Q_{ij};\theta)\right]^{-1}\right] \notag \\
&=\frac{1}{\mathbb{E}\left[\sum_{p,\mathbf{i}}Z_{p\mathbf{i}}\right]}\sum_{p=2}^K \sum_{\mathbf{i}=\left\{i_1 i_2,\ldots,i_1 i_p\right\}} Z_{p\mathbf{i}} \cdot \ln\left[\frac{\exp(T_{p\mathbf{i}}'\theta)}{1+\exp(T_{p\mathbf{i}}'\theta)}\right], \label{eq: conditional_likelihood_ss_v3}
\end{align}
where the second equality follows from the notation defined immediately above. Our estimate coincides with $\hat\theta = \arg \underset{\theta}{\max} \ \hat Q_N\left(\theta\right)$. By Lemma \ref{lem: conditional-independence} the summands of $\hat Q_N\left(\theta\right)$ are conditionally independent of one another. However the number of these summands which are non-zero is a random variable; equal to the number of identifying star systems in $\mathbf{D}^3$. As long as the number of such systems, $\sum_{p,\mathbf{i}}Z_{p\mathbf{i}}$, grows as $N \rightarrow \infty$, the conditional independence structure established by Lemma \ref{lem: conditional-independence} is sufficient, under some additional regularity conditions, to formally establish consistency and asymptotic normality of $\hat\theta$.

Our consistency result is based on the following four assumptions.
\begin{assumption}\label{ass: parameter_space}\textsc{(Parameter Space)} $\theta_0 \in \mathrm{int}\left(\Theta\right)$, a closed ball in $\mathbb{R}^2$.
\end{assumption}

\begin{assumption}\label{ass: rank}\textsc{(Rank)} There exists a constant $C_1>0$ such that for $N$ large enough:
\[\lambda_{\min}\left(\frac{1}{\mathbb{E}\left[ \sum_{p,\mathbf{i}}  Z_{p\mathbf{i}} \right]} \mathbb{E}\left[\left.\sum_{p,\mathbf{i}}  Z_{p\mathbf{i}} T_{p\mathbf{i}}T_{p\mathbf{i}}' \right| \mathbf{D}_0,\mathbf{D}_3,\mathbf{Z}\right ]\right)\geq C_1.
\]
\end{assumption}

\begin{assumption}\label{ass: boundedness}\textsc{(Boundedness)} There exists   $C_2>0$ such that for $N$ large enough: 
\[\frac{1}{\mathbb{E}\left[\sum_{p,\mathbf{i}}  Z_{p\mathbf{i}}  \right]}\sum_{p,\mathbf{i}}  Z_{p\mathbf{i}} \cdot \mathbb{E}\left[\left. \| T_{p\mathbf{i}}\|^4\right|\mathbf{D}_0,\mathbf{D}_3, \mathbf{Z}\right]\leq C_2.
\]
\end{assumption}

\begin{assumption}\label{ass: identifying_events}\textsc{(Identifying Events)} The sequence $\mathbb{E}\left[\sum_{p,\mathbf{i}}Z_{p\mathbf{i}}\right]$ diverges to $+\infty$ as $N \rightarrow \infty$.
\end{assumption}

Assumption \ref{ass: identifying_events}, which is key, ensures that our criterion function includes a growing number of conditionally independent summands as $N \rightarrow \infty$. Assumption \ref{ass: parameter_space} is standard. Assumption \ref{ass: rank} is a rank condition, while Assumption \ref{ass: boundedness} restricts the first four moments of $T_{p\mathbf{i}}$. 

Observe that, since $T_{p\mathbf{i}}$ is a direct function of $\mathbf{D}^3$, Assumptions \ref{ass: rank} to Assumption \ref{ass: identifying_events} involve implicit restrictions on the distribution of $\mathbf{A}$ and the initial network condition $\pi$. For this reason, they are not straightforward to verify. Below we develop examples of data-generating processes where Assumptions \ref{ass: identifying_events} provably holds, and provably fails to hold. These examples gives some insight into what types of networks are likely to include a sufficient number of identifying star systems to support estimation and inference. 

Of course, for any particular network under actual consideration, the researcher necessarily counts the number of identifying star systems available and directly assesses whether they arise in sufficient numbers to justify estimation (and large sample inference). Since $\theta$ only includes two elements, and the conditional star system log-likelihood is globally concave, we expect our large sample theory for $\hat{\theta}$ to be quite accurate even if settings with relatively modest numbers of identifying star systems. This accords with the limited Monte Carlo evidence summarized below.

\begin{thm}{\label{thm: consistency}}\textsc{(Consistency)}
    Under Assumptions \ref{ass: parameter_space} to \ref{ass: identifying_events},
    $ \hat \theta \overset{p}{\rightarrow} \theta_0$.
\end{thm}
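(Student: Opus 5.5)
The plan is the standard argument for concave M-estimators (Newey and McFadden, Theorem 2.7). $\hat Q_N(\theta)$ is a sum of terms $Z_{p\mathbf{i}}\ln \Lambda(T_{p\mathbf{i}}'\theta)$ with $\Lambda$ the logistic CDF, so it is globally concave in $\theta$. Pointwise convergence of $\hat Q_N(\theta)-\hat Q_N(\theta_0)$ to a limit uniquely maximized at $\theta_0$ therefore suffices; for concave functions, pointwise convergence automatically upgrades to uniform convergence on the compact $\Theta$ of Assumption \ref{ass: parameter_space}. Throughout I work conditionally on $\mathcal{C}_N\overset{def}{\equiv}(\mathbf{D}_0,\mathbf{D}_3,\mathbf{Z},\mathbf{A})$. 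By Lemma \ref{lem: conditional-independence}, the summands are then independent across star systems. I expect to show, as in the proof of Theorem \ref{thm: condition-star-system-likelihood}, that each nonzero summand has conditional law given by a two-point conditional logit: $\Pr(T_{p\mathbf{i}}=t\mid\mathcal{C}_N,\{t,-t\})=\Lambda(t'\theta_0)$, since swapping the edges of $\mathcal{S}_{p\mathbf{i}}$ flips the sign of $T_{p\mathbf{i}}$.

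\textbf{Step 1 (centering).} Let $M_N=\mathbb{E}[\sum_{p,\mathbf{i}}Z_{p\mathbf{i}}]$ and define the conditional mean
\[
\bar Q_N(\theta)=M_N^{-1}\sum_{p,\mathbf{i}}Z_{p\mathbf{i}}\,\mathbb{E}[\ln\Lambda(T_{p\mathbf{i}}'\theta)\mid\mathcal{C}_N].
\]
I will show $\hat Q_N(\theta)-\bar Q_N(\theta)\overset{p}{\to}0$ for each $\theta$. Because the summands are conditionally independent, the conditional variance of this difference is at most
\[
M_N^{-2}\sum Z_{p\mathbf{i}}\,\mathbb{E}[(\ln\Lambda(T'\theta))^2\mid\mathcal{C}_N].
\]
Since $|\ln\Lambda(x)|\le \ln 2+|x|$, this is bounded by $M_N^{-2}\sum Z(\ln2+\|\theta\|\,\|T\|)^2$, which Assumption \ref{ass: boundedness} (via $\|T\|^2\le 1+\|T\|^4$) bounds by $C/M_N$. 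This goes to zero by Assumption \ref{ass: identifying_events}. Conditional Chebyshev plus dominated convergence then gives unconditional convergence in probability. A small gap remains: Assumption \ref{ass: boundedness} conditions on $(\mathbf{D}_0,\mathbf{D}_3,\mathbf{Z})$ rather than also on $\mathbf{A}$, so I will either iterate expectations or read the assumption as holding conditionally on $\mathbf{A}$ as well.

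\textbf{Step 2 (identification and curvature).} Under the two-point logit law, each conditional summand $\mathbb{E}[\ln\Lambda(T'\theta)\mid\mathcal{C}_N]$ is a binary-logit expected log-likelihood maximized at $\theta_0$. Its Hessian is $-\mathbb{E}[\Lambda(1-\Lambda)(T'\theta)\,TT'\mid\mathcal{C}_N]$. The key inequality I will establish is
\[
\bar Q_N(\theta)-\bar Q_N(\theta_0)\le -c\,\|\theta-\theta_0\|^2 \quad\text{for all }\theta\in\Theta,\ \text{with } c>0 \text{ independent of } N.
\]
To obtain it, I will bound $\Lambda(1-\Lambda)(T'\tilde\theta)$ below on the event $\{\|T\|\le L\}$. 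On that event, $\Lambda(1-\Lambda)(T'\tilde\theta)\ge \Lambda(1-\Lambda)(L\,\mathrm{diam}\,\Theta)>0$. The complementary tail $\{\|T\|>L\}$ I will control with Assumption \ref{ass: boundedness}: by Markov's inequality, its contribution to the second-moment matrix is at most $C_2/L^2$ in the normalized sum. Choosing $L$ so that $C_2/L^2<C_1/2$ and applying Assumption \ref{ass: rank} gives the uniform curvature. This is the \textbf{main obstacle}. The rank condition is stated for the unweighted second-moment matrix and conditions on different variables, so the truncation must be carried out carefully; I will also need to verify that conditioning on $\mathbf{A}$ does not disturb the rank bound, which I may have to assume.

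\textbf{Step 3 (conclusion).} Take $\varepsilon>0$ small enough that the ball $B_\varepsilon(\theta_0)\subset\Theta$. By concavity, the event $\|\hat\theta-\theta_0\|>\varepsilon$ implies
\[
\sup_{\|\theta-\theta_0\|=\varepsilon}\hat Q_N(\theta)-\hat Q_N(\theta_0)\ge0 .
\]
Uniform convergence on the sphere follows from pointwise convergence of concave functions (Rockafellar; Newey--Powell convexity lemma). Combined with Step 2, this shows the supremum above is at most $-c\varepsilon^2+o_p(1)$, so the event has probability tending to zero. Hence $\hat\theta\overset{p}{\to}\theta_0$.
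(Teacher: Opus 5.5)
Your proof is correct and shares the paper's skeleton. Your Step 1 is the paper's Step 1.a, and your curvature bound is its Step 2: concavity reduces the problem to the sphere $\|\theta-\theta_0\|=\varepsilon$, $\Lambda(1-\Lambda)$ is bounded below on a truncation event, fourth moments control the tail, and Assumption \ref{ass: rank} finishes.

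The real difference is how you get uniformity. The paper does not use concavity there. It bounds the Lipschitz constant of $\hat Q_N-\tilde Q_N$ by $B_N=M_N^{-1}\sum_{p,\mathbf{i}}Z_{p\mathbf{i}}\bigl(\|T_{p\mathbf{i}}\|+\mathbb{E}[\|T_{p\mathbf{i}}\|\mid\mathbf{D}_0,\mathbf{D}_3,\mathbf{Z},\mathbf{A}]\bigr)=O_p(1)$, using $|1-\Lambda|\le 1$, and then applies Newey's (1991) Corollary 2.2. You cite the convexity lemma instead. Its standard form (e.g.\ Newey--McFadden Theorem 2.7) needs pointwise convergence to a \emph{fixed} limit. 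Your centering function $\bar Q_N$ is random and $N$-dependent, and $\hat Q_N-\bar Q_N$ is not concave. The repair is routine:
\begin{itemize}
\item a concave function's Lipschitz constant near the sphere is controlled by its values at finitely many surrounding points;
\item $\bar Q_N(\theta)=O_p(1)$ pointwise, because $\mathbb{E}[M_N^{-1}\sum_{p,\mathbf{i}}Z_{p\mathbf{i}}]=1$ and Assumption \ref{ass: boundedness} holds;
\item hence $\hat Q_N$ and $\bar Q_N$ both have $O_p(1)$ Lipschitz constants there.
\end{itemize}
This only reconstructs the paper's $B_N=O_p(1)$ bound more laboriously, which the derivative argument gives directly.

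Your version of the curvature step is, if anything, cleaner than the paper's. Truncating on $\{\|T\|\le L\}$ with tail contribution at most $C_2/L^2$ is simpler than the paper's Markov--Cauchy--Schwarz bound on $\{|T'\bar\theta|\ge a\}$. It also avoids the paper's implicit need to bound $M_N^{-1}\sum_{p,\mathbf{i}}Z_{p\mathbf{i}}$ when it passes from fourth to second moments. One small correction: replace $L\,\mathrm{diam}\,\Theta$ by $L\sup_{\Theta}\|\theta\|$, since $|T'\tilde\theta|\le\|T\|\,\|\tilde\theta\|$ and the ball need not be centered at the origin.

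On the $\mathbf{A}$-conditioning point you flag, iterated expectations handle Step 1 but cannot give a pointwise lower bound like Assumption \ref{ass: rank}. The paper settles this only through a footnote in Step 1.b, asserting that the conditional law of $T_{p\mathbf{i}}$ given $(\mathbf{D}_0,\mathbf{D}_3,\mathbf{Z},\mathbf{A})$ is free of $\mathbf{A}$. That assertion yields exactly your reading of the assumptions as holding conditionally on $\mathbf{A}$.
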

A proof is available in Appendix \ref{app: main-results}.

Let $\Lambda\left(x\right)\overset{def}{\equiv}\frac{\exp(x)}{1+\exp(x)}$. Our demonstration of asymptotic normality requires the additional assumption:

\begin{assumption}\label{ass: hessian}\textsc{(Hessian)} 
The normalized (negative) Hessian $$\frac{1}{\sum_{p,\mathbf{i}} \mathbb{E}\left[Z_{p\mathbf{i}}\right]} \mathbb{E}\left[\sum_{p, \mathbf{i}} Z_{p\mathbf{i}}\cdot\Lambda(T_{p\mathbf{i}} '\theta_0)(1-\Lambda(T_{p\mathbf{i}} '\theta_0)) T_{p\mathbf{i}}T_{p\mathbf{i}}'\right]$$ converges to some positive definite  matrix $\mathcal{I}\left(\theta_0\right)$.
\end{assumption}

\begin{thm}{\label{thm: asymptotic-normality}}\textsc{(Asymptotic Normality)}
    Under Assumptions \ref{ass: parameter_space} to \ref{ass: hessian}:
    $$\sqrt{\mathbb{E}\left[\sum_{p,\mathbf{i}}Z_{p\mathbf{i}}\right]}\cdot \left(\hat{\theta}-\theta_0\right) \overset{D}{\rightarrow} \mathcal{N}\left(0, \mathcal{I}^{-1}\left(\theta_0\right)\right).$$
\end{thm}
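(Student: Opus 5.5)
We follow the standard M-estimation route for a smooth, globally concave criterion: a mean-value expansion of the score, a conditional central limit theorem for the score, and a law of large numbers for the Hessian. Let $M_N=\mathbb{E}\left[\sum_{p,\mathbf{i}}Z_{p\mathbf{i}}\right]$ and $\mathcal{C}=(\mathbf{D}_0,\mathbf{D}_3,\mathbf{Z},\mathbf{A})$.

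\textbf{Step 1 (expansion).} By Theorem \ref{thm: consistency} and Assumption \ref{ass: parameter_space}, with probability approaching one $\hat\theta$ is interior and solves $\nabla\hat Q_N(\hat\theta)=0$. Expanding around $\theta_0$ gives
\[
\sqrt{M_N}(\hat\theta-\theta_0)=-\bigl[\nabla^2\hat Q_N(\bar\theta)\bigr]^{-1}\sqrt{M_N}\,\nabla\hat Q_N(\theta_0),
\]
with $\bar\theta$ on the segment between $\hat\theta$ and $\theta_0$. The score is
\[
\nabla\hat Q_N(\theta_0)=M_N^{-1}\sum_{p,\mathbf{i}}Z_{p\mathbf{i}}\bigl(1-\Lambda(T_{p\mathbf{i}}'\theta_0)\bigr)T_{p\mathbf{i}}.
\]
The Hessian is $-M_N^{-1}\sum Z_{p\mathbf{i}}\Lambda(1-\Lambda)T_{p\mathbf{i}}T_{p\mathbf{i}}'$.

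\textbf{Step 2 (conditional mean-zero score).} Each summand $s_{p\mathbf{i}}=Z_{p\mathbf{i}}(1-\Lambda(T_{p\mathbf{i}}'\theta_0))T_{p\mathbf{i}}$ is the score of a single logit factor of Theorem \ref{thm: condition-star-system-likelihood}. Its conditional law given $\mathcal{C}$ is the two-point law on ``swapped versus not swapped'' for system $\mathbf{i}$, with probabilities $\Lambda(\pm T_{p\mathbf{i}}'\theta_0)$. Swapping flips the sign of $T_{p\mathbf{i}}$, because $b(q;\theta)$ inverts under the swap by Lemma \ref{lem: permutation-lemma}. Hence $\mathbb{E}[s_{p\mathbf{i}}\mid\mathcal{C},\text{swap-orbit of }\mathbf{i}]=0$, and the conditional variance equals $Z_{p\mathbf{i}}\Lambda(1-\Lambda)T_{p\mathbf{i}}T_{p\mathbf{i}}'$.

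\textbf{Step 3 (CLT for the score).} By Lemma \ref{lem: conditional-independence}, the $s_{p\mathbf{i}}$ are conditionally independent given $\mathcal{C}$. We apply the Lindeberg--Feller CLT conditionally, via the Cramér--Wold device, to $M_N^{-1/2}\sum s_{p\mathbf{i}}$. The Lyapunov condition follows from Assumption \ref{ass: boundedness}: since $\|s_{p\mathbf{i}}\|\le\|T_{p\mathbf{i}}\|$, the sum of fourth moments is $O(M_N)$, and dividing by $M_N^{2}$ sends it to zero by Assumption \ref{ass: identifying_events}. We then need the conditional variance $M_N^{-1}\sum Z\Lambda(1-\Lambda)TT'$ to converge to $\mathcal{I}(\theta_0)$. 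Assumption \ref{ass: hessian} delivers this in expectation, and the fourth-moment bound together with conditional independence gives a variance of order $1/M_N$, so the convergence also holds in probability. A dominated-convergence argument on conditional characteristic functions then yields the unconditional limit $\mathcal{N}(0,\mathcal{I}(\theta_0))$.

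\textbf{Step 4 (Hessian).} Because $\Lambda(1-\Lambda)$ is Lipschitz with constant $1/4$, we have
\[
\|\nabla^2\hat Q_N(\bar\theta)-\nabla^2\hat Q_N(\theta_0)\|\le C\,\|\bar\theta-\theta_0\|\,M_N^{-1}\sum Z_{p\mathbf{i}}\|T_{p\mathbf{i}}\|^3.
\]
The last average is $O_p(1)$ by Assumption \ref{ass: boundedness}, so this difference is $o_p(1)$ by consistency. Combined with Step 3's convergence of $-\nabla^2\hat Q_N(\theta_0)$ to $\mathcal{I}(\theta_0)$, Slutsky's theorem gives the result.

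\textbf{Main obstacle.} The delicate step is Step 3. The summands are independent only conditionally, their number $\sum Z_{p\mathbf{i}}$ is random, and the normalization uses $M_N$ rather than the realized count. The key task is to show that the realized normalized information concentrates around its expectation, so that the conditional Gaussian limit has a nonrandom variance. I would first try to establish this concentration through the second-moment bound described in Step 3. If that fails, I would instead establish a mixed-normal limit self-normalized by the realized information, and then invoke Assumption \ref{ass: hessian}.
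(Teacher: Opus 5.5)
\textbf{The gap: Step 3's concentration claim, which Step 4 also uses.} Your architecture matches the paper's: a Taylor expansion of the score, a CLT built on the conditional independence of Lemma \ref{lem: conditional-independence}, and a law of large numbers for the Hessian. Your two-point computation in Step 2 is a cleaner way to get the conditional mean-zero score and conditional information equality than the paper's appeal to the information-matrix equality. Put
\[
H_N=M_N^{-1}\sum_{p,\mathbf{i}}Z_{p\mathbf{i}}\Lambda(T_{p\mathbf{i}}'\theta_0)(1-\Lambda(T_{p\mathbf{i}}'\theta_0))T_{p\mathbf{i}}T_{p\mathbf{i}}'.
\]
Conditional independence given $\mathcal{C}$, together with Assumption \ref{ass: boundedness}, controls only the first term of $\mathbb{V}(H_N)=\mathbb{E}[\mathbb{V}(H_N\mid\mathcal{C})]+\mathbb{V}(\mathbb{E}[H_N\mid\mathcal{C}])$. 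It therefore gives only $H_N-\mathbb{E}[H_N\mid\mathcal{C}]\overset{p}{\rightarrow}0$.

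The second term is driven by how the random count $\sum Z_{p\mathbf{i}}$, the configuration of identifying systems, and the conditional law of the $T_{p\mathbf{i}}$ vary with $(\mathbf{D}_0,\mathbf{D}_3,\mathbf{Z},\mathbf{A})$. The summands are not independent across that source of randomness, and none of Assumptions \ref{ass: parameter_space}--\ref{ass: hessian} restricts it. Assumptions \ref{ass: rank} and \ref{ass: boundedness} at most bound $\mathbb{E}[H_N\mid\mathcal{C}]$ above and below, and Assumption \ref{ass: hessian} only pins down the limit of its unconditional mean.

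For instance, let a single common shock in the (unrestricted) law of $\mathbf{A}$ shift the density of identifying systems, say the radius $r$ of Example 2 drawn once from two values. Then $\mathbb{E}[H_N\mid\mathcal{C}]$ has a nondegenerate limit $\mathcal{I}_\infty$ with $\mathbb{E}[\mathcal{I}_\infty]=\mathcal{I}(\theta_0)$. In that case $\sqrt{M_N}(\hat\theta-\theta_0)$ is asymptotically a scale mixture of normals with variance $\mathbb{E}[\mathcal{I}_\infty^{-1}]$, not $\mathcal{I}(\theta_0)^{-1}$. Your fallback does not close this: a self-normalized limit plus Assumption \ref{ass: hessian} cannot turn a random limiting information into a constant.

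\textbf{The paper has the same hole.} This is not a trick you missed. In its step (c), the paper shows only that $\sum_i\mathbb{E}[W_i^2\mid\mathcal{F}_{N,i-1}]$ is tight; this is the ratio of the conditional (given $\mathcal{G}_N$) to the unconditional score variance. Hall and Heyde's Theorem 2.23 then yields $\sum_iW_i^2-\sum_i\mathbb{E}[W_i^2\mid\mathcal{F}_{N,i-1}]\overset{p}{\rightarrow}0$, not $\sum_iW_i^2\overset{p}{\rightarrow}1$. Its Hessian step likewise cites Step 1.a of the consistency proof, which centers at the conditional mean given $\mathcal{C}$.

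\textbf{What closes it.} Both arguments need an extra condition, such as $\mathbb{E}[H_N\mid\mathcal{C}]\overset{p}{\rightarrow}\mathcal{I}(\theta_0)$ (a conditional version of Assumption \ref{ass: hessian}). With it, your Steps 1--4 go through. Your conditional Lindeberg--Feller plus characteristic-function argument is then an equivalent alternative to the paper's McLeish martingale CLT with $\mathcal{F}_{N,0}=\mathcal{G}_N$.

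\textbf{What you can prove without it.} Under Assumptions \ref{ass: parameter_space}--\ref{ass: identifying_events} alone you get a studentized result. By your Step 2, conditional on $\mathcal{C}$ and the swap orbits, the $s_{p\mathbf{i}}$ are independent and their conditional variances sum exactly to $M_NH_N$. Assumptions \ref{ass: rank}--\ref{ass: boundedness} keep $H_N$ nonsingular with probability approaching one. Hence $(M_NH_N)^{1/2}(\hat\theta-\theta_0)$ converges to a standard bivariate normal. That suffices for Wald inference, but it is weaker than the theorem as stated.
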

A complete proof is available in Appendix \ref{app: main-results}. As expected, the rate-of-convergence depends on the (expected) number of identifying star systems; corresponding to the ``effective sample size" in our setting. The examples we develop in the next subsection indicate that the behavior of $\mathbb{E}\left[\sum_{p,\mathbf{i}}Z_{p\mathbf{i}}\right]$ as $N \rightarrow \infty$  varies with the distribution of $\mathbf{A}$ as well as the initial network condition, $\pi$. Hence, while our main identification result, Theorem \ref{thm: condition-star-system-likelihood} above, holds for \emph{any} $\mathbf{A}$ and $\pi$, the precision with which we can learn $\theta$ is not invariant to these nuisance parameters. Indeed, for some distributions of heterogeneity, consistent estimation is not possible. By Corollary \ref{cor: separation-dyads-corro}, the number of identifying star systems is, at most, of order $N$, suggesting a maximal rate-of-convergence of $\sqrt{N}$.

\subsection*{Single network estimation: examples}

In this subsection, by means of two examples, we explore conditions under which the number of identifying systems diverges as $N$ grows large (Assumption \ref{ass: identifying_events}). Such divergence is required for consistency and asymptotic normality. Whether Assumption \ref{ass: identifying_events} holds depends on the model's parameters, the distribution of the fixed effects, and the initial network. 

We study two models. In the first, no identifying star systems are observed (on average) when $N$ is large. In the second, the expected number of identifying star systems grows linearly with $N$ (the maximal rate). These two examples provide some indication of the positive possibilities associated with our method, as well as its limits.

\textbf{Example 1.} \textsc{(Dense Initial Graph and Compactly Supported Fixed Effects)} Suppose:
\begin{itemize}
    \item  the fixed effects $A_{ij}$ have bounded support, say $|A_{ij}|\leq \Bar{A}$ for all $i,j$ almost surely.
    \item $D_0$ is an Erdos-Renyi random graph with edge probability $p\in (0,1)$.
\end{itemize}

Under this DGP, for any $i$ and $j$, $R_{ij0} \sim (N-2)p^2$. When $\beta_0 > 0$, this implies that with high probability, $-|\alpha_0| + \beta_0 R_{ij0} - \bar{A} \gg U_{ij1}$, such that $\mathbf{D}_1$ approaches a complete graph. Consequently, $\mathbf{D}_2$ and $\mathbf{D}_3$ also become ``nearly complete''. In such dense graph sequences, very few edges transition between periods 1 and 2, making condition (ii) of Definition \ref{def: identifying-star-system} difficult to satisfy.\\

Conversely, when $\beta_0 < 0$, we show that with high probability, $\mathbf{D}_1=0$. Therefore, conditional on the event that $\mathbf{D}_1=0$,  for a given star system to be  identifying it must be that: \textit{i)} the nodes involved in the star become connected with the central node in the second period, \textit{ii)} these nodes remain disconnected with the rest of the network and \textit{iii)} the non central nodes remain disconnected from each other. We show that \textit{i)}, \textit{ii)} and \textit{iii)} only happen with a vanishing probability, which shows that identifying stars form with small probability.  Specifically, Proposition \ref{prop:ER} demonstrates that $\sum_K \mathbf{m}_{K,N}$, the total number of identifying stars, vanishes as $N$ grows. The complete proof is provided in Appendix \ref{proof:ER}.

\begin{prop}\label{prop:ER}
    Under the DGP of Example 1:
 $$ E\left(\sum_K \mathbf{m}_{K,N}\right)\rightarrow_{N\rightarrow \infty} 0.$$
\end{prop}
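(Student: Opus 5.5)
Write $M_N=\sum_{p=2}^K \mathbf{m}_{p,N}$. Since each identifying system has at least two dyads with $d_{ij1}\neq d_{ij2}$, it is enough to bound the expected number of star systems that are identifying. By linearity,
\[
\mathbb{E}[M_N]=\sum_{p=2}^K \mathbf{n}_p \Pr(Z_{p\mathbf{i}}=1),
\qquad \mathbf{n}_p=p\binom{N}{p}=O(N^p).
\]
The task is therefore to show $\Pr(Z_{p\mathbf{i}}=1)=o(N^{-p})$ for each fixed $p\le K$. In fact I aim for an exponentially small bound $e^{-cN}$, which kills any polynomial factor. Everything is conditional on $\mathbf{A}$, using only $|A_{ij}|\le\bar A$, and then integrated. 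I split into the cases $\beta_0>0$ and $\beta_0<0$ (the case $\beta_0=0$ is treated separately below).

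\textbf{Case $\beta_0>0$.} First, $R_{ij0}\sim\mathrm{Bin}(N-2,p^2)$ for each dyad. By Hoeffding and a union bound over the $\binom N2$ dyads, the event $E_0=\{\min_{ij}R_{ij0}\ge (N-2)p^2/2\}$ has probability at least $1-N^2e^{-cN}$. On $E_0$, each period-1 link probability is at least $\Lambda(\beta_0(N-2)p^2/2-|\alpha_0|-\bar A)\ge 1-e^{-cN}$. Conditional on $\mathbf{D}_0$ and $\mathbf{A}$, period-1 links are independent, so $\mathbf{D}_1$ is complete except for a $\mathrm{Bin}(\binom N2,e^{-cN})$ number of missing edges. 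Iterating once more, $R_{ij1}\ge N/2$ for all dyads with overwhelming probability, so $\mathbf{D}_2$ is also complete w.h.p. An identifying system requires some dyad with $d_{ij1}\neq d_{ij2}$, so by a union bound
\[
\Pr(Z_{p\mathbf{i}}=1)\le \Pr(\text{some dyad of }\mathbf{i}\text{ missing in period 1 or 2})+o(e^{-cN})\le 2(p-1)e^{-cN}+o(e^{-cN}).
\]
Multiplying by $N^p$ still gives $o(1)$. A slightly cleaner route is to bound $\mathbb{E}[M_N]\le N/2\cdot \Pr(\exists\, \text{dyad with } d_{ij1}\neq d_{ij2})$ using Corollary \ref{cor: separation-dyads-corro}, which needs only $\Pr(\mathbf{D}_1\neq\mathbf{D}_2)=o(1/N)$.

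\textbf{Case $\beta_0<0$.} The same computation gives $\mathbf{D}_1=0$ with probability $1-N^2e^{-cN}$. Then $R_{ij1}=0$ for all dyads, so period-2 links are independent with probabilities $\Lambda(A_{ij})\in[\Lambda(-\bar A),\Lambda(\bar A)]$, which are bounded away from $0$ and $1$. On $\{\mathbf{D}_1=0\}$, identification of $\mathbf{i}$ requires that every leaf links to the center in period 2 (probability at most $1$) and that no node of $v(\mathbf{i})$ links to any of the $N-p$ outside nodes in period 2, by Definition \ref{def: stable-neighborhood}(i). The latter has probability at most $(1-\Lambda(-\bar A))^{p(N-p)}$. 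These $p(N-p)$ period-2 dyads are conditionally independent given $\mathbf{D}_1,\mathbf{A}$, so this is exponentially small. Hence $\Pr(Z_{p\mathbf{i}}=1)\le e^{-cN}+N^2e^{-cN}$, and $N^p$ times this tends to zero.

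\textbf{Remaining case and main obstacle.} If $\beta_0=0$, the Case 2 argument still applies with $\mathbf{D}_1$ replaced by an arbitrary $\mathbf{D}_1$. Period-2 link probabilities are bounded in $[\Lambda(-|\alpha_0|-\bar A),\Lambda(|\alpha_0|+\bar A)]$ independently across dyads given $\mathbf{D}_1$. Stability forces each of the $\gtrsim N$ outside dyads at a vertex of $v(\mathbf{i})$ to replicate its period-1 value, which has probability bounded by some $\kappa^{N-p}$ with $\kappa<1$. The step I expect to need the most care is the conditioning in Case 1. Period-2 links depend on the random $R_{ij1}$, so the "w.h.p. complete" claim has to be chained through a good event for $\mathbf{D}_1$ while keeping the union bound over $O(N^2)$ dyads. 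The uniform lower bound supplied by $|A_{ij}|\le\bar A$ is exactly what makes the exponential tails uniform across dyads.
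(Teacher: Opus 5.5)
Your argument only controls star systems of bounded size. You write $M_N=\sum_{p=2}^{K}\mathbf{m}_{p,N}$, reduce to showing $\Pr(Z_{p\mathbf{i}}=1)=o(N^{-p})$ ``for each fixed $p\le K$'', and multiply by $\mathbf{n}_p=O(N^p)$. The proposition, however, sums over all sizes $K=2,\ldots,N-1$. The text calls it ``the total number of identifying stars'', and the paper proves it in that form. That is a sum over roughly $N2^{N-1}$ candidate systems. A per-system bound $e^{-cN}$ that does not improve with system size is therefore useless unless $c>\ln 2$, and nothing guarantees this, since $c$ scales with $|\beta_0|$ and the Erd\H{o}s--R\'enyi edge probability.

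For $\beta_0>0$ your ``cleaner route'' does handle all sizes at once. By Corollary~\ref{cor: separation-dyads-corro}, $M_N\le N/2$; $M_N>0$ forces $\mathbf{D}_1\neq\mathbf{D}_2$; and $\Pr(\mathbf{D}_1\neq\mathbf{D}_2)=O(N^2e^{-cN})$. This is a correct and simpler alternative to the paper's per-system bound $C^{K-1}e^{-\gamma_3(K-1)N}$ summed over $K$. The route is unavailable when $\beta_0\le 0$, because $\mathbf{D}_2$ is then a nondegenerate random graph and $\mathbf{D}_1\neq\mathbf{D}_2$ with high probability.

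For $\beta_0<0$ you need two repairs, both as in the paper.

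First, do not add $\Pr(\mathbf{D}_1\neq 0)\le N^2e^{-cN}$ to each candidate's probability. Summed over $N2^{N-1}$ candidates, that term need not vanish. Pay for the bad event once: $\mathbb{E}[M_N]\le\mathbb{E}[M_N\mathbf{1}(\mathbf{D}_1=0)]+(N/2)\Pr(\mathbf{D}_1\neq 0)$.

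Second, keep the size dependence you already derived, $\Lambda(\bar A)^{p(N-p)}$, instead of collapsing it to $e^{-cN}$. Set $m=\min(p,N-p)$. Then $p(N-p)\ge mN/2$ and $p\binom{N}{p}\le N\cdot N^{m}$, so $\sum_{p=2}^{N-1}p\binom{N}{p}\Lambda(\bar A)^{p(N-p)}\le 2N\sum_{m\ge 1}\bigl(N\Lambda(\bar A)^{N/2}\bigr)^{m}\to 0$. The paper does the equivalent with $L_{N,K}=K(N-K)+\binom{K}{2}$, splitting at $K=N/2$.

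Your $\beta_0=0$ case has the same defect. The bound $\kappa^{N-p}$ is too weak: the $p=N-1$ term of $\sum_p p\binom{N}{p}\kappa^{N-p}$ is already $N(N-1)\kappa$. Using all $p(N-p)$ outside dyads gives $\kappa^{p(N-p)}$, and the same summation then closes the case. Treating $\beta_0=0$ separately is worthwhile: the paper's Case~1 covers $\beta_0\ge 0$, but its bounds $e^{-\beta_0 c_0 N}$ and $e^{-\beta_0 c_1 N}$ do not decay at $\beta_0=0$.
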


\textbf{Example 2.}\textsc{ (Random Geometric Graph)}
Let  $z_1,\dots,z_N$ be i.i.d.\ uniform on the square $[0,\sqrt{N}]^2$. Fix a finite real number $a$ and define
\[
A_{ij} :=
\begin{cases}
a, & \|z_i-z_j\|\le r,\\
-\infty, & \|z_i-z_j\|>r.
\end{cases}
\]
For $
D_{ij0}:=\mathbf{1}\left(A_{ij}-U_{ij0}\geq0\right)$ for all $i,j$, we can show that -- on average -- the number of stable systems of any fixed size grows linearly in the sample size $N$.\footnote{In fact, any $\bold{D}_0 $ such that $\mathbb{P}(D_{ij,0}=0|A)$ is bounded away from 0 works.}
\begin{prop}\label{prop:RGG}
     Let $\bold{m}_{K,N}$ be the number of identifying $K-$systems, 
     then\footnote{For any sequences of real  numbers $u_n$ and $v_n$, we write $u_n\asymp v_n$ when $u_n=O(v_n)$ and $v_n=O(u_n)$.}
\begin{enumerate}
\item  $
    \mathbb E\!\left[\sum_{K=2}^{ N-1} \bold{m}_{K,N}\right]\asymp N$
\item $
\mathbb E\!\left[\sum_{K=2}^{ \bar K} \bold{m}_{K,N}\right]\asymp N$ for any fixed $ \bar K\leq N-1$.
\end{enumerate}
\end{prop}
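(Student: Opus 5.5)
The upper bound is immediate from Corollary \ref{cor: separation-dyads-corro}. Identifying star systems have disjoint vertex sets, so $\sum_{K=2}^{N-1}\mathbf{m}_{K,N}\le N/2$ holds pathwise. Taking expectations gives the $O(N)$ half of both claims. It also shows that claim 2 implies claim 1, since the sum up to $\bar K$ is bounded above by the full sum. The work therefore lies in the lower bound, and it suffices to show $\mathbb{E}[\mathbf{m}_{2,N}]\gtrsim N$, i.e.\ to count identifying $2$-star systems (isolated mover dyads).

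\textbf{Construction of a sufficient event.} For a pair $ij$ I will build an event $E_{ij}$ under which $\{ij\}$ is an identifying $2$-star. Three conditions are imposed.

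(a) The pair is close: $\|z_i-z_j\|\le r$. No other point lies within distance $r$ of $z_i$ or $z_j$, i.e.\ $z_k\notin B(z_i,r)\cup B(z_j,r)$ for all $k\neq i,j$.

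(b) $D_{ij0}=0$, which holds with probability bounded away from zero given $A_{ij}=a$.

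(c) $U_{ij1}> a$ and $U_{ij2}\le a$, so that $D_{ij1}=0$ and $D_{ij2}=1$.

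Under (a), $A_{ik}=A_{jk}=-\infty$ for every $k\neq i,j$. Hence $D_{ikt}=D_{jkt}=0$ for all $t$ and all $k\ne i,j$, so $R_{ijt}=0$ in every period. It then follows that $D_{ij1}=\mathbf{1}(a\ge U_{ij1})$ and $D_{ij2}=\mathbf{1}(a\ge U_{ij2})$, since $D_{ij0}=D_{ij1}=0$.

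Next I check Definition \ref{def: stable-neighborhood}. Condition (i) holds because all links from $i,j$ to outsiders are zero in both periods. Condition (ii) is vacuous because $n_1(\{ij\})=\emptyset$. Condition (iii) is vacuous for $p=2$. Finally, (c) gives movement in the sense of Definition \ref{def: identifying-star-system}(ii). So $\mathbf{m}_{2,N}\ge\sum_{i<j}\mathbf{1}(E_{ij})$.

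\textbf{Probability computation.} The events in (b) and (c) depend only on $U_{ij0},U_{ij1},U_{ij2}$, which are independent of $z$. Together they have probability
\[
c_0=\Pr(U\le a\text{ fails})\,(1-F_U(a))F_U(a),
\]
a positive constant not depending on $N$. For the geometric part, fix $z_i$ at distance at least $2r$ from the boundary; this happens with probability $1-O(N^{-1/2})$. Then $\Pr(\|z_i-z_j\|\le r)=\pi r^2/N$. Given $z_i,z_j$, the other $N-2$ points independently avoid a region of area at most $2\pi r^2$, so
\[
\Pr(\text{(a)})\ge \frac{\pi r^2}{N}\Bigl(1-\frac{2\pi r^2}{N}\Bigr)^{N-2}(1-o(1))\ge \frac{c_1}{N}
\]
with $c_1\approx \pi r^2e^{-2\pi r^2}>0$.

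Summing over the $\binom{N}{2}$ pairs gives $\mathbb{E}[\mathbf{m}_{2,N}]\ge c_0c_1(N-1)/2\asymp N$. This proves claim 2 for every $\bar K\ge2$, and claim 1 follows from the first paragraph.

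\textbf{Main obstacle.} The delicate step is verifying that the event in (a) fully insulates the pair, so that neither periods $1,2$ nor $R_{ij0}$ are influenced by the rest of the network. This relies on $A=-\infty$ forcing zero links in every period, including $t=0$ under the stated $\mathbf{D}_0$. For the footnote's more general $\mathbf{D}_0$ I would instead use only $\Pr(D_{ij0}=0\mid A)$ being bounded away from $0$. Links at $t=0$ to outsiders are still zero because $A=-\infty$; if that also fails for the general initial condition, then $R_{ij0}$ only enters $D_{ij1}$ through $\beta R_{ij0}$. In that case I would condition on $\mathbf{D}_0$ and lower-bound the probability of (c) uniformly via the bounded-away assumption.
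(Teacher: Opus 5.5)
Your proof is correct, but it takes a different route from the paper's.

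The paper works size by size. It bounds $\Pr(Z_{\mathbf{i},K}=1)\le(\pi r^2/N)^{K-1}$, since every spoke must be feasible. For the lower bound it uses an ``isolated bubble'' event: an interior center, leaves within $r/2$ of it, no other agent within $3r/2$, internal dyads absent at $t=0,1$, and exactly the spokes present at $t=2$. This shows $\Pr(Z_{\mathbf{i},K}=1)\gtrsim N^{-(K-1)}$. Multiplying by the $N\binom{N-1}{K-1}$ candidate stars gives $\mathbb{E}[\mathbf{m}_{K,N}]\asymp N$ for each fixed $K$, and claim 2 follows by summing over $K\le\bar K$.

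You instead get the upper bound pathwise from vertex-disjointness (Lemma \ref{lem: separation-nodes-lemma} and Corollary \ref{cor: separation-dyads-corro}). That handles both claims at once, including the sum up to $N-1$, which the paper's proof does not treat separately. You get the lower bound from isolated mover dyads alone. This is essentially the paper's bubble event specialized to $K=2$, with the sharper isolation condition that no other agent lies within $r$ of either endpoint.

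Your route is shorter and fully sufficient for the proposition as stated. The paper's route yields the stronger conclusion that each fixed size $K$ separately contributes order-$N$ identifying systems; your argument only shows the aggregate count is linear.

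Your closing aside on a general $\mathbf{D}_0$ is not needed, and as phrased it is loose: a lower bound on $\Pr(D_{ij0}=0\mid\mathbf{A})$ does not control $R_{ij0}$, which enters $\Pr(D_{ij1}=0)$. If you want that extension, note that isolation forces $R_{ij1}=0$. Allowing movement in either direction then gives $\Pr(D_{ij1}\neq D_{ij2}\mid\mathbf{D}_0,z_1,\ldots,z_N)\ge\min\{\Lambda(a),1-\Lambda(\alpha_0+a)\}$ on the isolation event, whatever $\mathbf{D}_0$ is.
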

A formal proof is in Appendix \ref{proof:RGG}. The proof strategy relies on establishing  upper and lower bounds on the probability that any specific group of nodes forms an identifying star. For the upper bound, we note that for the ``spokes edges" of a star to switch between $t=1$ and $t=2$, all ``leaf nodes" must randomly fall within a small, fixed distance, from the central node: an event whose probability shrinks rapidly with $N$. For the lower bound, the proof constructs an ``isolated bubble" scenario: the star's nodes are clustered tightly together, all other  nodes are forced far enough away to prevent any outside interference, and the internal links follow a specific sequence of forming and breaking. Together, the two bounds imply that the probability of a specific $K$-star acting as an identifying star decays proportionally to $N^{-(K-1)}$. Because the total number of possible node combinations to form a $K$-star grows proportionally to $N^K$, multiplying these factors shows that the expected total number of identifying stars scales linearly with the overall network size, $N$. Note that by Corollary \ref{cor: separation-dyads-corro}, this is the fastest rate the number of stable systems could reach under any DGP.\\

\section{Monte Carlo experiments}\label{sec: Monte_Carlo_experiments}

The Monte Carlo design uses a random geometric graph to construct
an ``opportunity graph'' for link formation (as in the second example considered in the previous section). Specifically agents
are scattered uniformly on the two-dimensional plane
\[
\left[0,\sqrt{N}\right]\times\left[0,\sqrt{N}\right].
\]
The initial network is then generated according to the rule
\[
D_{ij0}=\mathbf{1}\left(A_{ij}-U_{ij0}\geq0\right),
\]
with $U_{ij0}$ logistic and $A_{ij}$ taking one of two values.
If the Euclidean distance between $i$ and $j$ is less than or equal
to $r,$ then $A_{ij}=\ln\left(\frac{0.75}{1-0.75}\right)$, otherwise
$A_{ij}$ equals negative infinity. This calibration means that,
in the initial period, agents that are less than $r$ apart link with
probability 0.75, while those greater than $r$ apart link with probability
zero. 

The expected degree of a randomly sampled agent in $t=0$ is approximately
$0.75\pi r^{2}$ (An exact expression for average degree, which takes
into account boundary effects, can be calculated along the lines of
\citet{kostin2010probability}). An implication of this set-up is that the
initial condition is sparse: average degree does not increase with
network size. By varying the value of $r$, we can manipulate the average
degree, and hence connectivity, of the initial condition. We choose
values of $r$ such that in large networks average degree in period
$t=0$ is 2, 3 or 4.\footnote{As a point of reference \citet*{McPherson_et_al_ASR06} estimate the
average size of adult Americans' core discussion networks (i.e., confidants
with whom individuals discuss important matters) was about three in
1985 and two in 2004.} With an average degree of 2, the resulting initial condition consists
of many small disconnected components. When average degree equals
3 a large giant component begins to form. Finally when average degree
equals 4, more than half of agents are part of one giant component (see Table
\ref{tab: MonteCarlo_DesignStats}). It is well-known that a phase-transition
occurs at an average degree of 3 in random geometric graphs \citep{Penrose_Bk2003}.

In period $t=1,2,3$ links evolve according to rule (\ref{eq:ij_Link_Model})
with $\alpha_0=\beta_0=1.$\footnote{Since agents greater that $r$ apart will never link, average degree
in the network is bounded above by $\pi r^{2}$ in periods $t=1,2,3$.
This corresponds to maximum average degrees of (approximately) 2.67,
4 and 5.33 across the three designs.} A key feature of this design is that it generates homophilous link
formation based on location (which is unobserved by the econometrician).
Although there are no network effects in operation in period $t=0$,
measured transitivity of the initial network is quite high, with the
clustering coefficient exceeding 0.4 in all cases (see Table \ref{tab: MonteCarlo_DesignStats}).
In subsequent periods transitivity and average degree increase as
agents form additional links (on net) in response to state dependence
and a taste for triadic closure. Link clustering in these designs therefore
arises from both homophily and a taste for triadic closure, making
them an appropriate test case for evaluating the small sample relevance
of Theorem \ref{thm: asymptotic-normality}. 

A typical sequence of networks, with $N = 100$ and average degree in the initial network set equal to 2, is depicted in Figure \ref{fig: MonteCarlo}. The figure shows that identifying star systems can come in different forms and arise in both sparse and relatively dense regions. In the figure the larger Berkeley
blue and brown nodes correspond to identifying 2-star and 3-star systems respectively, while the California gold nodes are their immediate neighbors.

\begin{table}[!htb]
    \caption{Basic Properties of Simulated Network Sequences}
    \label{tab: MonteCarlo_DesignStats}
    \centerline{
    \begin{tabular}{l|ccc|ccc|ccc|} \hline 
Asym. Degree & \multicolumn{3}{c|}{2} & \multicolumn{3}{c|}{3} & \multicolumn{3}{c|}{4} \\ 
\cline{1-4}\cline{5-7}\cline{8-10}  
Period & $(N-1)\mathbb{E}[D_{it}]$ & T & GC & $(N-1)\mathbb{E}[D_{it}]$ & T & GC & $(N-1)\mathbb{E}[D_{ijt}]$ & T & GC \\ 
\cline{1-4}\cline{5-7}\cline{8-10}  
$t=0$ &1.98 & 0.44 & 0.01 & 2.96 & 0.44 & 0.05 & 3.94 & 0.44 & 0.58\\ 
$t=1$ &2.41 & 0.58 & 0.01 & 3.68 & 0.58 & 0.08 & 4.97 & 0.58 & 0.83\\ 
$t=2$ &2.49 & 0.59 & 0.01 & 3.80 & 0.59 & 0.09 & 5.12 & 0.59 & 0.85\\ 
$t=3$ &2.50 & 0.59 & 0.01 & 3.82 & 0.59 & 0.09 & 5.14 & 0.59 & 0.85\\ 
\hline 
\end{tabular}
    }
    \smallskip{}
{\small\uline{Notes:}}{\small{} The table reports period-specific network
summary statistics across the $B=1,000$ Monte Carlo simulations for
each design ($N=5,000$). See main text for other design details.
The $(N-1)\mathbb{E}\left[D_{ijt}\right]$ column gives the average degree,
T the global clustering coefficient, or transitivity index, and GC the
fraction of agents that are part of the largest giant component.}{\small\par}
\end{table}

Table \ref{tab: MonteCarlo_CSS_estimates} summarizes the sampling properties of the CSS-MLE for $\alpha_0,\beta_0$ across the different designs. Panel A presents results for the conditional $2$-star system estimator, whereas Panel B reports results for the conditional $2 \ \& \ 3$-stars system estimator, which pools information from all stable star systems of sizes $2$ \emph{and} $3$.  The number of agents is set to $N=5,000$ and we complete $B=1,000$ Monte Carlo replications. For the designs considered here, with five thousand agents, the number of stable $2$-star systems
averages between $136$ and $290$ while the number of stable $3$-star systems ranges between $32$ and $60$. This implies a ratio of stable $3$-star systems to stable $2$-star systems of about $20-24$ percent.\footnote{In light of the relatively small number of available $3$-stars under this DGP, we have not searched for additional identifying star systems of larger dimensions.} 



Inspection of Table \ref{tab: MonteCarlo_CSS_estimates} shows that both estimators are approximately
mean and median unbiased and that the associated Wald-based confidence intervals have
coverage close to nominal 95 percent coverage. Furthermore, consistent with the theory, the conditional $2 \ \& \ 3$-star estimator (Panel B) is relatively more precise than its $2$-star counterpart (Panel A); as indicated by both a smaller mean absolute error and mean standard deviation across all designs.


\begin{table}[!htb]
    \caption{Sampling properties of the conditional star systems estimator}
    \label{tab: MonteCarlo_CSS_estimates}
    \centerline{
    \begin{tabular}{l|cc|cc|cc|} \hline
Asymptotic Degree & \multicolumn{2}{c|}{2} & \multicolumn{2}{c|}{3} & \multicolumn{2}{c|}{4} \\ 
\cline{1-3}\cline{4-5}\cline{6-7}
$N=5000$ & $\alpha$ & $\beta$ & $\alpha$ & $\beta$ & $\alpha$ & $\beta$ \\ 
\hline
\multicolumn{7}{l}{\textbf{Panel A: Conditional 2-star systems estimator}} \\ 
\hline
Mean Bias & 0.0036 & 0.0326 & 0.0290 & 0.0443 & 0.0562 & 0.0400\\ 
Median Bias & -0.0020 & 0.0040 & 0.0090 & 0.0156 & 0.0266 & 0.0168\\ 
Mean Abs. Err. & 0.1835 & 0.1986 & 0.2377 & 0.2003 & 0.3169 & 0.2096\\ 
Std. Dev. & 0.2313 & 0.2554 & 0.2954 & 0.2525 & 0.4028 & 0.2688\\ 
Mean Std. Err. & 0.2315 & 0.2448 & 0.2971 & 0.2442 & 0.3926 & 0.2608\\ 
Coverage (95\% CI) & 0.9540 & 0.9550 & 0.9610 & 0.9490 & 0.9640 & 0.9610\\ 
\hline
\multicolumn{7}{l}{\textbf{Panel B: Conditional 2 \& 3-star systems estimator}} \\ 
\hline
Mean Bias & 0.0035 & 0.0289 & 0.0295 & 0.0298 & 0.0399 & 0.0350\\ 
Median Bias & 0.0083 & 0.0073 & 0.0162 & 0.0145 & 0.0102 & 0.0172\\ 
Mean Abs. Err. & 0.1633 & 0.1717 & 0.2011 & 0.1672 & 0.2738 & 0.1832\\ 
Std. Dev. & 0.2015 & 0.2169 & 0.2510 & 0.2112 & 0.3461 & 0.2326\\ 
Mean Std. Err. & 0.1993 & 0.2079 & 0.2513 & 0.2084 & 0.3295 & 0.2271\\ 
Coverage (95\% CI) & 0.9550 & 0.9530 & 0.9590 & 0.9560 & 0.9520 & 0.9590\\ 
\hline
Avg. \# of Stable 2-star systems & \multicolumn{2}{c|}{290.0} & \multicolumn{2}{c|}{203.3} & \multicolumn{2}{c|}{136.8}\\ 
Avg. \# of Stable 3-star systems & \multicolumn{2}{c|}{59.5} & \multicolumn{2}{c|}{47.1} & \multicolumn{2}{c|}{32.2}\\ 
\hline
\end{tabular}
    }
{\small\uline{Notes:}}{\small{} The table reports the sampling properties of $\hat\theta$ (using 2-star systems in Panel A, and then 2-star and 3-star systems together in Panel B) across the $B=1,000$ Monte Carlo simulations for each design ($N=5,000$). See the main text for other design details.}{\small\par}
\end{table}

\newpage 
\begin{figure}[htbp]
\caption{Illustrative network sequence from Monte Carlo Design}
\label{fig: MonteCarlo}
\begin{center}

\begin{subfigure}{0.45\textwidth}
    \centering
        \caption*{$t=0$}
    \includegraphics[width=\linewidth]{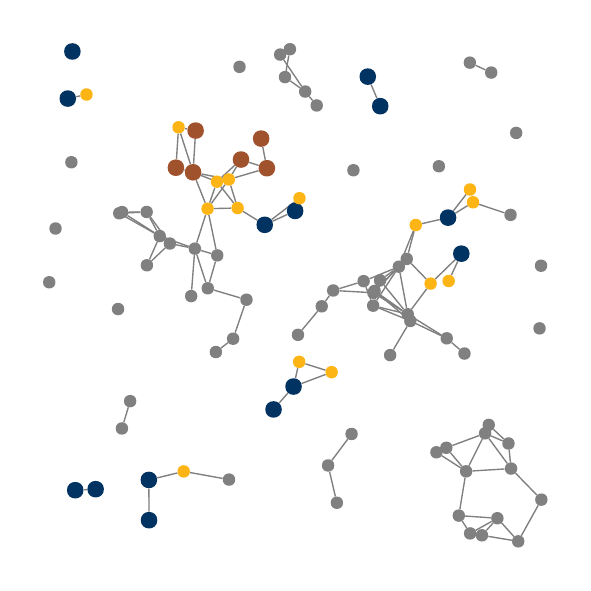}
\end{subfigure}
\hspace{-0.03\textwidth}
\begin{subfigure}{0.45\textwidth}
    \centering
        \caption*{$t=1$}
    \includegraphics[width=\linewidth]{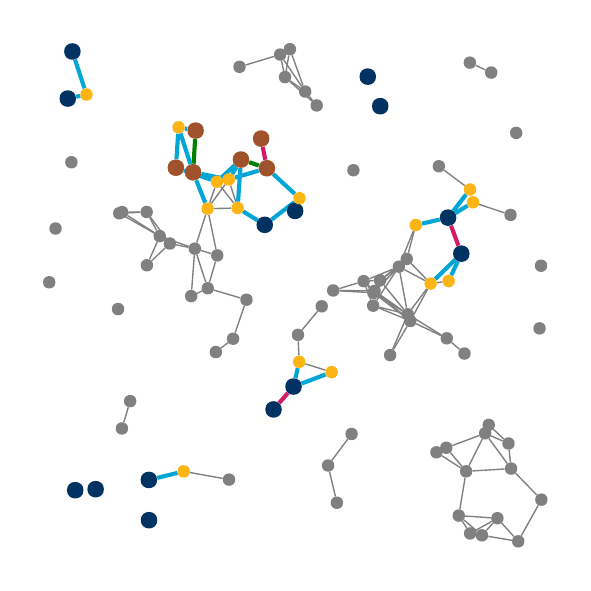}
\end{subfigure}

\vspace{-0.1cm}

\begin{subfigure}{0.45\textwidth}
    \centering
        \caption*{$t=2$}
    \includegraphics[width=\linewidth]{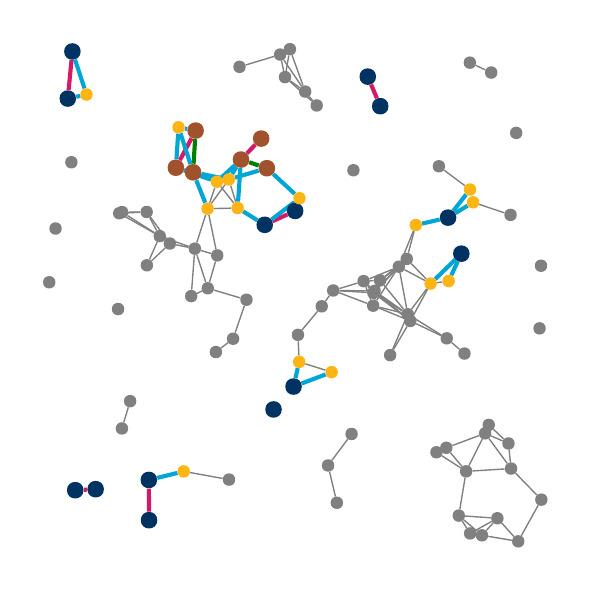}
\end{subfigure}
\hspace{-0.03\textwidth}
\begin{subfigure}{0.45\textwidth}
    \centering
        \caption*{$t=3$}
    \includegraphics[width=\linewidth]{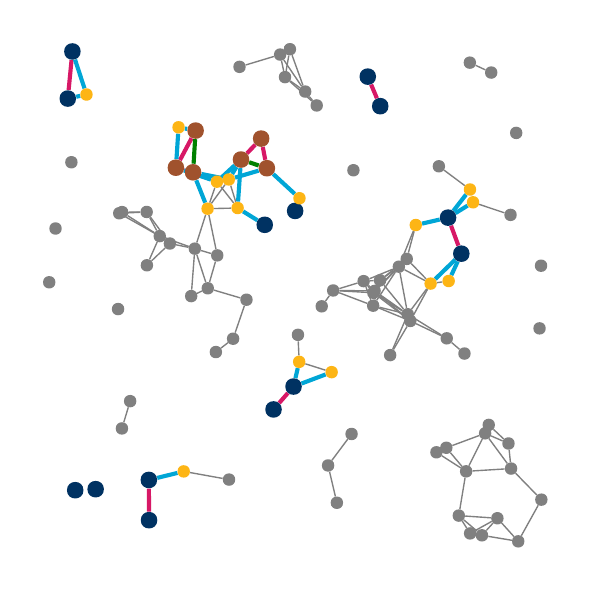}
\end{subfigure}

\vspace{-0.4cm}

\includegraphics[
    width=0.5\linewidth
]{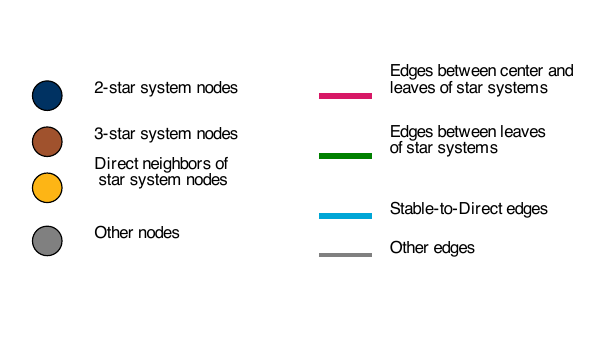} 
\end{center}
\noindent \underline{Notes:} A typical sequence of networks for the case where $N = 100$ and average degree in the initial network is 2.

\label{fig:network_evolution}

\end{figure}

\newpage 

\section{Empirical Application: Co-authorship Networks}\label{sec : empirical_application}

\cite{Ductoretal2014} studied how researchers’ co-authorship network helps predict their future research output. To illustrate our methodology in a concrete setting, we repurpose their data  -- assembled from EconLit, a bibliography of journal articles compiled by the American Economic Association -- and model the choice of co-authors directly.\footnote{The data that support the findings of this study are openly available on the \textit{Harvard Dataverse}, DOI: \url{https://doi.org/10.7910/DVN/27138}.} The \cite{Ductoretal2014} datasets spans the period 1970-1999 and is restricted to papers with at most three authors. We focus on the most recent four-year window available: $t=1996,1997,1998,1999$. For each year $t$ and each pair of authors $i$ and $j$, we set $D_{ijt}=1$ if authors $i$ and $j$ co-authored at least one paper in year $t$, and set $D_{ijt}=0$ otherwise. \\
Descriptive statistics for the co-authorship network sequence are detailed in Table \ref{tab: EmpiricalApplication_SummaryStats}. Our sample size is $N=56,156$ authors. The networks are fairly sparse with averages degrees ranging between 0.36 and 0.43 and a share of isolates consistently above $70$\%.  This structure coincides with a \emph{low} average degree random geometric graph; each of our four networks is fragmented into many small connected components.

In the data, many of the observed triangles are due to triple-authored papers (versus three separate bilateral collaborations). Whether our statistical model appropriately captures the micro-foundations of collaborator choices is doubtful. Our goal is simply to provide a worked example on a real world dataset as a ``proof-of-concept".



\begin{table}[!htb]
    \caption{Basic properties of the co-authorship network over 1996-1999}
    \label{tab: EmpiricalApplication_SummaryStats}
\centerline{
    \begin{tabular}{l|cccccc} \hline 
\cline{1-7} 
Period & \# of authors & Average degree & Density & Share Isolated & T & GC  \\ 
\cline{1-7}  
$t=0$ & 56156 & 0.36 & 6.3e-6 & 0.78 & 0.48 & 0.0010\\ 
$t=1$ & 56156 & 0.37 & 6.6e-6 & 0.77 & 0.48 & 0.0024\\ 
$t=2$ & 56156 & 0.40 & 7.1e-6 & 0.75 & 0.48 & 0.0019\\ 
$t=3$ & 56156 & 0.43 & 7.7e-6 & 0.73 & 0.50 & 0.0013\\ 
\hline 
\end{tabular} 

    }
\smallskip{}
{\small\uline{Notes:}}{\small{} 
The second column reports average degree or the average number of collaborations per author, the third column documents network density or the share of realized collaborations, the fourth column indicates the share of authors with no collaborations,
T is the global clustering coefficient or transitivity index, and GC denotes the
fraction of agents that are part of the largest component.}{\small\par}
\end{table}

We report four estimates of $\alpha$ and $\beta$. First, we fit a standard logistic regression of $D_{ijt}$ on $D_{ijt-1}$ and $R_{ijt-1}$, without controlling for fixed effects. We then report three versions of our conditional star-system likelihood estimator, based respectively on: (i) identifying $2$-star systems only, (ii) identifying $2$ and $3$-star systems, and (iii) identifying $2$, $3$, and $4$-star systems. The results are collected in Table \ref{tab: Empirical_Application_Results}. 

A first takeaway from Table \ref{tab: Empirical_Application_Results} is that our co-authorship network features a sizable number of identifying star systems, including $3798$ identifying $2$-star systems. This represents a fairly large effective sample size for our estimators and supports reliable inference in this empirical setting. Their number, however, declines rapidly with system size: from $3798$ identifying $2$-star systems to $560$ identifying $3$-star systems, and only $61$ identifying  $4$-star systems.  Consequently, incorporating stable $4$-systems in the last column of Table \ref{tab: Empirical_Application_Results} produces very modest changes in parameter estimates and  standard errors relative to solely relying on identifying 2 and 3-star systems (Column 3).


The logistic regression results suggests that sequences of the form $D_{ikt-1}=1, D_{jkt-1}=1, D_{ijt}=1$ (i.e., instances of links forming to close an open two-star from the previous period) are relatively infrequent in the data. This is indicated by the negative estimate of $\beta$. In contrast, the positive estimate of $\alpha$ indicates an overall high-level of bi-lateral link persistence. Some portion of these patterns simply reflects the weaker persistence of three-way collaborations across periods compared to two-way ones.

Columns 2 to 4 report CSS-MLE estimates. In general, these estimates suggest that the high transitivity index observed ($\sim 0.5$ in Table \ref{tab: EmpiricalApplication_SummaryStats}) is mainly driven by homophily rather than an intrinsic preference for transitive triplets. Furthermore, the coefficient in the standard logit regression is considerably larger (less negative), indicating that the logit model overstates the role of transitivity by failing to account for homophily-driven clustering.

\begin{table}[!htb]
    \caption{Conditional star system likelihood estimates for the period 1996-1999}\label{tab: Empirical_Application_Results}
\centerline{
    \begin{tabular}{c|c c c c} \hline 
& Logit & 2-star MLE & 2\&3-star MLE &  2\&3\&4-star MLE  \\ 
\hline 
$ \hat{\alpha} $ &2.798 & 0.251 & 0.357 & 0.362\\ 
 & \footnotesize{(0.096)} & \footnotesize{(0.111)} & \footnotesize{(0.092)} & \footnotesize{(0.089)} \\ 
$ \hat{\beta} $ &-0.119 & -0.656 & -0.564 & -0.511\\ 
 & \footnotesize{(0.035)} & \footnotesize{(0.312)} & \footnotesize{(0.204)} & \footnotesize{(0.198)} \\ 
\hline 
\# of authors &56156 &56156 & 56156 &56156\\ 
\# of identifying 2-star systems & &3798 & 3798 &3798\\ 
\# of identifying 3-star systems &  &  &560 &560\\ 
\# of identifying 4-star systems & &  & &61\\ 
\# of relevant star systems & &365 & 532 &555\\ 
\hline 
\end{tabular}
    }
\smallskip{}
{\small\uline{Notes:}}{\small{} The last row of the table indicates the number of identifying star systems that have a non zero contribution to the conditional star system likelihood. Asymptotic standard errors are reported in parentheses.}{\small\par}
\end{table}

\section{Conclusion}

The analysis of this paper suggests that simple models of dynamic network formation with rich heterogeneity structures can be identified from a single network sequence. Furthermore, consistent estimation, as well as large single network inference, is feasible in some settings. "Watch out, you might get what you're after." Our reliance on logit errors, as in related single agent panel settings, is consequential. At the same time, we expect aspects of our analysis to extend beyond this assumption.

Our work suggests several open questions. First, questions of (asymptotic) efficiency and information accumulation are delicate in our setting. The maximal rate-of-convergence for our conditional star-system maximum likelihood estimator is $\sqrt{N}$. Recall that we observe $\binom{N}{2}=O(N^2)$ sequences of link decisions; so a $\sqrt{N}$ rate-of-convergence is analogous to a $N^{\frac{1}{4}}$ rate in the traditional single-agent panel setting. Our examples suggest that achieving this rate depends upon features of the heterogeneity distribution and initial network configuration.

In practice researchers can count the number of identifying star-systems in their dataset. As noted early, with $\theta$ consisting of just two parameters, and the CSS likelihood being globally concave, we expect our asymptotic theory to be a good guide to finite-network properties even in settings with only a modest number of identifying star-systems. Whether such star-systems are present in real-world networks, however, is an open question. Our empirical illustration is hopeful in this regard.

Our CSS-MLE estimator utilizes only a subset of the information contained in the ``full" conditional likelihood. What is the nature of this information loss? Also of interest is understanding the differential information content of star-systems of different sizes.

It would also be of interest to study the identifiability of other models of link formation. Our focus on transitivity is inspired by empirical research on social and economic networks, but other types of preferences over (local) network structure are of interest as well (see \cite{Pelican_Graham_ReStud2026}). Other questions involve the inclusion of time-varying regressors and the identifiability of average effects (cf., \cite{dano2023transition}, \cite{dano2025binary}, \cite{gao2026identification}).


\begin{center}
    \textbf{Data Availability Statement}
\end{center}

The data used in this paper were extracted from a larger dataset assembled by \cite{Ductoretal2014} and available in the associated Harvard Dataverse respository \citep{Ductoret2014_data}. We certify that the author(s) of the manuscript have legitimate access to, and permission to use, the data used in this manuscript.

{\small
		\bibliography{references}
	}

\newpage

\appendix

{\centering \Large \bfseries Supplemental Web Appendices \par}
\vspace{1cm}
These appendices include proofs of the theorems stated in the main text, as well as statements and proofs of supplemental lemmata. All notation is as established in the main text, unless stated otherwise. Equation numbering continues in sequence with that established in the main text. 

\section{Preliminary results}\label{app: preliminary-results}

In this Appendix we prove some of the Lemmas stated in the main text. We also state (and prove as needed) several supplemental results which underlie the proofs of the main Theorems, which appear in Appendix \ref{app: main-results} below.

Throughout a $C$ denotes a generic constant, the magnitude of which may vary throughout. The abbreviations TI, CSI, MI, HI and JI respectively denote Triangle Inequality, Cauchy-Schwarz Inequality, Hoeffding, Markov Inequality, and Jensen's Inequality. The Law of Iterated Expectations is abbreviated by LIE.

\subsection{Preliminary results}\label{app: preliminary-results-appendix-only}
To prove Theorem \ref{thm: markov_draw}, stated in the main text, we utilize a textbook result on finite Markov chains (e.g., Theorem 7.10 on p. 172 of \cite{Mitzenmacher_Upfal_PC05}).
\begin{thm}
\textsc{\label{thm: stationary_distribution}(Stationary Distribution)
}Consider a finite, irreducible, and ergodic Markov chain with $L$
states and transition matrix $\mathbf{P}=\left[p_{lm}\right]_{1\leq l,m\leq L}.$
If (i) there exists a vector of non-negative numbers $\underline{\pi}=\left(\pi_{1},\ldots,\pi_{L}\right)'$
such that (i) $\sum_{l=1}^{L}\pi_{l}=1$ and (ii) $\pi_{l}p_{lm}=\pi_{m}p_{ml}$
for any pair $l,m=1,\ldots,L$, then $\underline{\pi}$ is the stationary
distribution of the Markov chain with transition matrix $\mathbf{P}$.
\end{thm}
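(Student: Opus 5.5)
The statement is the textbook detailed-balance criterion, so the plan is the standard one: show that $\underline{\pi}$ is invariant under $\mathbf{P}$, and then invoke uniqueness of the stationary distribution for a finite, irreducible, ergodic chain.

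\emph{Invariance.} Fix a state $m$ and compute the $m$-th component of $\underline{\pi}'\mathbf{P}$. Applying the reversibility condition (ii) term by term gives
\[
\sum_{l=1}^{L}\pi_{l}p_{lm}=\sum_{l=1}^{L}\pi_{m}p_{ml}=\pi_{m}\sum_{l=1}^{L}p_{ml}=\pi_{m}.
\]
The last equality holds because each row of a transition matrix sums to one. Hence $\underline{\pi}'\mathbf{P}=\underline{\pi}'$. Condition (i), together with the non-negativity of the $\pi_{l}$, makes $\underline{\pi}$ a probability vector. Therefore $\underline{\pi}$ is a stationary distribution.

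\emph{Uniqueness.} I would appeal to the fundamental theorem of finite Markov chains: a finite, irreducible, ergodic (aperiodic) chain has exactly one stationary distribution, and $p^{(n)}_{lm}\to\pi_{m}$ for every starting state $l$. So the invariant probability vector found above must be \emph{the} stationary distribution.

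If a self-contained uniqueness argument were wanted, I would argue as follows. Irreducibility and finiteness make every state positive recurrent. Any stationary $\underline{\mu}$ then satisfies $\mu_{m}=1/\mathbb{E}_{m}[\tau_{m}]$, where $\tau_{m}$ is the return time to $m$. This pins $\underline{\mu}$ down uniquely.

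\emph{Main obstacle.} The only non-mechanical step is uniqueness, and it is exactly what the cited textbook theorem supplies. Everything else is the one-line detailed-balance computation above.

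In the application (Theorem \ref{thm: markov_draw}), the real work is verifying the hypotheses for Algorithm \ref{alg: markov_draw}:
\begin{itemize}
\item irreducibility, since any element of $\mathbb{B}_{K}(\mathbf{d}^{3})$ is reachable by single-system swaps, each accepted with positive probability;
\item aperiodicity, from the lazy step with probability $\rho$;
\item detailed balance, from the symmetric proposal and the Metropolis acceptance ratio with $\pi_{l}=\ell^{css}(\mathbf{b}^{3}_{l};\theta_{0})$.
\end{itemize}
That verification belongs to the proof of Theorem \ref{thm: markov_draw}, not to this statement.
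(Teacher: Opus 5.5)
Your proposal is correct and matches the argument behind the paper's treatment: the paper gives no proof of its own and cites Theorem 7.10 of Mitzenmacher and Upfal, whose proof is exactly your computation $\sum_{l}\pi_{l}p_{lm}=\pi_{m}\sum_{l}p_{ml}=\pi_{m}$ followed by uniqueness of the stationary distribution for a finite, irreducible, ergodic chain. One small refinement: uniqueness needs only finiteness and irreducibility, so aperiodicity is not required for this step.
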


To demonstrate consistency of the conditional star-system MLE, Theorem \ref{thm: consistency} in the main text, we invoke a variant of the standard M-Estimation consistency theorem (e.g. Theorem 2.1 in \cite{Newey_McFadden_HBE94} or Theorem 4.1 in \cite{Amemiya1985}). We present and prove the precise result we use here. 
\begin{lem}{\label{lem: consistency}}\textsc{(Consistency under Uniform Approximation and Identification)}
    Let $\Theta \subset \mathbb{R}^{\dim\left(\theta\right)}$ be compact. For each $N$, let 
$\tilde Q_N : \Theta \to \mathbb{R}$  and 
$\widehat Q_N : \Theta \to \mathbb{R}$ be two stochastic functions. 
Define the estimator
\[
\hat\theta_N \in \arg\max_{\theta \in \Theta} \hat Q_N(\theta).
\]

Assume:
\begin{enumerate}[label=\Alph*]
  \item \textbf{Uniform (in $\theta$) approximation of the target}:
  \[
  \sup_{\theta \in \Theta} 
  \big| \hat Q_N(\theta) - \tilde Q_N(\theta) \big| 
  \xrightarrow{p} 0.
  \]
  (For example, via a uniform law of large numbers or stochastic equicontinuity.)

  \item \textbf{Well-separated maximizer (i.e., identification assumption)}: 
  There exists $\theta_0 \in \Theta$ such that for every $\varepsilon > 0$,
  \[
  \Pr \left(\liminf_{N \to \infty} 
  \Big\{
  \tilde Q_N(\theta_0) - 
  \sup_{\|\theta - \theta_0\| \ge \varepsilon} \tilde Q_N(\theta)
  \Big\}
  > 0\right) = 1.
  \]
\end{enumerate}

\noindent
Then
\[
\hat\theta_N \xrightarrow{p} \theta_0.
\]
\end{lem}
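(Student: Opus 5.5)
The plan is the standard Newey--McFadden argument: an event inclusion driven by the well-separated maximizer, followed by a probability bound. Fix $\varepsilon>0$ and define
\[
\Delta_N(\varepsilon)=\tilde Q_N(\theta_0)-\sup_{\|\theta-\theta_0\|\ge\varepsilon}\tilde Q_N(\theta).
\]
The first step is a deterministic observation. On the event $\{\|\hat\theta_N-\theta_0\|\ge\varepsilon\}$ we have $\tilde Q_N(\hat\theta_N)\le\tilde Q_N(\theta_0)-\Delta_N(\varepsilon)$. Since $\hat\theta_N$ maximizes $\hat Q_N$, we also have $\hat Q_N(\hat\theta_N)\ge\hat Q_N(\theta_0)$. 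Chaining these through the uniform gap $\delta_N=\sup_{\theta\in\Theta}|\hat Q_N(\theta)-\tilde Q_N(\theta)|$ gives
\[
\tilde Q_N(\theta_0)-\delta_N\le\hat Q_N(\theta_0)\le\hat Q_N(\hat\theta_N)\le\tilde Q_N(\hat\theta_N)+\delta_N\le\tilde Q_N(\theta_0)-\Delta_N(\varepsilon)+\delta_N .
\]
Hence
\[
\{\|\hat\theta_N-\theta_0\|\ge\varepsilon\}\subseteq\{\Delta_N(\varepsilon)\le 2\delta_N\}.
\]
If $\hat\theta_N$ is only a near-maximizer, with slack $o_p(1)$, the same chain goes through with that slack added. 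Measurability of $\hat\theta_N$ is assumed, as in the textbook versions.

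The second step turns condition B into a usable lower bound. It says that, almost surely, $\liminf_N\Delta_N(\varepsilon)>0$. I would define the random variable $c(\omega)=\liminf_N\Delta_N(\varepsilon)(\omega)$, which is strictly positive a.s. Then, for each fixed $\eta>0$, I would pick a deterministic constant $c_\eta>0$ with $\Pr(c>2c_\eta)\ge1-\eta$. This is possible because $\Pr(c>2c)\uparrow\Pr(c>0)=1$ as $c\downarrow0$.

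On $\{c>2c_\eta\}$, the definition of liminf gives $\Delta_N(\varepsilon)>c_\eta$ for all $N\ge N_0(\omega)$. It follows that $\Pr(\Delta_N(\varepsilon)\le c_\eta)$ is at most $\eta$ plus $\Pr(\{c>2c_\eta\}\cap\{\Delta_N\le c_\eta\})$. The second term is bounded by $\Pr(N_0>N)$, and $\Pr(N_0>N)\to0$ because $N_0<\infty$ a.s. on that set. Consequently $\limsup_N\Pr(\Delta_N(\varepsilon)\le c_\eta)\le\eta$.

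The final step combines the two pieces:
\[
\Pr(\|\hat\theta_N-\theta_0\|\ge\varepsilon)\le\Pr(\Delta_N\le c_\eta)+\Pr(2\delta_N\ge c_\eta).
\]
The second term vanishes by condition A. Therefore the $\limsup$ of the left side is at most $\eta$, and since $\eta$ is arbitrary, $\hat\theta_N\xrightarrow{p}\theta_0$.

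The only delicate point is the second step. Condition B is stated as an almost-sure liminf statement about a random, $N$-dependent target $\tilde Q_N$, not as a deterministic gap. The main work is therefore converting it into the in-probability bound $\limsup_N\Pr(\Delta_N(\varepsilon)\le c_\eta)\le\eta$. Measurability of $\sup_{\|\theta-\theta_0\|\ge\varepsilon}\tilde Q_N(\theta)$ can be handled by restricting the supremum to a countable dense subset under continuity in $\theta$, or else by using outer probabilities throughout.
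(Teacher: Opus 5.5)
Your proof is correct and follows the same argument as the paper: a well-separated gap in $\tilde Q_N$, combined with uniform closeness of $\hat Q_N$ to $\tilde Q_N$, forces every maximizer of $\hat Q_N$ into the $\varepsilon$-ball around $\theta_0$. Your second step, which swaps the paper's realization-dependent margin $\eta$ for a deterministic constant $c_\eta$ drawn from the distribution of $\liminf_N \Delta_N(\varepsilon)$, also makes rigorous a point the paper passes over when it applies condition A with a random $\eta$.
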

\begin{proof}
Fix $\varepsilon > 0$. Under the event $\mathbf{1}\left( \liminf_{N \to \infty} 
  \Big\{
  \tilde Q_N(\theta_0) - 
  \sup_{\|\theta - \theta_0\| \ge \varepsilon} \tilde Q_N(\theta)
  \Big\}
  > 0 \right)$, for large $N$ there exists $\eta > 0$ such that
\[
\tilde Q_N(\theta_0) \ge 
\sup_{\|\theta - \theta_0\| \ge \varepsilon} \tilde Q_N(\theta) + 3\eta.
\]
If $\sup_{\Theta} |\hat Q_N - \tilde Q_N| \le \eta$, then
\[
\hat Q_N(\theta_0) 
\ge 
\sup_{\|\theta - \theta_0\| \ge \varepsilon} \hat Q_N(\theta) + \eta,
\]
so any maximizer of $\hat Q_N$ lies within $\varepsilon$ of $\theta_0$.
Since the events $\mathbf{1}\left( \liminf_{N \to \infty} 
  \Big\{
  \tilde Q_N(\theta_0) - 
  \sup_{\|\theta - \theta_0\| \ge \varepsilon} \tilde Q_N(\theta)
  \Big\}
  > 0 \right)$ and  $\mathbf{1}\left(\sup_{\Theta}|\hat Q_N - \tilde Q_N|\le \eta\right)$ occur with
probability approaching one by A and B, it follows that 
$\hat\theta_N \xrightarrow{p} \theta_0$.
\end{proof}

\subsection{Proofs of preliminary results appearing in the main text}\label{app: preliminary-results-main-text}

\noindent \textbf{Proof of Lemma \ref{lem: permutation-lemma}} \label{app: permutation-lemma_proof} \textsc{(Permutations)}

\begin{proof}
\underline{Part (A)}: Generically we have, for any $ij\in\mathcal{S}_{p}$
(with $\mathcal{S}_{p}$ an identifying $p$-star system with $i$
at its center) ,
\begin{align}
R_{ij2}-R_{ij1}= & \sum_{l=1}^{N}\left(D_{il2}D_{jl2}-D_{il1}D_{jl1}\right)\nonumber \\
= & \sum_{k\in v\left(\mathcal{S}_{p}\right)}\left(D_{ik2}D_{jk2}-D_{ik1}D_{jk1}\right)+\sum_{l\notin v\left(\mathcal{S}_{p}\right)}\left(D_{il2}D_{jl2}-D_{il1}D_{jl1}\right)\nonumber \\
= & \sum_{k\in v\left(\mathcal{S}_{p}\right)}\left(D_{ik2}D_{jk2}-D_{ik1}D_{jk1}\right)+0\nonumber \\
= & \sum_{k\in v\left(\mathcal{S}_{p}\right)}\left(D_{ik2}-D_{ik1}\right)D_{jk1},\label{eq: permutation_lemma_part_A_r1}
\end{align}
or, equivalently,
\[
R_{ij2}=R_{ij1}+\sum_{k\in v\left(\mathcal{S}_{p}\right)}\left(D_{ik2}-D_{ik1}\right)D_{jk1}.
\]
The third and fourth equalities in \eqref{eq: permutation_lemma_part_A_r1} follow, respectively from parts \textcolor{black}{(i)} and \textcolor{black}{(iii)} of Definition \ref{def: stable-neighborhood}. 
Next observe that, since $i$ is the center of $\mathcal{S}_{p},$ we
have for any $\mathbf{B}^{3}\in\mathbb{B}_{K}\left(\mathbf{D}^{3}\right)$,
either $\left(B_{ik1},B_{ik2}\right)=\left(D_{ik1},D_{ik2}\right)$
for all $k\in v\left(\mathcal{S}_{p}\right)$ or $\left(B_{ik1},B_{ik2}\right)=\left(D_{ik2},D_{ik1}\right)$
for all $k\in v\left(\mathcal{S}_{p}\right)$. The definition of $\mathbb{B}_{K}\left(\mathbf{D}^{3}\right)$ then implies:
\begin{align}
S_{ij1}= & \sum_{k\in v\left(\mathcal{S}_{p}\right)}B_{ik1}D_{jk1}+\sum_{l\notin v\left(\mathcal{S}_{p}\right)}D_{il1}D_{jl1}\nonumber \\
= & \sum_{k\in v\left(\mathcal{S}_{p}\right)}\left(B_{ik1}D_{jk1}-D_{ik1}D_{jk1}\right)+\sum_{l\notin v\left(\mathcal{S}_{p}\right)}D_{il1}D_{jl1}+\sum_{k\in v\left(\mathcal{S}_{p}\right)}D_{ik1}D_{jk1}\nonumber \\
= & \sum_{k\in v\left(\mathcal{S}_{p}\right)}\left(B_{ik1}D_{jk1}-D_{ik1}D_{jk1}\right)+\sum_{l}D_{il1}D_{jl1}\nonumber \\
= & \sum_{k\in v\left(\mathcal{S}_{p}\right)}\left(B_{ik1}-D_{ik1}\right)D_{jk1}+R_{ij1}\nonumber \\
= & \sum_{k\in v\left(\mathcal{S}_{p}\right)}\left(D_{ik2}-D_{ik1}\right)D_{jk1}-\sum_{k\in v\left(\mathcal{S}_{p}\right)}\left(D_{ik2}-B_{ik1}\right)D_{jk1}+R_{ij1}\nonumber \\
= & R_{ij2}-R_{ij1}-\sum_{k\in v\left(\mathcal{S}_{p}\right)}\left(D_{ik2}-B_{ik1}\right)D_{jk1}+R_{ij1}\nonumber \\
= & R_{ij2}-\sum_{k\in v\left(\mathcal{S}_{p}\right)}\left(D_{ik2}-B_{ik1}\right)D_{jk1}.\label{eq: permutation_lemma_part_A_r2}
\end{align}
Hence, if $\left(B_{ik1},B_{ik2}\right)=\left(D_{ik1},D_{ik2}\right)$
for all $k\in v\left(\mathcal{S}_{p}\right)$, we have, using \eqref{eq: permutation_lemma_part_A_r1}
and \eqref{eq: permutation_lemma_part_A_r2}, that $S_{ij1}=R_{ij1}$
(and, by symmetry, that $S_{ij2}=R_{ij2}$). If $\left(B_{ik1},B_{ik2}\right)=\left(D_{ik2},D_{ik1}\right)$ for
all $k\in v\left(\mathcal{S}_{p}\right)$, \eqref{eq: permutation_lemma_part_A_r2} instead gives $S_{ij1}=R_{ij2}$
(and, by symmetry, that $S_{ij2}=R_{ij1}$). 

\underline{Part (B)}: Generically we have, for any
$(j,k)\in v(\mathcal{S}_{p})\times v(\mathcal{S}_{p})$ such that $jk\in \mathcal{D}_{p}^{fd}$
\begin{align}
R_{jk2}-R_{jk1}= & \sum_{l=1}^{N}\left(D_{jl2}D_{kl2}-D_{jl1}D_{kl1}\right)\notag\\
= & \sum_{l\in v\left(\mathcal{S}_{p}\right)}\left(D_{jl2}D_{kl2}-D_{jl1}D_{kl1}\right)+\sum_{m\notin v\left(\mathcal{S}_{p}\right)}\left(D_{jm2}D_{km2}-D_{jm1}D_{km1}\right)\notag\\
= & \sum_{l\in v\left(\mathcal{S}_{p}\right)}\left(D_{jl2}D_{kl2}-D_{jl1}D_{kl1}\right)+0\notag\\
= & D_{ij2}D_{ik2}-D_{ij1}D_{ik1}+\sum_{l\in v\left(\mathcal{S}_{p}\right),l\neq i}\left(D_{jl2}D_{kl2}-D_{jl1}D_{kl1}\right)\notag\\
= & D_{ij2}D_{ik2}-D_{ij1}D_{ik1} + 0.\label{eq: permutation_lemma_part_B_r1}
\end{align}
The third and fifth equalities above follow from, respectively, parts \textcolor{black}{(i)} and \textcolor{black}{(iii)} of Definition \ref{def: stable-neighborhood}.
Next observe that, invoking parts \textcolor{black}{(i)} and \textcolor{black}{(iii)} of Definition \ref{def: stable-neighborhood}, and also using \eqref{eq: permutation_lemma_part_B_r1},
\begin{align*}
S_{jk1} = & B_{ij1}B_{ik1} + \sum_{l\neq i}D_{jl1}D_{kl1} \\
        = & B_{ij1}B_{ik1} + \sum_{l\neq i}D_{jl2}D_{kl2} \\
= & R_{jk2}-\left(D_{ij2}D_{ik2}-B_{ij1}B_{ik1}\right).
\end{align*}
Hence if $\left(B_{ij1},B_{ij2}\right)=\left(D_{ij1},D_{ij2}\right)$
and $\left(B_{ik1},B_{ik2}\right)=\left(D_{ik1},D_{ik2}\right)$,
then $S_{jk1}=R_{jk1}$ (and, by symmetry, $S_{jk2}=R_{jk2}$; the first equality above).
If, instead, $\left(B_{ij1},B_{ij2}\right)=\left(D_{ij2},D_{ij1}\right)$
and $\left(B_{ik1},B_{ik2}\right)=\left(D_{ik2},D_{ik1}\right)$,
then $S_{jk1}=R_{jk2}$ (and, by symmetry, $S_{jk2}=R_{jk1}$; the last equality above). 

\underline{Part (C)}: Recall that $j\in v\left(\mathcal{S}_{p}\right)$ and $l\notin v\left(\mathcal{S}_{p}\right)$. There are two cases to consider. In case (i), while agent $l$ is not part of $v\left(\mathcal{S}_{p}\right)$, it's in its period $1$ neighborhood (i.e., $l \in n_1\left(\mathcal{S}_{p}\right)$). In case (ii) it is not in this neighborhood (i.e., $l \not\in n_1\left(\mathcal{S}_{p}\right)$). In the first case, invoking parts (i) to (iii) of Definition \ref{def: stable-neighborhood} (such that $B_{jm1}=D_{jm1}$ for $m \neq i$ and $B_{lm1}=D_{lm1}$ for all $m=1,\ldots,N$), we have:
\begin{align*}
S_{jl1}= & B_{ij1}D_{il1} + \sum_{m\neq i}D_{jm1}D_{lm1}\\
= & B_{ij1}D_{il1}-D_{ij1}D_{il1}+R_{jl1}\\
= & \left(B_{ij1}-D_{ij1}\right)D_{il1}+R_{jl1}.
\end{align*}
Hence if $B_{ij1}=D_{ij1}$ we have that $S_{jl1}=R_{jl1}$ (and,
by symmetry, $S_{jl2}=R_{jl2}$). If, instead, $B_{ij1}=D_{ij2}$, we have (again invoking the parts (i) and (ii) of Definition \ref{def: stable-neighborhood})
\begin{align*}
S_{jl1}= & B_{ij1}D_{il1} + \sum_{m\neq i}D_{jm1}D_{lm1}\\
= & B_{ij1}D_{il1} + \sum_{m\neq i}D_{jm2}D_{lm2}\\
= & B_{ij1}D_{il1}-D_{ij2}D_{il2}+R_{jl2}\\
= & \left(B_{ij1}-D_{ij2}\right)D_{il2}+R_{jl2},
\end{align*}
such that  $S_{jl1}=R_{jl2}$ (and, by symmetry, $S_{jl2}=R_{jl1}$). 

Next consider case (ii) where $l \not\in n_1\left(\mathcal{S}_{p}\right)$. There are two sub-cases. If $l \in v\left(\mathcal{S}_l\right)$ for some other identifying star system with center $i'$ then
\begin{align*}
S_{jl1} = & B_{ij1}D_{il1} + D_{i'j1}B_{i'l1} + \sum_{m\neq i,i'}D_{jm1}D_{lm1}\\
= & B_{ij1}D_{il1} + D_{i'j1}B_{i'l1} + \sum_{m\neq i,i'}D_{jm2}D_{lm2},
\end{align*}
by part (i) of Definition \ref{def: stable-neighborhood}. Next, since  $D_{il1}=D_{il2}=D_{i'j1}=D_{i'j2}=0$ by Lemma \ref{lem: invariance-lemma}, we can use the argument from case (i) above to conclude that $S_{il1}=S_{il2}=R_{il1}=R_{il2}$ and the conclusion trivially holds.
In the second sub-case $l$ is neither in the neighborhood of $\mathcal{S}_p$ nor does it belong to another identifying star system. The link structure of agent $l$ is invariant across the elements of $\mathbf{B}^{3}\in\mathbb{B}_{K}\left(\mathbf{D}^{3}\right)$ and we can show the result by an adaptation of the argument used for case (i). 
\end{proof}

\noindent \textbf{Proof of Lemma \ref{lem: conditional-independence}} \label{app: conditional-independence_proof} \textsc{(Conditional Independence)}
\begin{proof}
For any network sequence $\tilde{\bold{d}}$, denote $Z_{p\mathbf{i}}(\tilde{\bold{d}})$   the indicator of whether $\bold i$ is stable in $\tilde{\bold{d}}$ and $\mathbf{Q}_{p\mathbf{i}}(\tilde{\bold{d}})$ the tuple $\mathbf{Q}_{p\mathbf{i}}$ defined on $\tilde{\bold{d}}$. Let $\mathbf Q_{-\mathbf Z}(\tilde{\boldsymbol d})$ collect all remaining dyads $Q_{ij}$
that do not appear in any $\mathbf Q_{p\mathbf i}(\tilde{\boldsymbol d})$. Write the joint probability mass function of the tuples $(Z_{p\mathbf{i}}\mathbf{Q}_{p\mathbf{i}})_{p\leq K, \bold i}, \mathbf{Q}_{-\bold Z}$:

\begin{align*} 
\Pr\left(\left.(Z_{p\mathbf{i}}\mathbf{Q}_{p\mathbf{i}})_{p\leq K, \mathbf{i}}, \mathbf{Q}_{-\mathbf{Z}}\right|\mathbf{D}_0,\mathbf{D}_3,\mathbf{Z}, \mathbf{A}\right)&=\sum_{\tilde {\mathbf{d}}:\; \forall p, \mathbf{i} \ : \  \mathbf{Q}_{p\mathbf{i}}(\tilde{\mathbf{d}})= \mathbf{Q}_{p\mathbf{i}}; \mathbf Q_{-\mathbf Z}(\tilde{\boldsymbol d})= \mathbf{Q}_{-\bold Z}  } \Pr\left(\tilde {\mathbf{d}}|\mathbf{D}_0,\mathbf{D}_3,\mathbf{Z}, \mathbf{A}\right).
\end{align*}
Note that for a fixed  $\mathbf{D}_0,\mathbf{D}_3,\mathbf{Z}, \mathbf{A}$, the map $\tilde{\mathbf{d}}\rightarrow \left[(Z_{p\mathbf{i}}(\tilde{\mathbf{d}}) \mathbf{Q}_{p\mathbf{i}}(\tilde{\mathbf{d}}))_{p, \mathbf{i}}, \mathbf{Q}_{-\mathbf{Z}}(\tilde{\mathbf{d}})\right]$ is a bijection, therefore:
\begin{align*}
\Pr\left((Z_{p\mathbf{i}}\mathbf{Q}_{p\mathbf{i}})_{p\leq K, \mathbf{i}}, \mathbf{Q}_{-\mathbf{Z}}|\mathbf{D}_0,\mathbf{D}_3,\mathbf{Z}, \mathbf{A}\right)&=\Pr\left({\mathbf{d}}|\mathbf{D}_0,\mathbf{D}_3,\mathbf{Z}, \mathbf{A}\right)\\
&=\prod_{p=2}^K \prod_{\mathbf{i}} \left(\frac{\exp(T_{p\mathbf{i}}'\theta)}{1+\exp(T_{p\mathbf{i}}'\theta)}\right)^{Z_{p\bold i}},
\end{align*}
by Theorem \ref{thm: condition-star-system-likelihood}. 
Let $c(\mathbf{Z},\mathbf{D}_0,\mathbf{D}_3) $ be the number of all potential values of $\bold Q_{-\mathbf{Z}}$, summing over all of the potential values of $\bold Q_{-\mathbf{Z}}$, we get:
\begin{align*}
\Pr\left((Z_{p\mathbf{i}}\mathbf{Q}_{p\mathbf{i}})_{p\leq K, \mathbf{i}}|\mathbf{D}_0,\mathbf{D}_3,\mathbf{Z}, \mathbf{A}\right)&=c(\mathbf{Z},\mathbf{D}_0,\mathbf{D}_3) \cdot \prod_{p=2}^K \prod_{\mathbf{i}} \left(\frac{\exp(T_{p\mathbf{i}}'\theta)}{1+\exp(T_{p\mathbf{i}}'\theta)}\right)^{Z_{p\mathbf{i}}},
\end{align*}
implying that $(Z_{p\mathbf{i}}\mathbf{Q}_{p\mathbf{i}})_{p\leq K, \mathbf{i}}$ are independent conditionally on $\mathbf{D}_0,\mathbf{D}_3,\mathbf{Z}, \mathbf{A} $, with a marginal:
$$\Pr\left(Z_{p\mathbf{i}}\mathbf{Q}_{p\mathbf{i}}|\mathbf{D}_0,\mathbf{D}_3,\mathbf{Z}, \mathbf{A}\right)\propto \left(\frac{\exp(T_{p\mathbf{i}}'\theta)}{1+\exp(T_{p\mathbf{i}}'\theta)}\right)^{Z_{p\mathbf{i}}}.$$
\end{proof}

\section{Main results}\label{app: main-results}
\subsection{ Proof of Proposition \ref{prop: conditioning_set}}

    Part (i) follows from inspection of \eqref{eq: likelihood_chamberlain} and \eqref{eq: conditional_likelihood_v1}. To show (ii) let $\mathbf{b}^3 \in \mathbb{B}\left(\mathbf{d}^{3}\right)$: because $\mathbf{b}_{0}=\mathbf{d}_{0}$, $\mathbf{b}_{3}=\mathbf{d}_{3}$, and $\mathbf{b}_{1}+\mathbf{b}_{2}=\mathbf{d}_{1}+\mathbf{d}_{2}$, then $\sum_{t=1}^{3}b_{ijt}=\sum_{t=1}^{3}d_{ijt}$.
    Moreover, fix a pair $ij$, $\delta \in \{0,1\}$ and $\rho \in \{0, 1,\dots, N-2\}$. Since $\mathbf{b}_{0}=\mathbf{d}_{0}$, then $1\in m_{ij}\left(\delta,\rho;\mathbf{d}^{3}\right)$ if, and only if, $1\in m_{ij}\left(\delta,\rho;\mathbf{b}^{3}\right)$. Given that $\left(s_{ij1},s_{ij2}\right)=\left(r_{ij1},r_{ij2}\right)\ \text{if}\ \left(b_{ij1},b_{ij2}\right)=\left(d_{ij1},d_{ij2}\right)\nonumber $
    and $\left(s_{ij1},s_{ij2}\right)=\left(r_{ij2},r_{ij1}\right)\ \text{if}\ \left(b_{ij1},b_{ij2}\right)=\left(d_{ij2},d_{ij1}\right)\nonumber$, then $|m_{ij}\left(\delta,\rho;\mathbf{b}^{3}\right) \cap \{2,3\} | =|m_{ij}\left(\delta,\rho;\mathbf{d}^{3}\right)\cap \{2,3\} |$, therefore $|m_{ij}\left(\delta,\rho;\mathbf{b}^{3}\right)|= |m_{ij}\left(\delta,\rho;\mathbf{d}^{3}\right)|$.\\

    Conversely, let $\mathbf{b}^3\in \mathbb{B}^*$ and note that $d_{ij0}+d_{ij1}+d_{ij2}=\sum_{\rho=0}^{N-2} m_{ij}(1, \rho; \mathbf{d}^3)=\sum_{\rho=0}^{N-2} m_{ij}(1, \rho; \mathbf{b}^3)=b_{ij0}+b_{ij1}+b_{ij2}$. Since: $\mathbf{b}_0=\mathbf{d}_0$ by assumption, we get: $d_{ij1}+d_{ij2}= b_{ij1}+b_{ij2}$. The fact that $\sum_{t=1}^{3}b_{ijt}=\sum_{t=1}^{3}d_{ijt}$ for all $i,j$ allows us to conclude that $\mathbf{d}_3=\mathbf{b}_3$.

\subsection{Proof of Theorem \ref{thm: condition-star-system-likelihood} (Conditional Star-System Likelihood)} 

In order to simplify (\ref{eq: conditional_likelihood_ss_v1}) it
is convenient to analyze its inverse, which consists of the sum of
$2^{\sum_{p=2}^K \mathbf{m}_{p,N}}$ ratios of the form 
\begin{equation}
\frac{\Pr\left(\mathbf{D}^{3}=\mathbf{b}^{3};\theta,\mathbf{A},\pi\right)}{\Pr\left(\mathbf{D}^{3}=\mathbf{d}^{3};\theta,\mathbf{A},\pi\right)},\label{eq: key_likelihood_ratio}
\end{equation}
with $\mathbf{b}^{3}\in\mathbb{B}_{K}\left(\mathbf{d}^{3}\right)$.
Note that the cardinality of the set $\mathbb{B}_{K}\left(\mathbf{d}^{3}\right)$
is $2^{\sum_{p=2}^K \mathbf{m}_{p,N}}$. 

From the dyad partition given in equation \eqref{eq: dyad_partitioning} above, we can decompose the denominator of \eqref{eq: key_likelihood_ratio} as follows:
\begin{align}
\Pr\left(\mathbf{d}^{3};\theta,\mathbf{A},\pi\right)= & \pi\left(\mathbf{d}_{0};\mathbf{A}\right)\label{eq: initial_condition}\\
 & \times\prod_{p=2}^K \prod_{ij\in\mathcal{D}_{p}}\prod_{t=1}^{3}F\left(\alpha d_{ijt-1}+\beta r_{ijt-1}+A_{ij}\right)^{d_{ij}}\notag\\
 &\times\left[1-F\left(\alpha d_{ijt-1}+\beta r_{ijt-1}+A_{ij}\right)\right]^{1-d_{ij}}\label{eq: identifying_dyads}\\
 & \times\prod_{p=2}^K \prod_{ij\in\mathcal{D}_{p}^{fd}}\prod_{t=1}^{3}F\left(\alpha d_{ijt-1}+\beta r_{ijt-1}+A_{ij}\right)^{d_{ij}}\notag\\
 &\times\left[1-F\left(\alpha d_{ijt-1}+\beta r_{ijt-1}+A_{ij}\right)\right]^{1-d_{ij}}\label{eq: other_dyads_in_star_system}\\
 & \times\prod_{p=2}^K\prod_{ij\in\mathcal{D}_{p,o}^{pd}}\prod_{t=1}^{3}F\left(\alpha d_{ijt-1}+\beta r_{ijt-1}+A_{ij}\right)^{d_{ij}}\notag\\
 &\times\left[1-F\left(\alpha d_{ijt-1}+\beta r_{ijt-1}+A_{ij}\right)\right]^{1-d_{ij}}\label{eq: other_dyads_partially_in_star_system}\\
 & \times \prod_{p=2}^K\prod_{l=2}^K\prod_{ij\in\mathcal{D}_{p,l}^{pd}}\prod_{t=1}^{3}F\left(\alpha d_{ijt-1}+\beta r_{ijt-1}+A_{ij}\right)^{d_{ij}}\notag\\
 &\times\left[1-F\left(\alpha d_{ijt-1}+\beta r_{ijt-1}+A_{ij}\right)\right]^{1-d_{ij}}\label{eq: all_other_dyads} \\
 &\times\prod_{ij\in\mathcal{D}_{K}^{od}}\prod_{t=1}^{3}F\left(\alpha d_{ijt-1}+\beta r_{ijt-1}+A_{ij}\right)^{d_{ij}}\notag\\
 &\times\left[1-F\left(\alpha d_{ijt-1}+\beta r_{ijt-1}+A_{ij}\right)\right]^{1-d_{ij}}
\end{align}
An analogous decomposition applies to the numerator of  \eqref{eq: key_likelihood_ratio}.

In order to analyze \eqref{eq: key_likelihood_ratio} it is convenient use the dyad partition given by \eqref{eq: dyad_partitioning} and also underlying the likelihood decomposition of equations \eqref{eq: initial_condition} to \eqref{eq: all_other_dyads} immediately above.

\paragraph{Initial conditions.} To evaluate (\ref{eq: key_likelihood_ratio}) we begin with the observation for any $\mathbf{b}^{3}\in\mathbb{B}_{K}\left(\mathbf{d}^{3}\right)$ we have $\mathbf{b}_{0}=\mathbf{d}_{0}$. Therefore the $\pi\left(\mathbf{b}_{0},\mathbf{a}\right)$
and $\pi\left(\mathbf{d}_{0},\mathbf{a}\right)$ terms in, respectively,
the numerator and denominator of \eqref{eq: key_likelihood_ratio} are common factors. 

\paragraph{Dyads in $\mathcal{D}_{K}^{od}$.} Next consider the contributions, to the numerator and denominator of \eqref{eq: key_likelihood_ratio}, of all those dyads where \emph{neither} agent belongs to an identifying star system. These contributions, like those of the initial condition, also coincide. This follows since swapping the edges of an identifying star system leaves the systematic utility associated with (potential) links in $\mathcal{D}_{K}^{od}$ unchanged. The transitivity term for dyad $kl\in\mathcal{D}_{K}^{od}$ equals $r_{klt}=\sum_{m} d_{kmt}d_{lmt}$; none of the links entering this term vary over $\mathbf{b}^{3}\in\mathbb{B}_{K}\left(\mathbf{d}^{3}\right)$. 

We now turn to evaluating the contributions of dyads where at least one agent belongs to an identifying star system to ratio \eqref{eq: key_likelihood_ratio}. 

\paragraph{Dyads in $\bigcup\limits_{p=2}^{K} \mathcal{D}_p$.} Let $ij\in \mathcal{S}_{p}$ be a dyad in $\mathcal{S}_{p}$, an identifying star system of size $p$. Without loss of generality, suppose that $i$ is the center node of $\mathcal{S}_{p}$, and consider the periods $t=1,2,3$ contributions of this dyad to \eqref{eq: key_likelihood_ratio}. There are two cases to consider: that where (i) the edges in $\mathcal{S}_{p}$ are swapped in $\mathbf{b}^3$ relative to their configuration in $\mathbf{d}^3$, and (ii) where they are not swapped. In the second case, the likelihood contributions of $ij$ to the numerator and denominator of \eqref{eq: key_likelihood_ratio} coincide.  

If the former case we have either \\
(a) $(d_{ij1},d_{ij2})=(1,0)$, such that $(b_{ij1},b_{ij2})=(0,1)$, or \\
(b) $(d_{ij1},d_{ij2})=(0,1)$, such that $(b_{ij1},b_{ij2})=(1,0)$. \\
In subcase (a) $ij$'s period's $t=1,2,3$ contribution to \eqref{eq: key_likelihood_ratio} equals:
\begin{align*} 
b^{10}(q_{ij};\theta)\overset{def}{\equiv} &\frac{\left[1-F\left(\alpha b_{ij0}+\beta s_{ij0}+A_{ij}\right)\right]F\left(\beta s_{ij1}+A_{ij}\right)}{F\left(\alpha d_{ij0}+\beta r_{ij0}+A_{ij}\right)\left[1-F\left(\alpha+\beta r_{ij1}+A_{ij}\right)\right]} 
\\
&\times \frac{F\left(\alpha+\beta s_{ij2}+A_{ij}\right)^{b_{ij3}}\left[1-F\left(\alpha+\beta s_{ij2}+A_{ij}\right)\right]^{1-b_{ij3}}}{F\left(\beta r_{ij2}+A_{ij}\right)^{d_{ij3}}\left[1-F\left(\beta r_{ij2}+A_{ij}\right)\right]^{1-d_{ij3}}}\\
=&\frac{\left[1-F\left(\alpha d_{ij0}+\beta r_{ij0}+A_{ij}\right)\right]F\left(\beta r_{ij2}+A_{ij}\right)}{F\left(\alpha d_{ij0}+\beta r_{ij0}+A_{ij}\right)\left[1-F\left(\alpha+\beta r_{ij1}+A_{ij}\right)\right]} \\
&\times\frac{F\left(\alpha+\beta r_{ij1}+A_{ij}\right)^{d_{ij3}}\left[1-F\left(\alpha+\beta r_{ij1}+A_{ij}\right)\right]^{1-d_{ij3}}}{F\left(\beta r_{ij2}+A_{ij}\right)^{d_{ij3}}\left[1-F\left(\beta r_{ij2}+A_{ij}\right)\right]^{1-d_{ij3}}} \\
=&\frac{\left[1-F\left(\alpha d_{ij0}+\beta r_{ij0}+A_{ij}\right)\right]F\left(\beta r_{ij2}+A_{ij}\right)^{1-d_{ij3}}F\left(\alpha+\beta r_{ij1}+A_{ij}\right)^{d_{ij3}}}{F\left(\alpha d_{ij0}+\beta r_{ij0}+A_{ij}\right)\left[1-F\left(\beta r_{ij2}+A_{ij}\right)\right]^{1-d_{ij3}}\left[1-F\left(\alpha+\beta r_{ij1}+A_{ij}\right)\right]^{d_{ij3}}} \\
=&\frac{\left[1-F\left(\alpha d_{ij0}+\beta r_{ij0}+A_{ij}\right)\right]F\left(\alpha d_{ij3}+\beta r_{ij2}+A_{ij}\right)^{1-d_{ij3}}F\left(\alpha d_{ij3}+\beta r_{ij1}+A_{ij}\right)^{d_{ij3}}}{F\left(\alpha d_{ij0}+\beta r_{ij0}+A_{ij}\right)\left[1-F\left(\alpha d_{ij3}+\beta r_{ij2}+A_{ij}\right)\right]^{1-d_{ij3}}\left[1-F\left(\alpha d_{ij3}+\beta r_{ij1}+A_{ij}\right)\right]^{d_{ij3}}}\\
=&\frac{1-F\left(\alpha d_{ij0}+\beta r_{ij0}+A_{ij}\right)}{F\left(\alpha d_{ij0}+\beta r_{ij0}+A_{ij}\right)}\frac{F\left(\alpha d_{ij3}+\beta r_{ij1}+A_{ij}\right)}{1-F\left(\alpha d_{ij3}+\beta r_{ij1}+A_{ij}\right)}\\
&\times\left(\frac{1-F\left(\beta r_{ij1}+A_{ij}\right)}{F\left(\beta r_{ij1}+A_{ij}\right)}\frac{F\left(\beta r_{ij2}+A_{ij}\right)}{1-F\left(\beta r_{ij2}+A_{ij}\right)}\right)^{1-d_{ij3}}
\end{align*}
where we use $\mathbf{b}_{0}=\mathbf{d}_{0}$, $\ \mathbf{b}_{3}=\mathbf{d}_{3}$ and Lemma \ref{lem: permutation-lemma} for the second equality and ideas from \cite{Honore_Kyriazidou_EM00} to formulate the expression to the right of the final equality.

In the logit case, this ratio simplifies to 
\begin{align*}
    b^{10}(q_{ij};\theta) = \exp(\alpha \left[d_{ij3}-d_{ij0}\right]+\beta\left[(r_{ij1}-r_{ij0})+(1-d_{ij3})(r_{ij2}-r_{ij1})\right]),
\end{align*}
Observe that this expression does not depend on $A_{ij}$.

In subcase (b), where $(d_{ij1},d_{ij2})=(0,1)$ such that $(b_{ij1},b_{ij2})=(1,0)$, we instead get a contribution of
\begin{align*} 
b^{01}(q_{ij};\theta)\overset{def}{\equiv}&\frac{F\left(\alpha b_{ij0}+\beta s_{ij0}+A_{ij}\right)\left[1-F\left(\alpha+\beta s_{ij1}+A_{ij}\right)\right]}{\left[1-F\left(\alpha d_{ij0}+\beta r_{ij0}+A_{ij}\right)\right]F\left(\beta r_{ij1}+A_{ij}\right)} \\
&\times\frac{F\left(\beta s_{ij2}+A_{ij}\right)^{b_{ij3}}\left[1-F\left(\beta s_{ij2}+A_{ij}\right)\right]^{1-b_{ij3}}}{F\left(\alpha+\beta r_{ij2}+A_{ij}\right)^{d_{ij3}}\left[1-F\left(\alpha+\beta r_{ij2}+A_{ij}\right)\right]^{1-d_{ij3}}} \\
=&\frac{F\left(\alpha d_{ij0}+\beta r_{ij0}+A_{ij}\right)\left[1-F\left(\alpha+\beta r_{ij2}+A_{ij}\right)\right]}{\left[1-F\left(\alpha d_{ij0}+\beta r_{ij0}+A_{ij}\right)\right]F\left(\beta r_{ij1}+A_{ij}\right)} \\
&\times\frac{F\left(\beta r_{ij1}+A_{ij}\right)^{d_{ij3}}\left[1-F\left(\beta r_{ij1}+A_{ij}\right)\right]^{1-d_{ij3}}}{F\left(\alpha+\beta r_{ij2}+A_{ij}\right)^{d_{ij3}}\left[1-F\left(\alpha+\beta r_{ij2}+A_{ij}\right)\right]^{1-d_{ij3}}} \\
=&\frac{F\left(\alpha d_{ij0}+\beta r_{ij0}+A_{ij}\right)\left[1-F\left(\alpha+\beta r_{ij2}+A_{ij}\right)\right]^{d_{ij3}}\left[1-F\left(\beta r_{ij1}+A_{ij}\right)\right]^{1-d_{ij3}}}{\left[1-F\left(\alpha d_{ij0}+\beta r_{ij0}+A_{ij}\right)\right]F\left(\alpha+\beta r_{ij2}+A_{ij}\right)^{d_{ij3}}F\left(\beta r_{ij1}+A_{ij}\right)^{1-d_{ij3}}} \\
=&\frac{F\left(\alpha d_{ij0}+\beta r_{ij0}+A_{ij}\right)}{1-F\left(\alpha d_{ij0}+\beta r_{ij0}+A_{ij}\right)}\frac{1-F\left(\alpha d_{ij3}+\beta r_{ij1}+A_{ij}\right)}{F\left(\alpha d_{ij3}+\beta r_{ij1}+A_{ij}\right)} \\
&\times \left(\frac{F\left(\alpha+\beta r_{ij1}+A_{ij}\right)}{1-F\left(\alpha+\beta r_{ij1}+A_{ij}\right)}\frac{1-F\left(\alpha+\beta r_{ij2}+A_{ij}\right)}{F\left(\alpha+\beta r_{ij2}+A_{ij}\right)}\right)^{d_{ij3}}.
\end{align*}
In the logit case we further get
\begin{align*}
    b^{01}(q_{ij};\theta)=\exp(-\alpha \left[d_{ij3}-d_{ij0}\right]-\beta\left[(r_{ij1}-r_{ij0})+d_{ij3}(r_{ij2}-r_{ij1})\right]),
\end{align*}
which again does not vary with $A_{ij}$.
For the analysis which follows, it is convenient -- \emph{for the logit case} -- to combine $b^{01}(q_{ij};\theta)$ and $b^{10}(q_{ij};\theta)$ into a single expression of, for example, 
\begin{align*}
    b(q_{ij};\theta)=&\exp\Bigl(-\alpha(d_{ij2}-d_{ij1})(d_{ij3}-d_{ij0}) \\
    & -\beta(d_{ij2}-d_{ij1})(r_{ij1}-r_{ij0}) \\
    & + \beta(d_{ij1}(1-d_{ij2})(1-d_{ij3})-(1-d_{ij1})d_{ij2}d_{ij3}))(r_{ij2}-r_{ij1})\Bigr).
\end{align*}
This expression can be simplified further by noting that by Definition \ref{def: identifying-star-system} (ii), since $i$ is the center of the star system, $d_{ij2}=(1-d_{ij1})$. Hence, after some manipulation we get the form for given in the statement of Theorem \ref{thm: condition-star-system-likelihood}, equation \eqref{eq: b_function_def} of the main text.

\paragraph{Dyads in sets $\mathcal{D}_{p,l}^{pd}$.}  We next turn to dyads composed of agents that belong to two different identifying star-systems. It turns out that for any $(p,l)\in\{2,\ldots,K\}^2$, the contributions of dyads $ij\in \mathcal{D}_{p,l}^{pd}$ to the numerator and denominator of \eqref{eq: key_likelihood_ratio} coincide. To see this, recall the notation $s_{ijt}=\sum_{k}b_{ikt}b_{jkt}$ for $\mathbf{b}_{t}=\left[b_{ijt}\right]_{1\leq i,j\leq N}$, and notice first that by part (A) of Lemma \ref{lem: permutation-lemma}, $d_{ij1}=d_{ij2}=0=b_{ij1}=b_{ij2}$. Hence,  the periods $t=1,2,3$ contributions to (\ref{eq: key_likelihood_ratio}) for dyad $ij\in \mathcal{D}_{p,l}^{pd}$ equal
\begin{align*} 
&\frac{\left[1-F\left(\alpha b_{ij0}+\beta s_{ij0}+A_{ij}\right)\right]\left[1-F\left(\beta s_{ij1}+A_{ij}\right)\right]F\left(\beta s_{ij2}+A_{ij}\right)^{b_{ij3}}\left[1-F\left(\beta s_{ij2}+A_{ij}\right)\right]^{1-b_{ij3}}}{\left[1-F\left(\alpha d_{ij0}+\beta r_{ij0}+A_{ij}\right)\right]\left[1-F\left(\beta r_{ij1}+A_{ij}\right)\right]F\left(\beta r_{ij2}+A_{ij}\right)^{d_{ij3}}\left[1-F\left(\beta r_{ij2}+A_{ij}\right)\right]^{1-d_{ij3}}} 
\\
&=\frac{\left[1-F\left(\beta s_{ij1}+A_{ij}\right)\right]F\left(\beta s_{ij2}+A_{ij}\right)^{d_{ij3}}\left[1-F\left(\beta s_{ij2}+A_{ij}\right)\right]^{1-d_{ij3}}}{\left[1-F\left(\beta r_{ij1}+A_{ij}\right)\right]F\left(\beta r_{ij2}+A_{ij}\right)^{d_{ij3}}\left[1-F\left(\beta r_{ij2}+A_{ij}\right)\right]^{1-d_{ij3}}} \\
&=\frac{\left[1-F\left(\beta r_{ij1}+A_{ij}\right)\right]F\left(\beta r_{ij1}+A_{ij}\right)^{d_{ij3}}\left[1-F\left(\beta r_{ij1}+A_{ij}\right)\right]^{1-d_{ij3}}}{\left[1-F\left(\beta r_{ij1}+A_{ij}\right)\right]F\left(\beta r_{ij1}+A_{ij}\right)^{d_{ij3}}\left[1-F\left(\beta r_{ij1}+A_{ij}\right)\right]^{1-d_{ij3}}} \\
&=1
\end{align*}
where the first equality leverages the fact that $\mathbf{b}_{0}=\mathbf{d}_{0}$ and $\ \mathbf{b}_{3}=\mathbf{d}_{3}$, and the second equality draws on both part (B) of Lemma \ref{lem: permutation-lemma} and Lemma \ref{lem: permutation-lemma} (C). 

\paragraph{Dyads in sets $\mathcal{D}_{p,o}^{pd}$.}  The analysis of dyads with one agent in an identifying star-system and the other not is more complicated. Fix $p\in \{2,\ldots,K\}$ and let consider the periods $t=1,2,3$ contributions to \eqref{eq: key_likelihood_ratio} for dyad $ij\in \mathcal{D}_{p,o}^{pd}$, where, without loss of generality, we let $i \in \mathcal{V}_{p}$ and $j\in \mathcal{V}^{o}$. Let $\mathcal{S}_{p}$ denote the $p$-star system to which $i$ belongs.  By neighborhood stability of $\mathcal{S}_{p}$ (Definition \ref{def: stable-neighborhood}), $d_{ij1}=d_{ij2}$, and by the definition of $\mathbb{B}_{K}\left(\mathbf{d}^{3}\right)$,  $b_{ij1}=d_{ij1}\land b_{ij2}=d_{ij2}$. Therefore there are two cases to consider. In the first, if $(d_{ij1},d_{ij2})=(1,1)$, which yields a likelihood contribution of
\begin{align*} 
&\frac{F\left(\alpha b_{ij0}+\beta s_{ij0}+A_{ij}\right)F\left(\alpha+\beta s_{ij1}+A_{ij}\right)F\left(\alpha+\beta s_{ij2}+A_{ij}\right)^{b_{ij3}}\left[1-F\left(\alpha+\beta s_{ij2}+A_{ij}\right)\right]^{1-b_{ij3}}}{F\left(\alpha d_{ij0}+\beta r_{ij0}+A_{ij}\right)F\left(\alpha+\beta r_{ij1}+A_{ij}\right)F\left(\alpha+\beta r_{ij2}+A_{ij}\right)^{d_{ij3}}\left[1-F\left(\alpha+\beta r_{ij2}+A_{ij}\right)\right]^{1-d_{ij3}}} \\
&=\frac{F\left(\alpha+\beta s_{ij1}+A_{ij}\right)F\left(\alpha+\beta s_{ij2}+A_{ij}\right)^{d_{ij3}}\left[1-F\left(\alpha+\beta s_{ij2}+A_{ij}\right)\right]^{1-d_{ij3}}}{F\left(\alpha+\beta r_{ij1}+A_{ij}\right)F\left(\alpha+\beta r_{ij2}+A_{ij}\right)^{d_{ij3}}\left[1-F\left(\alpha+\beta r_{ij2}+A_{ij}\right)\right]^{1-d_{ij3}}}
\end{align*}
where we use $\mathbf{b}_{0}=\mathbf{d}_{0}$ and $\ \mathbf{b}_{3}=\mathbf{d}_{3}$. If the links of $\mathcal{S}_{p}$ are not swapped in $\mathbf{b}^3$ relative to their configuration in $\mathbf{d}^3$, then, by Lemma \ref{lem: permutation-lemma}, $(s_{ij1},s_{ij2})=(r_{ij1},r_{ij2})$ and the ratio above equals 1. If, instead, the links of $S_{p}$ are swapped, then by Lemma \ref{lem: permutation-lemma}, the ratio above equals
\begin{align*}
    c^{11}(q_{ij};\theta)&\overset{def}{\equiv}\frac{F\left(\alpha+\beta r_{ij2}+A_{ij}\right)F\left(\alpha+\beta r_{ij1}+A_{ij}\right)^{d_{ij3}}\left[1-F\left(\alpha+\beta r_{ij1}+A_{ij}\right)\right]^{1-d_{ij3}}}{F\left(\alpha+\beta r_{ij1}+A_{ij}\right)F\left(\alpha+\beta r_{ij2}+A_{ij}\right)^{d_{ij3}}\left[1-F\left(\alpha+\beta r_{ij2}+A_{ij}\right)\right]^{1-d_{ij3}}}
\end{align*}
In the logit case this simplifies further to $c^{11}(q_{ij};\theta)=\exp\left(\beta(1-d_{ij3})(r_{ij2}-r_{ij1})\right)$. In the second case to consider, if $(d_{ij1},d_{ij2})=(0,0)$ such that the likelihood contribution equals
\begin{align*} 
&\frac{\left[1-F\left(\alpha b_{ij0}+\beta s_{ij0}+A_{ij}\right)\right]\left[1-F\left(\beta s_{ij1}+A_{ij}\right)\right]F\left(\beta s_{ij2}+A_{ij}\right)^{b_{ij3}}\left[1-F\left(\beta s_{ij2}+A_{ij}\right)\right]^{1-b_{ij3}}}{\left[1-F\left(\alpha d_{ij0}+\beta r_{ij0}+A_{ij}\right)\right]\left[1-F\left(\beta r_{ij1}+A_{ij}\right)\right]F\left(\beta r_{ij2}+A_{ij}\right)^{d_{ij3}}\left[1-F\left(\beta r_{ij2}+A_{ij}\right)\right]^{1-d_{ij3}}} \\
&= \frac{\left[1-F\left(\beta s_{ij1}+A_{ij}\right)\right]F\left(\beta s_{ij2}+A_{ij}\right)^{d_{ij3}}\left[1-F\left(\beta s_{ij2}+A_{ij}\right)\right]^{1-d_{ij3}}}{\left[1-F\left(\beta r_{ij1}+A_{ij}\right)\right]F\left(\beta r_{ij2}+A_{ij}\right)^{d_{ij3}}\left[1-F\left(\beta r_{ij2}+A_{ij}\right)\right]^{1-d_{ij3}}} 
\end{align*}
Again, if the links of $\mathcal{S}_{p}$ are not swapped in $\mathbf{b}^3$, then by Lemma \ref{lem: permutation-lemma}, $(s_{ij1},s_{ij2})=(r_{ij1},r_{ij2})$ and the ratio equals 1. Otherwise, Lemma \ref{lem: permutation-lemma} implies that it equals
\begin{align*}
c^{00}(q_{ij};\theta)&\overset{def}{\equiv} \frac{\left[1-F\left(\beta r_{ij2}+A_{ij}\right)\right]F\left(\beta r_{ij1}+A_{ij}\right)^{d_{ij3}}\left[1-F\left(\beta r_{ij1}+A_{ij}\right)\right]^{1-d_{ij3}}}{\left[1-F\left(\beta r_{ij1}+A_{ij}\right)\right]F\left(\beta r_{ij2}+A_{ij}\right)^{d_{ij3}}\left[1-F\left(\beta r_{ij2}+A_{ij}\right)\right]^{1-d_{ij3}}}. 
\end{align*}
In the logit case, this simplified to $c^{00}(q_{ij};\theta)=\exp\left(-\beta d_{ij3}(r_{ij2}-r_{ij1})\right)$. We can -- \emph{in the logit case} -- combine $c^{11}(q_{ij};\theta)$ and $c^{00}(q_{ij};\theta)$ into a single expression of
\begin{align*}
    c(q_{ij};\theta)=\exp\left(\beta (d_{ij1}-d_{ij3})(r_{ij2}-r_{ij1})\right)
\end{align*}
It is convenient for what follows to show, again for the logit case, that $c(q_{ij};\theta)=b(q_{ij};\theta)$ for dyad $ij\in \mathcal{D}_{p,o}^{pd}$. Since $d_{ij1}=d_{ij2}$ we have:
\begin{align*}
    b(q_{ij};\theta)&=\exp\Bigl(-\alpha \left(d_{ij2}-d_{ij1}\right)\left(d_{ij3}-d_{ij0}\right)-\beta \Bigl(\left[d_{ij3}-d_{ij1}\right](r_{ij2}-r_{ij0})-\left[d_{ij3}-d_{ij2}\right](r_{ij1}-r_{ij0})\Bigr)\Bigr) \\
    &=\exp\Bigl(-\beta \Bigl(\left[d_{ij3}-d_{ij1}\right](r_{ij2}-r_{ij0})-\left[d_{ij3}-d_{ij1}\right](r_{ij1}-r_{ij0})\Bigr)\Bigr) \\
    &=\exp\Bigl(-\beta \Bigl(\left[d_{ij3}-d_{ij1}\right](r_{ij2}-r_{ij1})\Bigr) \\
    &=c(q_{ij};\theta).
\end{align*}

\paragraph{Dyads in sets $\mathcal{D}_{p}^{fd}$.} All that remains to consider are the period $t=1,2,3$ contributions to \eqref{eq: key_likelihood_ratio} for those dyads where both agents belong to the same identifying star system but the dyad itself does not (i.e., $ij\in \mathcal{D}_{p}^{fd}$, with $i,j\in v(\mathcal{S}_{p})$ for some $p$-star identifying system $\mathcal{S}_{p}$). If the links of $\mathcal{S}_{p}$ are not swapped in $\mathbf{b}^3$ relative to $\mathbf{d}^3$, then, by Lemma \ref{lem: permutation-lemma}, the contributions of $ij$ to the numerator and denominator of \eqref{eq: key_likelihood_ratio} coincide. Otherwise, if the links of $\mathcal{S}_{p}$ are swapped in $\mathbf{b}^3$, there are two cases to consider: (i) $(d_{ij1},d_{ij2})=(1,1)=(b_{ij1},b_{ij2})$ and (ii) $(d_{ij1},d_{ij2})=(0,0)=(b_{ij1},b_{ij2})$. In the first case we have
\begin{align*} 
&\frac{F\left(\alpha b_{ij0}+\beta s_{ij0}+A_{ij}\right)F\left(\alpha+\beta s_{ij1}+A_{ij}\right)F\left(\alpha+\beta s_{ij2}+A_{ij}\right)^{b_{ij3}}\left[1-F\left(\alpha+\beta s_{ij2}+A_{ij}\right)\right]^{1-b_{ij3}}}{F\left(\alpha d_{ij0}+\beta r_{ij0}+A_{ij}\right)F\left(\alpha+\beta r_{ij1}+A_{ij}\right)F\left(\alpha+\beta r_{ij2}+A_{ij}\right)^{d_{ij3}}\left[1-F\left(\alpha+\beta r_{ij2}+A_{ij}\right)\right]^{1-d_{ij3}}}  \\
&=\frac{F\left(\alpha d_{ij0}+\beta r_{ij0}+A_{ij}\right)F\left(\alpha+\beta r_{ij2}+A_{ij}\right)F\left(\alpha+\beta r_{ij1}+A_{ij}\right)^{d_{ij3}}\left[1-F\left(\alpha+\beta r_{ij1}+A_{ij}\right)\right]^{1-d_{ij3}}}{F\left(\alpha d_{ij0}+\beta r_{ij0}+A_{ij}\right)F\left(\alpha+\beta r_{ij1}+A_{ij}\right)F\left(\alpha+\beta r_{ij2}+A_{ij}\right)^{d_{ij3}}\left[1-F\left(\alpha+\beta r_{ij2}+A_{ij}\right)\right]^{1-d_{ij3}}} \\
&=\frac{\left[1-F\left(\alpha+\beta r_{ij1}+A_{ij}\right)\right]^{1-d_{ij3}}F\left(\alpha+\beta r_{ij2}+A_{ij}\right)^{1-d_{ij3}}}{F\left(\alpha+\beta r_{ij1}+A_{ij}\right)^{1-d_{ij3}}\left[1-F\left(\alpha+\beta r_{ij2}+A_{ij}\right)\right]^{1-d_{ij3}}}
\end{align*}
where the first equality follows from $\mathbf{b}_{0}=\mathbf{d}_{0}$, $\mathbf{b}_{3}=\mathbf{d}_{3}$, and Lemma \ref{lem: permutation-lemma}. In the logit case, this expression evaluates to $c^{11}(q_{ij};\theta)=\exp\left(\beta(1-d_{ij3})(r_{ij2}-r_{ij1})\right)$, identical to the case where $ij\in \mathcal{D}_{p,o}^{pd}$. In the second case, we get by similar arguments,
\begin{align*} 
&\frac{\left[1-F\left(\alpha b_{ij0}+\beta s_{ij0}+A_{ij}\right)\right]\left[1-F\left(\beta s_{ij1}+A_{ij}\right)\right]F\left(\beta s_{ij2}+A_{ij}\right)^{b_{ij3}}\left[1-F\left(\beta s_{ij2}+A_{ij}\right)\right]^{1-b_{ij3}}}{\left[1-F\left(\alpha d_{ij0}+\beta r_{ij0}+A_{ij}\right)\right]\left[1-F\left(\beta r_{ij1}+A_{ij}\right)\right]F\left(\beta r_{ij2}+A_{ij}\right)^{d_{ij3}}\left[1-F\left(\beta r_{ij2}+A_{ij}\right)\right]^{1-d_{ij3}}} \\
&=\frac{\left[1-F\left(\alpha d_{ij0}+\beta r_{ij0}+A_{ij}\right)\right]\left[1-F\left(\beta r_{ij2}+A_{ij}\right)\right]F\left(\beta r_{ij1}+A_{ij}\right)^{d_{ij3}}\left[1-F\left(\beta r_{ij1}+A_{ij}\right)\right]^{1-d_{ij3}}}{\left[1-F\left(\alpha d_{ij0}+\beta r_{ij0}+A_{ij}\right)\right]\left[1-F\left(\beta r_{ij1}+A_{ij}\right)\right]F\left(\beta r_{ij2}+A_{ij}\right)^{d_{ij3}}\left[1-F\left(\beta r_{ij2}+A_{ij}\right)\right]^{1-d_{ij3}}} \\
&=\frac{F\left(\beta r_{ij1}+A_{ij}\right)^{d_{ij3}}\left[1-F\left(\beta r_{ij2}+A_{ij}\right)\right]^{d_{ij3}}}{\left[1-F\left(\beta r_{ij1}+A_{ij}\right)\right]^{d_{ij3}}F\left(\beta r_{ij2}+A_{ij}\right)^{d_{ij3}}}. 
\end{align*}
In the logit case this ratio simplifies to $c^{00}(q_{ij};\theta)=\exp\left(-\beta d_{ij3}(r_{ij2}-r_{ij1})\right)$. As in the \emph{partial dyad} case, we can combine $c^{00}(q_{ij};\theta)$ and $c^{11}(q_{ij};\theta)$ into a generic expression of $c(q_{ij};\theta)=\exp\left(\beta (d_{ij1}-d_{ij3})(r_{ij2}-r_{ij1})\right)$. As in our analysis for dyads in $\mathcal{D}_{p,o}^{pd}$, since $d_{ij1}=d_{ij2}$, we also obtain the equivalence $c(q_{ij};\theta)=b(q_{ij};\theta)$ for dyads in $\mathcal{D}_{p}^{fd}$. \\

The analysis above indicates that the periods $t=1,2,3$ contributions of all dyads either coincide or, in the logit case, equal $b(q_{ij};\theta)$. The precise contribution of $ij$ depends upon which dyad set it belongs to and whether the edges of various identifying star systems are swapped in $\mathbf{b}^3$ relative to their configuration in $\mathbf{d}^3$. 

Let $v\left(\mathcal{S}_{p\mathbf{i}},\mathbf{b}^3\right)$ equal $1$ if the edges of identifying star system $\mathcal{S}_{p\mathbf{i}}$ are swapped and zero otherwise. The analysis above allows us to re-write \eqref{eq: key_likelihood_ratio} as.
\begin{align*}
\frac{\Pr\left(\mathbf{D}^{3}=\mathbf{b}^{3};\theta,\mathbf{A},\pi\right)}{\Pr\left(\mathbf{D}^{3}=\mathbf{d}^{3};\theta,\mathbf{A},\pi\right)} = \prod_{p=2}^K \prod_{\mathbf{i}=\mathbf{1}}^{\mathbf{m}_{p,N}} \left[\left(1-v\left(\mathcal{S}_{p\mathbf{i}},\mathbf{b}^3\right)\right) +  v\left(\mathcal{S}_{p\mathbf{i}},\mathbf{b}^3\right)\prod_{ij\in \bar{S}_{p\mathbf{i}}} b(q_{ij};\theta)\right],
\end{align*}
where we recall the definition $\bar{\mathcal{S}}_{p\mathbf{i}}=\mathcal{S}_{p\mathbf{i}}\cup \mathcal{S}_{p\mathbf{i}}^{fd} \cup \mathcal{S}_{p\mathbf{i},o}^{pd}$ with $\mathcal{S}_{p\mathbf{i}}^{fd}=\{ij \in \mathcal{D}_{p}^{fd}|  (i,j)\in v(\mathcal{S}_{p\mathbf{i}})\times v(\mathcal{S}_{p\mathbf{i}})\}$ and $\mathcal{S}_{p\mathbf{i},o}^{pd}=\{ij \in \mathcal{D}_{p,o}^{pd}|  (i,j)\in v(\mathcal{S}_{p\mathbf{i}})\times \mathcal{V}^{o}\}$.

Summing \eqref{eq: key_likelihood_ratio} over all $\mathbf{b}^{3}\in\mathbb{B}_{K}\left(\mathbf{d}^{3}\right)$ thus yields
\begin{align*}          
    \frac{\sum_{\mathbf{b}^{3}\in\mathbb{B}_{K}\left(\mathbf{d}^{3}\right)}\Pr\left(\mathbf{D}^{3}=\mathbf{b}^{3};\theta,\mathbf{A},\pi\right)}{\Pr\left(\mathbf{D}^{3}=\mathbf{d}^{3};\theta,\mathbf{A},\pi\right)}&=\prod_{p=2}^K \prod_{\mathbf{i}=\mathbf{1}}^{\mathbf{m}_{p,N}} \left[1+\prod_{ij\in \bar{S}_{p\mathbf{i}}} b(q_{ij};\theta)\right].
\end{align*}
Note that the product to the right of the equality above evaluates to a sum of $2^{\sum_{p=2}^K \mathbf{m}_{p,N}}$ terms,
one for each element of $\mathbb{B}_{K}\left(\mathbf{d}^{3}\right)$, as required. Each identifying star system $\mathbf{i}=\mathbf{1},\ldots,\mathbf{m}_{p,N}$ (for $p=2,\ldots,K$) either contributes $1$ (when not swapped in $\mathbf{b}^3$) or $\prod_{ij\in \bar{S}_{p\mathbf{i}}} b(q_{ij};\theta)$ (when swapped in $\mathbf{b}^3$) to the product which forms each summand to the left of the equality.

Inverting yields \eqref{eq: conditional_likelihood_ss_v2} as claimed.

\subsection{Proof of Theorem \ref{thm: markov_draw} (Null Sampler)}\label{app: markov_draw_proof}
Denote the state transition graph of the Markov Chain defined by Algorithm
\ref{alg: markov_draw} by $\Phi=\left(\mathcal{V}_{\phi},\mathcal{A}_{\phi}\right)$.
The vertex set, $\mathcal{V}_{\phi}$, coincides with the set of all
network sequences in $\mathbb{B}_{K}\left(\mathbf{d}^{3}\right)$. We use $v_{\phi}$ to denote a generic element of the vertex set. We first characterize the arc set of the state transition graph, allowing us to verify irreducibility and ergodicity of the Markov chain. The main result then follows by checking conditions (i) and (ii) of Theorem \ref{thm: stationary_distribution}.

The arc set, $\mathcal{A}_{\phi}$, is defined as follows.
\begin{enumerate}
\item For each vertex $v_{\phi}\in\mathcal{V}_{\phi}$ there exists the
self-loop $\left(v_{\phi},v_{\phi}\right)$ with probability weight
$\rho$ (see Step 2 of Algorithm \ref{alg: markov_draw}).
\item We say that $v_{\phi}'$ is a \textit{neighbor} of $v_{\phi}$ (with
$v_{\phi}\neq v_{\phi}'$) if it is possible to move (from network
sequence) $v_{\phi}$ to $v_{\phi}'$ in a single step of our algorithm
(i.e., by swapping the edges of a single star system). Let $\mathcal{N}\left(v_{\phi}\right)$
denote the set of vertices adjacent to $v_{\phi}$.
For each such neighbor, the arc $\left(v_{\phi},v_{\phi}'\right)$
is present in $\mathcal{A}_{\phi}$. Note $\left|\mathcal{N}\left(v_{\phi}\right)\right|=\mathbf{m}_{2:K,N}$
for all $v_{\phi}\in\mathcal{V}_{\phi}$. This follows since Algorithm
\ref{alg: markov_draw}, in step $3$, selects exactly one out of the
$\mathbf{m}_{2:K,N}=\sum_{p=2}^{K}\mathbf{m}_{p,N}$ stable star
systems uniformly at random. Let $\tilde{\mathbf{b}}^{3}$ be the network
sequence associated with vertex $v_{\phi}'$ and $\mathbf{b}^{3}$
the one asssociated with $v_{\phi}$. The probability weight on arc  $\left(v_{\phi},v_{\phi}'\right)$ equals
\begin{equation}
   p\left(\mathbf{b}^{3},\tilde{\mathbf{b}}^{3}\right)=\frac{1-\rho}{\mathbf{m}_{2:K,N}}\min\left\{ 1,\frac{\ell^{css}\left(\tilde{\mathbf{b}}^{3};\theta_{0}\right)}{\ell^{css}\left(\mathbf{b}^{3};\theta_{0}\right)}\right\}.\label{eq: arc_prob}
\end{equation}
\item Finally there exists an additional self loop $\left(v_{\phi},v_{\phi}\right)$
if all the arcs leaving $v_{\phi},$ defined in bullet points 1 and 2 above do not
have probability weights which sum to $1$. The probability attached
to this additional self-loop is one minus the probability of all other arcs with
source (tail) $v_{\phi}$.
\end{enumerate}
Let $v_{\phi}$ be the (unique) vertex associated with network sequence
$\mathbf{b}^{3}\in\mathbb{B}_{K}\left(\mathbf{d}^{3}\right)$ and $v_{\phi}'$
the one associated with $\tilde{\mathbf{b}}^{3}\in\mathbb{B}_{K}\left(\mathbf{d}^{3}\right)$.
Let $\delta\left(v_{\phi}\right)=\ell^{css}\left(\mathbf{b}^{3};\theta_{0}\right)$
and $\delta\left(v_{\phi}'\right)=\ell^{css}\left(\mathbf{\tilde{b}}^{3};\theta_{0}\right)$
be the associated conditional star system likelihood probabilities.
Let $L=2^{\mathbf{m}_{2:K,N}}$ denote the cardinality of
the set $\mathbb{B}_{K}\left(\mathbf{d}^{3}\right)$ and hence the
number of vertices in the state transition graph. The $L\times L$
transition matrix of edge weights for this graph, $\mathbf{P}=\left[p_{v_{\phi},v_{\phi}'}\right]_{v_{\phi},v_{\phi}'\in\mathcal{V}_{\phi}}$,
has elements
\begin{equation}
p_{v_{\phi},v_{\phi}'}=\left\{ \begin{array}{ll}
\frac{1-\rho}{\mathbf{m}_{2:K,N}}\min\left\{ 1,\frac{\delta\left(v_{\phi}'\right)}{\delta\left(v_{\phi}\right)}\right\}  & \text{if } v_{\phi}\neq v_{\phi}' \text{ and } v_{\phi}'\in\mathcal{N}\left(v_{\phi}\right) \\
0 & \text{if } v_{\phi}\neq v_{\phi}' \text{ and } v_{\phi}'\notin\mathcal{N}\left(v_{\phi}\right)\\
1-\sum_{v_{\phi}'\neq v_{\phi}}p_{v_{\phi},v_{\phi}'} & \text{if } v_{\phi} = v_{\phi}'
\end{array}\right..\label{eq: transition_probabilities}
\end{equation}

Next observe that the state transition graph is strongly connected.
Studying equation \eqref{eq: arc_prob} we see that $p\left(\mathbf{b}^{3},\tilde{\mathbf{b}}^{3}\right)$ $>0$
for all $\mathbf{b}^{3},\tilde{\mathbf{b}}^{3}\in\mathbb{B}_{K}\left(\mathbf{d}^{3}\right)$,
and hence that the weight on arc $\left(v_{\phi},v_{\phi}'\right)$ is strictly positive
for all $v_{\phi}'\in\mathcal{N}\left(v_{\phi}\right).$ Similarly
for any such $v_{\phi}'$ we have $v_{\phi}\in\mathcal{N}\left(v_{\phi}'\right)$
and hence $\left(v_{\phi'},v_{\phi}\right)$ also endowed with positive
weight. Since any $\tilde{\mathbf{b}}^{3}\in\mathbb{B}_{K}\left(\mathbf{d}^{3}\right)$
can be constructed by swapping the edges of a finite number of stable
star systems starting from any other $\mathbf{b}^{3}\in\mathbb{B}_{K}\left(\mathbf{d}^{3}\right)$,
there exists a directed path, with positive probability weight, between
any two vertices in $\Phi$. The state transition graph is strongly
connected and hence the Markov chain defined by Algorithm \ref{alg: markov_draw}
is irreducible. Due to the presence of self-loops, the state transition
graph is not bipartite and therefore aperiodic. Since the chain is
finite, irreducible, and aperiodic it is ergodic with a unique stationary
distribution. We can use Theorem \ref{thm: stationary_distribution}
to verify that this stationary distribution coincides with the claimed
(null) sampling distribution. 

To use the Theorem we associate $\pi_{l}=\delta\left(v_{\phi}\right),$
$\pi_{m}=\delta\left(v_{\phi}'\right)$, $p_{lm}=\frac{1-\rho}{\mathbf{m}_{2:K,N}}\min\left\{ 1,\delta\left(v_{\phi}'\right)/\delta\left(v_{\phi}\right)\right\} $
and $p_{ml}=\frac{1-\rho}{\mathbf{m}_{2:K,N}}\min\left\{ 1,\delta\left(v_{\phi}\right)/\delta\left(v_{\phi}'\right)\right\}$.
Condition (i) of Theorem \ref{thm: stationary_distribution} is satisfied
since $\ell^{css}\left(\cdot;\theta_{0}\right)$ is a probability mass function
with support $\mathbb{B}_{K}\left(\mathbf{d}^{3}\right)$ and hence
$\sum_{v_{\phi}\in\mathcal{V}_{\phi}}\delta\left(v_{\phi}\right)=1$.
Requirement (ii) is the so called \emph{detailed
balanced} condition.

To verify this condition let $v_{\phi}$ and $v_{\phi}'$ be two vertices
in $\mathcal{V}_{\phi}$ labeled, without loss of generality, such
that $\delta\left(v_{\phi}\right)\geq\delta\left(v_{\phi}'\right)$.
If $v_{\phi}'\notin\mathcal{N}\left(v_{\phi}\right)$ or $v_{\phi}'=v_{\phi}$,
then condition (ii) of Theorem \ref{thm: stationary_distribution}
holds trivially, otherwise for $v_{\phi}'\in\mathcal{N}\left(v_{\phi}\right)$
\begin{align*}
\pi_{l}p_{lm}= & \delta\left(v_{\phi}\right)\frac{1-\rho}{\mathbf{m}_{2:K,N}}\min\left\{ 1,\frac{\delta\left(v_{\phi}'\right)}{\delta\left(v_{\phi}\right)}\right\} \\
= & \delta\left(v_{\phi}\right)\frac{1-\rho}{\mathbf{m}_{2:K,N}}\frac{\delta\left(v_{\phi}'\right)}{\delta\left(v_{\phi}\right)}\\
= & \delta\left(v_{\phi}'\right)\frac{1-\rho}{\mathbf{m}_{2:K,N}}\\
= & \delta\left(v_{\phi}'\right)\frac{1-\rho}{\mathbf{m}_{2:K,N}}\min\left\{ 1,\frac{\delta\left(v_{\phi}\right)}{\delta\left(v_{\phi}'\right)}\right\} \\
= & \pi_{m}p_{ml}
\end{align*}
as required. The second equality follows from $\min\left\{ 1,\delta\left(v_{\phi}'\right)/\delta\left(v_{\phi}\right)\right\} =\delta\left(v_{\phi}'\right)/\delta\left(v_{\phi}\right)$
and the fourth from $\min\left\{ 1,\delta\left(v_{\phi}\right)/\delta\left(v_{\phi}'\right)\right\} =1$
(both implications of $\delta\left(v_{\phi}\right)\geq\delta\left(v_{\phi}'\right)$).
This gives a stationary distribution for our Markov chain of
\begin{align*}
\underline{\pi}= & \left(\delta\left(v_{\phi1}\right),\ldots,\delta\left(v_{\phi L}\right)\right)'\\
= & \left(\ell^{css}\left(\mathbf{b}_{1}^{3};\theta_{0}\right),\ldots,\ell^{css}\left(\mathbf{b}_{L}^{3};\theta_{0}\right)\right)'
\end{align*}
with $\mathbb{B}_{K}\left(\mathbf{d}^{3}\right)=\left\{ \mathbf{b}_{1}^{3},\ldots,\mathbf{b}_{L}^{3}\right\} $
and $\mathcal{V}_{\phi}=\left\{ v_{\phi1},\ldots,v_{\phi L}\right\}$  (conformably arranged).

\subsection{Proof of Theorem \ref{thm: consistency} (Consistency)}\label{app: consistency_proof}
Define:
\begin{align*}
    \tilde Q_N(\theta):&=\mathbb{E}\left[\left.\hat{Q}_N\left(\theta\right)\right|\mathbf{D}_0,\mathbf{D}_3,\mathbf{Z},\mathbf{A}\right]\\
    &=\frac{1}{\mathbb{E}\left[\sum_{p,\mathbf{i}}  Z_{p\mathbf{i}}  \right]} \sum_{p=2}^K \sum_{\mathbf{i}=\{i_1, i_2,\dots, i_p\}}Z_{p\mathbf{i}} \cdot  \mathbb{E}\left[ \left. \log\left(\frac{\exp(T_{p\mathbf{i}}'\theta)}{1+\exp(T_{p\mathbf{i}}'\theta)} \right) \right|\mathbf{D}_0,\mathbf{D}_3,\mathbf{Z},\mathbf{A}\right],
\end{align*}
and recall the definition of $\hat Q_N(\theta)$ given in the main text. The proof involves verifying that the conditions of Lemma \ref{lem: consistency} in Appendix \ref{app: preliminary-results} are satisfied whenever Assumptions \ref{ass: parameter_space} to \ref{ass: identifying_events} hold. This involves the following steps:
\begin{enumerate}
        \item Show uniform convergence to the target: $\sup_{\theta \in \Theta} 
  \big| \hat Q_N(\theta) - \tilde Q_N(\theta) \big| 
  \xrightarrow{p} 0.$ This involves two sub-steps:
  \begin{enumerate}
      \item Demonstrating point-wise convergence, for all $\theta \in \Theta$: $\widehat Q_N(\theta) - \tilde Q_N(\theta) \overset{p}{\rightarrow} 0$
      \item Inferring uniform convergence through a stochastic equicontinuity argument (following \cite{Newey1991}).
      
  \end{enumerate}
  \item Show that $\theta_0$ is a well separated maximizer of $\tilde Q_N$, by establishing a lower bound on the hessian of $\tilde Q_N$.
\end{enumerate}

\noindent \textbf{Step 1.a:}
Fix $\theta$; by conditional independence (Lemma \ref{lem: conditional-independence}) we have
\begin{multline*}
    \mathbb{V}(\left.\hat{Q}_N(\theta)\right|\mathbf{D}_0,\mathbf{D}_3,\mathbf{Z},\mathbf{A} ) \\
    = \frac{1}{\left(\mathbb{E}\left[\sum_{p,\mathbf{i}}  Z_{p\mathbf{i}}  \right]\right)^2}\sum_{p=2}^K \sum_{\mathbf{i}=\{i_1, i_2,\dots, i_p\}}Z_{p\mathbf{i}} \cdot\mathbb{V}\left(\left.\ln\left(\frac{\exp(T_{p\mathbf{i}}'\theta)}{1+\exp(T_{p\mathbf{i}}'\theta)}  \right) \right|\mathbf{D}_0,\mathbf{D}_3,\mathbf{Z},\mathbf{A} \right).
\end{multline*}
Next observe that for all $x \in \mathbb{R}^1$:
\begin{equation}\label{softmaxBound}
\min \{0, x \} - \ln(2) \leq \ln\left(\frac{\exp(x)}{1+\exp(x)}  \right)\leq \min \{0, x \},
\end{equation}
and hence
\begin{equation*}
\mathbb{V}\left(\left.\ln\left(\frac{\exp(T_{p\mathbf{i}}'\theta)}{1+\exp(T_{p\mathbf{i}}'\theta)}  \right) \right|\mathbf{D}_0,\mathbf{D}_3,\mathbf{Z},\mathbf{A} \right) \leq \mathbb{E}\left[(|T_{p\mathbf{i}}'\theta|+\ln(2))^2\right].
\end{equation*}
Therefore, by Assumptions \ref{ass: parameter_space} and  \ref{ass: boundedness}:
\begin{multline*}
    \mathbb{V}\left(\left.\hat{Q}_N(\theta)\right|\mathbf{D}_0,\mathbf{D}_3,\mathbf{Z},\mathbf{A} \right) \leq \frac{1}{\left(\mathbb{E}\left[\sum_{\mathbf{i},p}  Z_{\mathbf{i},p}  \right]\right)^2}\sum_{p=2}^K \sum_{\mathbf{i}=\{i_1, i_2,\dots, i_p\}} Z_{p\mathbf{i}} \cdot \mathbb{E}\left[(|T_{p\mathbf{i}}'\theta|+\ln(2))^2\right],
\end{multline*}
and consequently
\begin{equation*}
\mathbb{E} \left[ \mathbb{V}\left(\left.\hat{Q}_N(\theta)\right|\mathbf{D}_0,\mathbf{D}_3,\mathbf{Z},\mathbf{A} \right)\right] \leq \frac{C}{\mathbb{E}\left[\sum_{p,\mathbf{i}}  Z_{p\mathbf{i}}  \right]}
\end{equation*}
for some constant $C$ (by Assumption \ref{ass: parameter_space} and  \ref{ass: boundedness}). 

Assumption \ref{ass: identifying_events}, and non-negativity of variances, then yields $\mathbb{V}\left(\left.\hat{Q}_N(\theta)\right|\mathbf{D}_0,\mathbf{D}_3,\mathbf{Z},\mathbf{A} \right) \overset{p}{\rightarrow} 0$. From this we conclude that $\hat{Q}_N(\theta) - \mathbb{E}\left[\left.\hat{Q}_N(\theta)\right|\mathbf{D}_0,\mathbf{D}_3,\mathbf{Z},\mathbf{A} \right]\overset{p}{\rightarrow} 0$ (i.e., $\widehat Q_N(\theta) -\tilde Q_N (\theta) \overset{p}{\rightarrow} 0 $ for any $\theta$). This completes Step 1.a.\\

\textbf{Step 1.b:} We demonstrate uniform convergence by verifying the conditions of Corollary 2.2 in \cite{Newey1991} for $\Delta Q_N:=\widehat Q_N - \tilde Q_N$. Fix $\theta, \tilde \theta \in \Theta$, by Assumption \ref{ass: parameter_space} and the mean value theorem, there exists some $\bar \theta\in \left(\theta, \tilde \theta\right)$ such that
\begin{align*}
    \Delta Q_N(\theta)- \Delta Q_N(\tilde \theta) =& \frac{\partial \Delta Q_N}{\partial \theta}(\bar \theta) (\theta-\tilde \theta)\\
    =& \frac{1}{\mathbb{E}\left[\sum_{p,\mathbf{i}}  Z_{p\mathbf{i}} \right]} \\
    &\quad \times\sum_{p, \mathbf{i}} Z_{p\mathbf{i}} \cdot \left( (1-\Lambda(T_{p \mathbf{i}}'\bar \theta)) T_{p \mathbf{i}}' - \mathbb{E}\left[\left.(1-\Lambda(T_{p \mathbf{i}}'\bar \theta)) T_{ p\mathbf{i}}'\right|\mathbf{D}_0,\mathbf{D}_3,\mathbf{Z},\mathbf{A}\right]  \right)\\
    &\quad  \quad \times (\theta -\tilde \theta),
\end{align*}
recalling that $\Lambda\left(x\right)\overset{def}{\equiv}\frac{\exp(x)}{1+\exp(x)}.$ By the TI and CSI we have
\begin{align*}
    |\Delta Q_N(\theta)- \Delta Q_N(\tilde \theta)|&\leq \frac{1}{\mathbb{E}\left[\sum_{p,\mathbf{i}}  Z_{p\mathbf{i}}  \right]}\sum_{p, \mathbf{i}} Z_{p \mathbf{i}} \cdot \left(  \|T_{p\mathbf{i}}\| + \mathbb{E}\left[\left. \|T_{p\mathbf{i}}\| \right|\mathbf{D}_0,\mathbf{D}_3,\mathbf{Z},\mathbf{A}\right]  \right) \cdot \|\theta -\tilde \theta\|.
\end{align*}
To apply Corollary 2.2 of \cite{Newey1991}, it suffices to show that $$B_N:= \frac{1}{\mathbb{E}\left[\sum_{p,\mathbf{i}}  Z_{p\mathbf{i}}  \right]}\sum_{p, \mathbf{i}} Z_{p\mathbf{i}} \left(  \|T_{p\mathbf{i}}\|+ \mathbb{E}\left[\left.\|T_{p\mathbf{i}}\| \right | \mathbf{D}_0,\mathbf{D}_3,\mathbf{Z},\mathbf{A}\right]  \right) =O_p(1)$$
which is established by noting that $\mathbb{E}\left[B_N\right]$ is bounded uniformly in $N$ by Assumption \ref{ass: boundedness}.\footnote{Note that Assumption \ref{ass: boundedness} is expressed in terms of moments of $T_{pi}$ conditional on $(\mathbf{D}_0,\mathbf{D}_3,\mathbf{Z})$, not $(\mathbf{D}_0,\mathbf{D}_3,\mathbf{Z},\mathbf{A})$. A consequence of Lemma \ref{lem: conditional-independence} is that the distribution of $T_{p\mathbf{i}}$ conditional on $\mathbf{D}_0,\mathbf{D}_3,\mathbf{Z},\mathbf{A}$ is independent of $\mathbf{A}$. Hence, the moments of ${T}_{p\mathbf{i}}$ under both sets of conditioning variables are equal.} We therefore conclude that
\[
\sup_{\theta \in \Theta} 
  \big| \widehat Q_N(\theta) - \tilde Q_N(\theta) \big| 
  \xrightarrow{p} 0
\]
by applying Corollary 2.2 in \cite{Newey1991}, where we take Newey's $\hat{Q}_N$ to be $\Delta Q_N$ in this proof and Newey's $\bar Q_N$ to be the constant $0$ function. This completes Step 1.b.\\

\textbf{Step 2:} First, we show that $\tilde Q_N$ is maximized at $\theta_0$ for all $N$. The conditional likelihood of  $\mathbf{d}^3$ given $\mathbf{D}_0,\mathbf{D}_3,\mathbf{Z}_p,\mathbf{A}$ is\footnote{Let  $\mathbb{B}^+_{p}\left(\mathbf{d}^{3}\right)$ denote the set of network sequences reachable by swapping the edges of identifying star systems of \emph{size $p$ alone}. The information content of the event $\mathbf{D}^3 \in \mathbb{B}^+_{p}\left(\mathbf{d}^{3}\right)$, and $\mathbf{Z}_p$ taking a particular configuration, coincides.}
\[
\Pr(\left.\mathbf{d}^3\right|\mathbf{D}_0,\mathbf{D}_3,\mathbf{Z}_p,\mathbf{A})=\prod_{\mathbf{i}} \left(\frac{\exp(T_{p\mathbf{i}}'\theta_0)}{1+\exp(T_{p\mathbf{i}}'\theta_0)} \right)^{Z_{p\mathbf{i}}}
\]
therefore, for almost any $\mathbf{D}_0,\mathbf{D}_3,\mathbf{Z}_p,\mathbf{A}$, the function $\theta\rightarrow \mathbb{E}\left[\left.\ln \left( \prod_{\mathbf{i}} \left(\frac{\exp(T_{p\mathbf{i}}'\theta)}{1+\exp(T_{p\mathbf{i}}'\theta)} \right)^{Z_{p\mathbf{i}}}\right)\right| \mathbf{D}_0,\mathbf{D}_3,\mathbf{Z}_p,\mathbf{A} \right]$ is maximized at $\theta_0$.

Fix $\epsilon>0$. Concavity of $\tilde Q_N$ in $\theta$ and being maximized at $\theta_0$ for every $N$, implies that $\sup_{\|\theta - \theta_0\| \ge \varepsilon} \tilde Q_N(\theta)$ is achieved at some $\theta_{\epsilon}$ such that $||\theta_0-\theta_\epsilon||=\epsilon$. The multivariate mean value theorem (and by Assumption \ref{ass: parameter_space}) gives, for some $\bar{\theta}_\epsilon\in (\theta, \theta_\epsilon)$ and defining $\Delta \theta:= \theta_\epsilon-\theta_0$:
\begin{align*}
    \tilde Q_N(\theta_\epsilon)-\tilde Q_N(\theta_0) = &\frac{\partial \tilde Q_N }{\partial \theta} (\theta_0) \Delta \theta+\Delta \theta' \frac{\partial^2 \tilde Q_N }{\partial \theta^2} (\bar{\theta}_\epsilon) \Delta \theta\\
    =& 0 +\Delta \theta' \frac{\partial^2 \tilde Q_N }{\partial \theta^2} (\bar{\theta}_\epsilon) \Delta \theta\\
    =&-\frac{1}{\mathbb{E}\left(\sum_{p\mathbf{i}}Z_{p\mathbf{i}}  \right)} \\
    & \times\sum_{p,\mathbf{i}} Z_{p\mathbf{i}} \cdot  \Delta \theta'\mathbb{E}\left[\left. \Lambda(T_{p\mathbf{i}}' \bar{\theta}_\epsilon)(1-\Lambda(T_{p\mathbf{i} }'\bar{\theta}_\epsilon))T_{p\mathbf{i} }T_{p\mathbf{i} }'\right|\mathbf{D}_0,\mathbf{D}_3,\mathbf{Z},\mathbf{A} \right]\Delta \theta.
\end{align*}

\noindent Fix some constant $a>0$; for any  $\mathbf{i}$ and $p$ we have, using the MI and CSI,
\begin{align*}
\Delta\theta'\mathbb{E}\left[\left.\Lambda(T_{p\mathbf{i}}'\bar{\theta}_{\epsilon})(1-\Lambda(T_{p\mathbf{i}}'\bar{\theta}_{\epsilon}))T_{p\mathbf{i}}T_{p\mathbf{i}}'\right|\mathbf{D}_{0},\mathbf{D}_{3},\mathbf{Z},\mathbf{A}\right]\Delta\theta\geq & \Lambda(a)(1-\Lambda(a))\\
 & \times\mathbb{E}\left[\left.\Delta\theta'T_{p\mathbf{i}}T_{p\mathbf{i}}'\Delta\theta\mathbf{1}(\left|T_{p\mathbf{i}}'\bar{\theta}_{\epsilon}\right|\leq a)\right|\mathbf{D}_{0},\mathbf{D}_{3},\mathbf{Z},\mathbf{A}\right]\\
\geq & \Lambda(a)(1-\Lambda(a))\left\{ \mathbb{E}\left[\left.\Delta\theta'T_{p\mathbf{i}}T_{p\mathbf{i}}'\Delta\theta\right|\mathbf{D}_{0},\mathbf{D}_{3},\mathbf{Z},\mathbf{A}\right]\right.\\
 & \left.-\mathbb{E}\left[\left.\Delta\theta'T_{p\mathbf{i}}T_{p\mathbf{i}}'\Delta\theta\mathbf{1}(\left|T_{p\mathbf{i}}'\bar{\theta}_{\epsilon}\right|\geq a)\right|\mathbf{D}_{0},\mathbf{D}_{3},\mathbf{Z},\mathbf{A}\right]\right\} \\
\geq & \Lambda(a)(1-\Lambda(a))\left\{ \mathbb{E}\left[\left.\Delta\theta'T_{p\mathbf{i}}T_{p\mathbf{i}}'\Delta\theta\right|\mathbf{D}_{0},\mathbf{D}_{3},\mathbf{Z},\mathbf{A}\right]\right.\\
 & -\frac{1}{a}\left(\mathbb{E}\left[\left.(\Delta\theta'T_{p\mathbf{i}}T_{p\mathbf{i}}'\Delta\theta)^{2}\right|\mathbf{D}_{0},\mathbf{D}_{3},\mathbf{Z},\mathbf{A}\right]\right)^{1/2}\\
 & \left.\times\left(\mathbb{E}\left[\left.\bar{\theta}_{\epsilon}'T_{p\mathbf{i}}T_{p\mathbf{i}}'\bar{\theta}_{\epsilon}\right|\mathbf{D}_{0},\mathbf{D}_{3},\mathbf{Z},\mathbf{A}\right]\right)^{1/2}\right\}. 
\end{align*}
Combining the results above we get:
\begin{align}
\tilde{Q}_{N}(\theta_{0})-\tilde{Q}_{N}(\theta_{\epsilon})\geq & \frac{\Lambda(a)(1-\Lambda(a))}{\mathbb{E}\left[\sum_{p,\mathbf{i}}Z_{p\mathbf{i}}\right]}\sum_{p,\mathbf{i}}Z_{p\mathbf{i}}\cdot\left\{ \mathbb{E}\left[\left.\Delta\theta'T_{p\mathbf{i}}T_{p\mathbf{i}}'\Delta\theta\right|\mathbf{D}_{0},\mathbf{D}_{3},\mathbf{Z},\mathbf{A}\right]\right.\notag\\ 
 & -\frac{1}{a}\left(\mathbb{E}\left[\left.(\Delta\theta'T_{p\mathbf{i}}T_{p\mathbf{i}}'\Delta\theta)^{2}\right|\mathbf{D}_{0},\mathbf{D}_{3},\mathbf{Z},\mathbf{A}\right]\right)^{1/2}\notag\\
 & \left.\times\left(\mathbb{E}\left[\left.\bar{\theta}_{\epsilon}'T_{p\mathbf{i}}T_{p\mathbf{i}}'\bar{\theta}_{\epsilon}\right|\mathbf{D}_{0},\mathbf{D}_{3},\mathbf{Z},\mathbf{A}\right]\right)^{1/2}\right\}\notag \\
\geq & \Lambda(a)(1-\Lambda(a))\left\{ \Delta\theta'\mathbb{E}\left[\left.\frac{1}{\mathbb{E}\left[\sum_{p,\mathbf{i}}Z_{p\mathbf{i}}\right]}\sum_{p,\mathbf{i}}Z_{p\mathbf{i}}T_{p\mathbf{i}}T_{p\mathbf{i}}'\right|\mathbf{D}_{0},\mathbf{D}_{3},\mathbf{Z},\mathbf{A}\right]\Delta\theta\right.\notag\\
 & -\frac{1}{a}\left(\frac{1}{\mathbb{E}\left[\sum_{p,\mathbf{i}}Z_{p\mathbf{i}}\right]}\sum_{p,\mathbf{i}}Z_{p\mathbf{i}}\mathbb{E}\left[\left.(\Delta\theta'T_{p\mathbf{i}}T_{p\mathbf{i}}'\Delta\theta)^{2}\right|\mathbf{D}_{0},\mathbf{D}_{3},\mathbf{Z},\mathbf{A}\right]\right)^{1/2}\notag\\
 & \left.\times\left(\frac{1}{\mathbb{E}\left[\sum_{p,\mathbf{i}}Z_{p\mathbf{i}}\right]}\sum_{p,\mathbf{i}}Z_{p\mathbf{i}}\mathbb{E}\left[\left.\bar{\theta}_{\epsilon}'T_{p\mathbf{i}}T_{p\mathbf{i}}'\bar{\theta}_{\epsilon}\right|\mathbf{D}_{0},\mathbf{D}_{3},\mathbf{Z},\mathbf{A}\right]\right)^{1/2}\right\}\notag \\
 & \geq\Lambda(a)(1-\Lambda(a))\left\{ \epsilon^{2}C_{1}\right.\notag\\
 & -\frac{1}{a}\left(\frac{1}{\mathbb{E}\left[\sum_{p,\mathbf{i}}Z_{p\mathbf{i}}\right]}\sum_{p,\mathbf{i}}Z_{p\mathbf{i}}\|\Delta\theta\|^{4}\mathbb{E}\left[\left.\|T_{p\mathbf{i}}\|^{4}\right|\mathbf{D}_{0},\mathbf{D}_{3},\mathbf{Z},\mathbf{A}\right]\right)^{1/2}\notag\\
 & \left.\times\left(\sup_{\Theta}\|\theta\|^{2}\frac{1}{\mathbb{E}\left[\sum_{p,\mathbf{i}}Z_{p\mathbf{i}}\right]}\sum_{p,\mathbf{i}}Z_{p\mathbf{i}}\mathbb{E}\left[\left.\|T_{p\mathbf{i}}\|^{2}\right|\mathbf{D}_{0},\mathbf{D}_{3},\mathbf{Z},\mathbf{A}\right]\right)^{1/2}\right\}\notag \\
 & \geq\Lambda(a)(1-\Lambda(a))\bigg[\epsilon^{2}C_{1}-\frac{1}{a}\sqrt{\epsilon^{4}C_{2}}\sqrt{\sup_{\Theta}\|\theta\|^{2}\sqrt{C_{2}}}\bigg],\label{eq: LowerBound}
\end{align}
where the first relationship follows immediately from the calculations above, the second from Jensen's inequality, the third from Assumption \ref{ass: parameter_space}, Assumption \ref{ass: rank}, and the relationship $\mathrm{tr}\left(ba'ab'\right)^{1/2}=\left(b'ba'a\right)^{1/2}=\left\Vert a\right\Vert \left\Vert b\right\Vert$, and -- finally -- the fourth from Assumptions \ref{ass: rank} and \ref{ass: boundedness}.

Finally, if we choose an $a>0$ such that
\[
\delta(\epsilon):= \Lambda(a)(1-\Lambda(a)) \left[C_1 \epsilon^2 -\frac{1}{a}\epsilon^2 \sqrt{\sup_\Theta ||\theta|| ^2  }C_2^{\frac{3}{4}}\right]>0,
\]
we have
\begin{equation*}
    \big|\tilde Q_N(\theta_0)-\sup_{\|\theta - \theta_0\| \ge \varepsilon} \tilde Q_N(\theta)\big|\geq\delta(\epsilon)
\end{equation*}
as required. This completes step 2.
   
All the conditions of Lemma \ref{lem: consistency} are met, so we can conclude: $\hat{\theta}\rightarrow_p \theta_0$.

\subsection{Proof of Theorem \ref{thm: asymptotic-normality} (Asymptotic Normality)}\label{app: asymptotic-normality-proof}

\noindent Define 
$$L_N(\theta):=\frac{1}{\sum_{p,\mathbf{i}} \mathbb{E}\left[Z_{p\mathbf{i}}\right]}\sum_{p=2}^K \sum_{\mathbf{i}} l_{p\mathbf{i}} (\theta)$$
with $$ l_{p\mathbf{i}} (\theta):=Z_{p\mathbf{i}} \log\left(\frac{\exp(T_{p\mathbf{i}}'\theta)}{1+\exp(T_{p\mathbf{i}}'\theta)}. \right)  $$
A mean value expansion in $\theta$ around $\theta_0$, in conjunction with Assumption \ref{ass: parameter_space}, yields some $\bar \theta \in (\hat \theta, \theta_0)$ such that
\begin{align*}
   0=\frac{\partial L_N}{\partial \theta}(\hat \theta) &=\frac{\partial L_N}{\partial \theta}(\theta_0)+\frac{\partial^2 L_N}{\partial \theta^2}(\theta_0)(\hat{\theta}-\theta_0)+\sum_{k} (\hat \theta_k -\theta_{0,k}) \frac{\partial^3 L_N}{\partial \theta_k \partial \theta^2}(\bar \theta)(\hat \theta -\theta_{0})\\
   &=\frac{\partial L_N}{\partial \theta}(\theta_0)+\frac{\partial^2 L_N}{\partial \theta^2}(\theta_0)(\hat{\theta}-\theta_0)+o_p(\hat{\theta}-\theta_0),
\end{align*}
since
\begin{align*}
    \frac{\partial^3 L_N}{\partial \theta_j \partial \theta_k \partial \theta_l}(\bar \theta)=-\frac{1}{\sum_{p,\mathbf{i}} \mathbb{E}\left[Z_{p\mathbf{i}}\right]}\sum_{p,\mathbf{i}} Z_{p\mathbf{i}}\cdot\Lambda\left(T_{p\mathbf{i}}'\bar \theta\right)\left(1-\Lambda\left(T_{p\mathbf{i}}'\bar \theta\right)\right)T_{p\mathbf{i}, j} T_{p\mathbf{i}, k} T_{p\mathbf{i}, l},
\end{align*}
and by Assumption \ref{ass: boundedness}, the TI, the LIE, and JI (observing that $f\left(x\right)=x^{3/4}$ is concave):
\begin{align*}
    \mathbb{E}\left[\left|\frac{\partial^3 L_N}{\partial \theta_j \partial \theta_k \partial \theta_l}(\bar \theta)\right|\right] & \leq \frac{1}{\sum_{p,\mathbf{i}} \mathbb{E}\left[Z_{p\mathbf{i}}\right]}\sum_{p,\mathbf{i}}\mathbb{E}\left[\left|Z_{p\mathbf{i}} \cdot T_{p\mathbf{i}, j} T_{p\mathbf{i}, k} T_{p\mathbf{i}, l}\right|\right]\\
    & \leq \frac{1}{\sum_{p,\mathbf{i}} \mathbb{E}\left[Z_{p\mathbf{i}}\right]} \sum_{p,\mathbf{i}}\mathbb{E}\left[Z_{p\mathbf{i}}\cdot \|T_{p\mathbf{i}} \|^3\right]\\
    &\leq \frac{1}{\sum_{p, \mathbf{i}}  \mathbb{E}\left[Z_{p\mathbf{i}}\right]} \mathbb{E}\left[\left(\sum_{p, \mathbf{i}} Z_{p\mathbf{i}}\right) \mathbb{E}\left[\left.\frac{\sum_{p, \mathbf{i}} Z_{p\mathbf{i}} \|T_{p\mathbf{i}}\|^3}{\sum_{p, \mathbf{i}} Z_{p\mathbf{i}}}\right|\mathbf{D}_0,\mathbf{D}_3,\mathbf{Z}, \mathbf{A}\right]\right]\\
    &\leq \frac{1}{\sum_{p, \mathbf{i}}  \mathbb{E}\left[Z_{p\mathbf{i}}\right]} \mathbb{E}\left[\left(\sum_{p, \mathbf{i}} Z_{p\mathbf{i}}\right) \mathbb{E}\left[ \left.\frac{\sum_{p, \mathbf{i}} Z_{p\mathbf{i}} \|T_{p\mathbf{i}}\|^4}{\sum_{p, \mathbf{i}} Z_{p\mathbf{i}}}\right|\mathbf{D}_0,\mathbf{D}_3,\mathbf{Z}, \mathbf{A}\right]^{3/4}\right]\\
    & \leq  \frac{1}{\sum_{p, \mathbf{i}} \mathbb{E}\left[Z_{p\mathbf{i}}\right]} \mathbb{E}\left[\left(\sum_{p, \mathbf{i}} Z_{p\mathbf{i}}\right) \left( C_2\right)^{3/4}\right]\leq C_2^{3/4},
\end{align*}
where the last inequality is a direct implication of Assumption \ref{ass: boundedness}. We consequently have
\begin{align*}
    \hat \theta -\theta_0=-\left( \frac{\partial^2 L_N}{\partial \theta^2}(\theta_0)+o_p(1)\right)^{+}\frac{\partial L_N}{\partial \theta}(\theta_0)
\end{align*}
where a `$+$' superscript denotes the generalized inverse. To demonstrate asymptotic normality of $\hat \theta$ we therefore need to show that (i) $\frac{\partial L_N}{\partial \theta}\left(\theta_0\right)$, suitably normalized, asymptotically behaves like a normal random variable and (ii) $\frac{\partial^2 L_N}{\partial \theta^2}\left(\theta_0\right)$ converges in probability to a positive definite matrix.

\noindent \textbf{Asymptotic normality of $\frac{\partial L_N}{\partial \theta}(\theta_0)$}: We use the Wald Device. Fix some $\lambda\in \mathbb{R}^2$.
Note that, by conditional independence:
$$\mathbb{V}\left( \sum_{p, \mathbf{i}} \lambda ' \nabla_\theta l_{p\mathbf{i}}(\theta_0)\right)= \sum_{p, \mathbf{i}} \mathbb{V}\left(  \lambda '\nabla_\theta l_{p\mathbf{i}}(\theta_0)\right)= \sum_{p, \mathbf{i}} \mathbb{E} \left[ \left(  \lambda '\nabla_\theta  l_{p\mathbf{i}}(\theta_0)\right)^2\right].$$

Define the algebra: $\mathcal{G}_N:=\sigma(\mathbf{D}_0,\mathbf{D}_3,\mathbf{Z}, \mathbf{A})$, and enumerate the tuples $(p, \bold i)$ so that every tuple $(p, \bold i)$ corresponds to an integer $i \in \{1,\dots \}$. Define the filtration $\mathcal{F}_{N, i}:= \sigma(\mathcal{G}_N, {t_1}, t_2, \dots, t_i)$.

Define $W_i:=\frac{1}{\sqrt{\mathbb{V}\left(\sum_{p, \mathbf{i} }  \lambda' \nabla_{\theta}l_{p\mathbf{i}}(\theta_{0})\right)}} \lambda' \nabla_\theta l_{i}(\theta_0)$. We verify the conditions of Theorem 2.3 in  \cite{McLeish1974} for the martingale difference $W_i$:

\begin{enumerate}[label=(\alph*)]
    \item $\mathbb{E}\left[\max_i |W_i|^2\right]\leq \sum_i\mathbb{E} \left[|W_i|^2\right]=1$, so $\max_i |W_i|$ is uniformly bounded in $L_2$.

    \item Fix $\epsilon>0$. Note that $\Pr(\max_i |W_i|>\epsilon)\leq \Pr\left(\sum_i W_i^2 1(|W_i|>\epsilon)> \epsilon^2\right)$.\footnote{This follows because occurrence of the event on the right implies that the one on the left also occurs.} Moreover:
    $$ \sum_i W_i^2 \cdot 1(|W_i|>\epsilon) \leq  \frac{1}{\epsilon^2}\sum_i W_i^{4}=\frac{1}{\epsilon^2}\frac{\sum_{p, \mathbf{i}} \left( \lambda' \nabla_{\theta}  l_{p\mathbf{i}}(\theta_{0}) \right)^4 }{\left(\sum_{p, \mathbf{i}} \mathbb{E}\left[(\lambda'\nabla_{\theta}  l_{p\mathbf{i}}(\theta_{0}))^2 \right]\right)^2}.$$
   Next observe that, using Assumption \ref{ass: boundedness}:
 \begin{align*}
    \mathbb{E}\left[\sum_{p, \mathbf{i}} \left( \lambda' \nabla_{\theta}  l_{p\mathbf{i}}(\theta_{0}) \right)^4 \right]&=\sum_{p,\mathbf{i}} \mathbb{E}\left[ \left( \lambda' \nabla_{\theta}  l_{p\mathbf{i}}(\theta_{0}) \right)^4 \right] \\
    &= \sum_{p, \mathbf{i}} \mathbb{E}\left[(1-\Lambda(T_{p\mathbf{i}}'\theta_0 ))^4 \cdot Z_{p\mathbf{i}} \cdot (\lambda'T_{p\mathbf{i}})^4\right]\\
    &\leq \|\lambda\|^4 \sum_{p, \mathbf{i}} \mathbb{E}\left[\|T_{p\mathbf{i}}\|^4 Z_{p\mathbf{i}} \right]\\
    &\leq \|\lambda\|^4 \cdot\mathbb{E}\left[\left(\sum_{p, \mathbf{i}} Z_{p\mathbf{i}}\right) \mathbb{E}\left[\left.\frac{\sum_{p, \mathbf{i}} \|T_{p\mathbf{i}}\|^4 \cdot Z_{p\mathbf{i}}}{\sum_{p, \mathbf{i}} Z_{p\mathbf{i}}} \right| \mathbf{D}_0,\mathbf{D}_3,\mathbf{Z}, \mathbf{A}\right]\right]\\
    &\leq C_2 \cdot \|\lambda\|^4 \cdot \mathbb{E}\left[\sum_{p, \mathbf{i}} Z_{p \mathbf{i}}\right]. 
\end{align*}

Moreover, via an argument similar to that used to derive \eqref{eq: LowerBound} above, there exists a constant $C_{1}$ such that
\begin{equation*}
    \sum_{p, \mathbf{i}} \mathbb{E}\left[(\lambda'\nabla_{\theta}  l_{p \mathbf{i}}(\theta_{0}))^2 \right]\geq C_1 \cdot \|\lambda\|^2 \cdot \mathbb{E}\left[\sum_{p,  \mathbf{i}} Z_{p \mathbf{i}}\right], 
\end{equation*}
and hence we have that:
\begin{align*}
    \mathbb{E}\left[\sum_i W_i^2 \cdot 1(|W_i|>\epsilon) \right]&\leq \frac{1}{\epsilon^2}\frac{C_2 \cdot\|\lambda\|^4 \cdot \sum_{p,  \mathbf{i}} \mathbb{E}\left[Z_{p \mathbf{i}}\right]}{C_1^2 \cdot\|\lambda\|^4 \cdot \left(\sum_{p,  \mathbf{i}} \mathbb{E}\left[Z_{p \mathbf{i}}\right]\right)^2},
\end{align*}
which implies, invoking Assumption \ref{ass: identifying_events}, that: \begin{equation}\label{eq: LindebergFeller}
    \mathbb{E}\left[\sum_i W_i^2 \cdot 1(|W_i|>\epsilon) \right]\rightarrow 0,
\end{equation} 
and hence that $\max_i |W_i| \overset{p}{\rightarrow} 0$ as required.

\item This requirement is satisfied by Theorem 2.23 in \cite{HallHeyde1980}. Condition 2.24 in that reference follows from observing that:
$$\sum_{i=1}^n \mathbb{E}\left[W_i^2|\mathcal{F}_{n, i-1}\right]=\frac{\mathbb{V}\left(\left.\sum_{p, \mathbf{i}} \lambda' \nabla_{\theta} l_{p\mathbf{i}}(\theta_{0})\right|\mathcal{G}_N\right)}{\mathbb{V}\left(\sum_{p, \mathbf{i}} \lambda' \nabla_{\theta} l_{p\mathbf{i}}(\theta_{0})\right)},$$

then, by the MI, for any $M>0$
\begin{align*}
    \Pr\left( \sum_{i=1}^n \mathbb{E}\left[W_i^2|\mathcal{F}_{n, i-1}\right]>M\right)&\leq \frac{\mathbb{E}\left(\mathbb{V}\left(\left.\sum_{p, \mathbf{i}}  \lambda' \nabla_{\theta}l_{p\mathbf{i}}(\theta_{0})\right|\mathcal{G}_N\right)\right)}{M\times \mathbb{V}\left(\sum_{p, \mathbf{i}} \lambda' \nabla_{\theta} l_{p\mathbf{i}}(\theta_{0})\right)} \\
    &\leq \frac{1}{M}
\end{align*}
where the second inequality follows by the law of total variance. So that:
$$\sup_N  \Pr\left( \sum_{i=1}^n \mathbb{E}\left[\left.W_i^2\right|\mathcal{F}_{n, i-1}\right]>M\right) \rightarrow 0; \mbox{ as } M\rightarrow +\infty$$
as desired and required. Condition 2.25 of \cite{HallHeyde1980} (the conditional Lindeberg-Feller condition) follows from equation \eqref{eq: LindebergFeller} and an application of the MI. Fix $M>0$ and $\epsilon>0$:
\begin{align*}
   \Pr \left(\sum_i \mathbb{E}\left[\left.W_i^2 \cdot 1 \left(|W_i|>\epsilon\right)\right|\mathcal{F}_{N,i-1}\right]\geq M  \right)&\leq \frac{\mathbb{E}\left[\sum_i \mathbb{E}\left[\left.W_i^2 \cdot 1(|W_i|>\epsilon)\right|\mathcal{F}_{N,i-1}\right] \right]}{M}\\
   &=\frac{\mathbb{E}\left[\sum_i W_i^2 \cdot 1(|W_i|>\epsilon) \right] }{M}\\
   &\rightarrow_{N\rightarrow+\infty}0
\end{align*}
Then Equation 2.26 in \cite{HallHeyde1980} shows that $\sum W_i^2 \rightarrow_p 1$.
\end{enumerate}
Steps (a), (b) and (c) above allow us to conclude that:
\begin{equation*}
    \frac{1}{\sqrt{\mathbb{V}\left(\sum_{p, \mathbf{i}} \lambda' \nabla_{\theta}l_{p\mathbf{i}}(\theta_{0})\right)}} \sum_i \lambda' \nabla_\theta l_{i}(\theta_0)\rightarrow_d \mathcal{N}(0,1)
\end{equation*}
By Assumption \ref{ass: hessian}, Slutsky's lemma, as well as the information matrix inequality, we then get:
\begin{equation*}
    \frac{1}{\sqrt{\lambda' \mathcal{I}\left(\theta_0\right) \lambda}} \cdot \frac{1}{\sqrt{\mathbb{E}\left[\sum_{p,\mathbf{i}} Z_{p\mathbf{i} }\right]}} \cdot \sum_i \lambda' \nabla_\theta l_{i}(\theta_0)\rightarrow_d \mathcal{N}\left(0,1\right).
\end{equation*}
Since $\lambda$ is arbitrary, we have:
\begin{equation}\label{eq: CLT}
     \sqrt{\mathbb{E}\left[\sum_{p, \mathbf{i}} Z_{p\mathbf{i}}\right]} \cdot \frac{\partial L_N (\theta_0)}{\partial \theta} \overset{d}{\rightarrow} \mathcal{N}\left(0,\mathcal{I}\left(\theta_0\right)\right).
\end{equation}
\noindent \textbf{Convergence of $\frac{\partial^2 L_N}{\partial \theta^2}(\theta_0)$}: Notice that, for any $\lambda$, by the information matrix equality, almost surely for every $p, \mathbf{i}$:
$$ \mathbb{E}\left[\left.(\lambda'\nabla_\theta l_{p\mathbf{i}}(\theta_{0}))^2+\lambda' \nabla_{\theta\theta} l_{p\mathbf{i}}(\theta_0) \lambda\right|\mathbf{D}_0,\mathbf{D}_3,\mathbf{Z}, \mathbf{A}\right]=0,$$
and hence we get, using the conditional mean zero property of the score vector and conditional independence (i.e., Lemma \ref{lem: conditional-independence}),
$$\mathbb{V}\left( \frac{\sum_{p,\mathbf{i}} (\lambda' \nabla_\theta l_{p\mathbf{i}}(\theta_{0}))^2+\lambda' \nabla_{\theta\theta} l_{p\mathbf{i}}(\theta_0) \lambda }{\mathbb{V}\left(\sum_{p, \mathbf{i}} \lambda' \nabla_{\theta} l_{p\mathbf{i}}(\theta_{0})\right)}\right) = \frac{\sum_{p, \mathbf{i}} \mathbb{E}\left[\left((\lambda'\nabla_\theta l_{p\mathbf{i}}(\theta_{0}))^2+\lambda' \nabla_{\theta\theta} l_{p\mathbf{i}}(\theta_0) \lambda\right)^2\right] }{\left(\sum_{p,\mathbf{i}} \mathbb{E}\left[(\lambda' \nabla_{\theta} l_{p\mathbf{i}}(\theta_{0}))^2\right]\right)^2}.$$ 
By arguments similar to those used to derive  equation \eqref{eq: LowerBound}, we can show that $\sum_{p, \mathbf{i}} \mathbb{E}\left[(\lambda' \nabla_{\theta}l_{p\mathbf{i}}(\theta_{0}))^2\right]$ is lower bounded by a constant times $ \mathbb{E}\left[\sum_{p, \mathbf{i} } Z_{p\mathbf{i}}\right]$ and that $\sum_{p, \mathbf{i}} \mathbb{E}\left[\left((\lambda'\nabla_\theta l_{p\mathbf{i} }(\theta_{0}))^2+\lambda' \nabla_{\theta\theta} l_{p\mathbf{i}}(\theta_0) \lambda\right)^2\right]$ is upper bounded by another constant times $ \mathbb{E}\left[\sum_{p, \mathbf{i} } Z_{p\mathbf{i}}\right]$, then by Assumption \ref{ass: identifying_events}:
$$\mathbb{V}\left( \frac{\sum_{p,\mathbf{i}} (\lambda' \nabla_\theta l_{p\mathbf{i}}(\theta_{0}))^2+\lambda' \nabla_{\theta\theta} l_{p\mathbf{i}}(\theta_0) \lambda }{\mathbb{V}\left(\sum_{p, \mathbf{i}} \lambda' \nabla_{\theta} l_{p\mathbf{i}}(\theta_{0})\right)}\right)\rightarrow 0,$$

By our demonstration of pointwise convergence of the conditional star system log-likelihood in Step 1.a of Theorem \ref{thm: consistency} above, $\frac{\sum_{p, \mathbf{i}} (\lambda' \nabla_\theta l_{p\mathbf{i}}(\theta_{0}))^2}{\mathbb{V}\left(\sum_{p, \mathbf{i}} \lambda' \nabla_{\theta} l_{ p\mathbf{i} }(\theta_{0})\right)} \overset{p}{\rightarrow} 1$ and hence, again invoking the information matrix equality,
\begin{equation*}
    \frac{-1}{{\mathbb{V}\left(\sum_{p, \mathbf{i}} \lambda' \nabla_{\theta} l_{p\mathbf{i}}(\theta_{0})\right)}} \sum_{p, \mathbf{i}} {\lambda' \nabla_{\theta\theta} l_{p\mathbf{i}}(\theta_0) \lambda} \overset{p}{\rightarrow} 1.
\end{equation*}
Assumption \ref{ass: hessian} then implies:
$$ \frac{1}{\mathbb{E}\left[\sum_{p,\mathbf{i}} Z_{p\mathbf{i}}\right]} \sum_{p, \mathbf{i}} {\lambda' \nabla_{\theta\theta} l_{p\mathbf{i}}(\theta_0) \lambda} \overset{p}{\rightarrow} -\lambda'\mathcal{I}\left(\theta_0\right)\lambda . $$

Since convergence happens for any $\lambda$, we therefore have:\footnote{ We use the identity: $ x' H y
= \frac{1}{4} \Big[ (x+y)' H (x+y) - (x-y)' H (x-y) \Big] $ for any symmetric $H$ and any vectors $x$ and $y$.}

\begin{equation}\label{eq: HessianConvergence}
    \frac{1}{\mathbb{E}\left[\sum_{p, \mathbf{i}} Z_{p\mathbf{i}}\right]} \frac{\partial^2 L_N(\theta_0)}{\partial \theta^2} \overset{p}{\rightarrow} -\mathcal{I}\left(\theta_0\right).     
\end{equation}

Equation \eqref{eq: CLT} and \eqref{eq: HessianConvergence} allow to conclude that:

$$\sqrt{\mathbb{E}\left[\sum_{p, \mathbf{i} } Z_{p\mathbf{i}}\right]} \cdot (\hat \theta- \theta_0) \overset{D}{\rightarrow} \mathcal{N}\left(0, \mathcal{I}^{-1}\left(\theta_0\right)\right),$$
as claimed.

\subsection{Proof of Proposition \ref{prop:ER}}\label{proof:ER}

We consider the two cases $\beta_0\geq 0$ and $\beta_0< 0$ separately.\\
\textbf{Case 1: $\beta_0\geq 0$}\\ We show that for any dyad $i\neq j$, the probability of swapping an edge between periods 1 and 2 decays exponentially with $N$. To start, we present and prove a Lemma which controls the tails of $R_{ij0}$ and $R_{ij1}$.
     \begin{lem}\label{lem: RTail}
        Assume $\beta_0\geq 0$. Define $c_0:= \frac{1}{2}p^2$ and $c_1:=\frac{1}{2}\Lambda(-|\alpha_0|-\bar{A})^2$ 
        , there exist $\gamma_0>0$ and  $\gamma_1>0$ independent of $N$ such that:
        $$\mathbb{P}\left(\exists i,j:\; R_{ij0}\leq c_0 N\right)\leq N^2 \exp(-\gamma_0 N)\mbox{ and } \mathbb{P}\left(\exists i,j:\; R_{ij1}\leq c_1 N\right)\leq N^2 \exp(-\gamma_1 N).$$
    \end{lem}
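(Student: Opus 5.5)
The plan is a Chernoff bound for sums of independent Bernoullis, followed by a union bound over the at most $N^2$ dyads.

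\emph{Period 0.} Fix a dyad $ij$. Since $\mathbf{D}_0$ is Erd\H{o}s--R\'enyi with edge probability $p$, we have $R_{ij0}=\sum_{k\neq i,j}D_{ik0}D_{jk0}$. The summands are i.i.d.\ Bernoulli$(p^2)$ across $k$, because the edge pairs $\{ik,jk\}$ are disjoint for distinct $k$. Hence $\mathbb{E}[R_{ij0}]=(N-2)p^2$. Hoeffding's inequality gives
\[
\Pr\left(R_{ij0}\leq \tfrac12 p^2 N\right)\leq \Pr\left(R_{ij0}-(N-2)p^2\leq -\tfrac12 p^2 N+2p^2\right)\leq \exp\left(-2\left(\tfrac14 p^2N\right)^2/(N-2)\right)
\]
for $N$ large. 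This is at most $\exp(-\gamma_0 N)$ with, say, $\gamma_0=p^4/8$ (after adjusting constants to absorb the $2p^2$ term for small $N$). A union bound over the fewer than $N^2$ dyads then yields the first claim. This step is routine.

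\emph{Period 1.} Work conditionally on $(\mathbf{D}_0,\mathbf{A})$. By \eqref{eq:ij_Link_Model}, the indicators $D_{ik1}$ are independent across dyads given $(\mathbf{D}_0,\mathbf{A})$, since the $U_{ik1}$ are independent. Each has success probability $\Lambda(\alpha_0 D_{ik0}+\beta_0R_{ik0}+A_{ik})$. Because $\beta_0\geq0$, $R_{ik0}\geq0$ and $|A_{ik}|\leq\bar A$, this probability is at least $\Lambda(-|\alpha_0|-\bar A)=:q$.

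Again $R_{ij1}=\sum_{k\neq i,j}D_{ik1}D_{jk1}$, whose summands are conditionally independent across $k$ (disjoint dyads), each with success probability at least $q^2$. A coupling or stochastic dominance argument, or directly Hoeffding applied to the centered sum whose conditional mean is at least $(N-2)q^2$, gives
\[
\Pr\left(R_{ij1}\leq \tfrac12 q^2N \,\middle|\, \mathbf{D}_0,\mathbf{A}\right)\leq \exp(-\gamma_1 N),
\]
uniformly in $(\mathbf{D}_0,\mathbf{A})$. Taking expectations and union bounding over dyads gives the second claim with $c_1=\tfrac12 q^2$.

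\emph{Main obstacle.} The only delicate point is making the conditional independence in period 1 precise and the bound uniform in the conditioning. This is where bounded support of $A$ and $\beta_0\geq 0$ are used: without them the success probabilities could degenerate. Note that the period 0 bound is not even needed for period 1; it is presumably used later to push $\mathbf{D}_1$ toward completeness.
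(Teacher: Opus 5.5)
Your proposal is correct and follows the paper's proof. Both apply Hoeffding's inequality to $R_{ij0}$ as a sum of independent Bernoulli$(p^2)$ terms, then, conditionally on $(\mathbf{D}_0,\mathbf{A})$, to $R_{ij1}$ using the uniform lower bound $\Lambda(-|\alpha_0|-\bar A)^2$ on each summand's success probability, and finish with a union bound over dyads; the only difference is an inessential constant ($\gamma_0=p^4/8$ in your version versus $p^4/6$ in the paper).
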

    \begin{proof}[Proof of Lemma \ref{lem: RTail}]
We start by observing that $R_{ij0}=\sum_{k}D_{ik0}D_{jk0}$ is a sum of $N-2$ independent Bernoulli random variables each with success probability $p^2$ and hence $\mathbb{E}[R_{ij0}]=(N-2)p^2$. Hoeffding's Inequality (HI) then gives, for $N$ large enough:
\begin{align*}
\mathbb{P}(\exists i,j:\quad R_{ij0}\leq c_0 N)&\leq \sum_{i<j} \mathbb{P}( R_{ij0}\leq c_0 N)\\
& \leq \sum_{i<j} \mathbb{P}\left(R_{ij0}-\mathbb{E}[R_{ij0}]\leq c_0 N -\mathbb{E}[R_{ij,0}]\right)\\
&\leq \tbinom{N}{2} \times \exp\left(-\frac{2[(N-2)p^2-c_0N]^2}{N-2} \right)\\
&\leq N^2 \exp\left(-\frac{Np^4}{6}\right).
\end{align*}
\noindent So the first hypothesis holds with $\gamma_0=\frac{p^4}{6}$. Next consider the second hypothesis. By the law of total probability:
\begin{align*}
\mathbb{P}(\exists i,j:\quad R_{ij1}\leq c_1 N)&\leq \sum_{i<j} \mathbb{P}( R_{ij1}\leq c_1 N)\\
& \leq \sum_{i<j} \mathbb{E}\left[\mathbb{P}\left(R_{ij1}-\mathbb{E}[R_{ij1}|\mathbf{D}_0, \mathbf{A}]\leq c_1 N-\mathbb{E}[R_{ij1}|\mathbf{D}_0, \mathbf{A}]\big | \mathbf{D}_0, \mathbf{A}\right)\right].
\end{align*}
\noindent Recalling that $A_{ij} \in [-\bar{A},\bar{A}]$ we get:
\begin{align*}
    \mathbb{E}[R_{ij1}|\mathbf{D}_{0},\mathbf{A}]&=\sum_{k\neq i,j}\Lambda(\alpha_{0}D_{ik0}+\beta_{0}R_{ik0}+A_{ik})\Lambda(\alpha_{0}D_{jk0}+\beta_{0}R_{jk0}+A_{jk})\\&\geq\sum_{k\neq i,j}\Lambda({-|\alpha_0|+}\beta_{0}R_{ik0}-\bar{A})\Lambda(-|\alpha_0|+\beta_{0}R_{jk0}-\bar{A})\\&\geq\left(N-2\right)\Lambda({-|\alpha_0|}-\bar{A})^{2}.
\end{align*}
\noindent Invoking the HI for a second time yields, for $N$ large enough:
\begin{align*}
    \mathbb{P}\left(R_{ij1}-\mathbb{E}[R_{ij1}|\mathbf{D}_0, \mathbf{A}]\leq c_1 N-\mathbb{E}[R_{ij1}|\mathbf{D}_0, A]\big | \mathbf{D}_0, \mathbf{A}\right)&\leq  \exp\left(-\frac{2\left(\mathbb{E}[R_{ij1}|\mathbf{D}_0, \mathbf{A}]-c_1 N \right)
    ^2}{N-2} \right)\\
    & \leq\exp\left(-\frac{2\left(\left[\frac{N-4}{2}\right]\Lambda\left({-|\alpha_0|}-\bar{A}\right)^{2}\right)^{2}}{N-2}\right) \\
    & \leq \exp\left(-\frac{N\Lambda\left({-|\alpha_0|}-\bar{A}\right)^{4}}{6}\right). 
\end{align*}
\noindent Then
\begin{align*}
\mathbb{P}(\exists i,j:\quad R_{ij1}\leq c_1 N| \mathbf{D}_0, \mathbf{A})&\leq N^2  \exp\left(-\frac{N\Lambda\left({-|\alpha_0|}-\bar{A}\right)^{4}}{6}\right) 
\end{align*}
So the second hypothesis holds with $\gamma_1=\frac{\Lambda{({-|\alpha_0|}-\bar{A}})^4}{6}$.

\vspace{10mm}    
\noindent Returning to the proof of Proposition \ref{prop:ER}. Let $$\mathcal{E}_0 = \{\forall k,l:\; R_{kl0} > c_0 N\}\; \mbox{ and }  \mathcal{E}_1 = \{\forall k,l:\; R_{kl1} > c_1 N\}.$$ By Lemma \ref{lem: RTail}, $\mathbb{P}(\mathcal{E}_0^c) \leq N^2 \exp(-\gamma_0 N)$ and $\mathbb{P}(\mathcal{E}_1^c) \leq N^2 \exp(-\gamma_1 N)$.

Fix a pair $i,j$. Conditional on $\mathbf{D}^1$ and $\mathbf{A}$, on the event $\mathcal{E}_1$, we have $R_{kl1} > c_1 N$ for all $k,l$. Hence:
\begin{align*}
\mathbb{P}(D_{ij1}\neq D_{ij2}| \mathbf{D}^1, \mathbf{A}) 1(\mathcal{E}_1) &\leq 1(D_{ij1}=0) + \mathbb{P}(D_{ij2}=0|D_{ij1}=1, \mathbf{D}^1, \mathbf{A}) 1(\mathcal{E}_1) \\
&\leq 1(D_{ij1}=0) + (1-\Lambda(\alpha_0+\beta_0 c_1 N - \bar{A})) \\
&\leq 1(D_{ij1}=0) + e^{-\alpha_0+\bar{A}}e^{-\beta_0 c_1 N}.
\end{align*}

The probability that an arbitrary $K$-system $i:=\{i_1, \dots, i_K\}$ is identifying, conditional on $\mathbf{D}^1, \mathbf{A}$ and restricted to $\mathcal{E}_1$, is bounded above by:
\begin{align*}
\mathbb{P}(Z_i=1 | \mathbf{D}^1, \mathbf{A}) 1(\mathcal{E}_1) &\leq \prod_{l\neq 1} \mathbb{P}(D_{i_1 i_l 1} \neq D_{i_1 i_l 2} | \mathbf{D}^1, \mathbf{A}) 1(\mathcal{E}_1) \\
&\leq \prod_{l\neq 1} \left(1(D_{i_1 i_l 1}=0) + e^{-\alpha_0+\bar{A}}e^{-\beta_0 c_1 N}\right).
\end{align*}

Since the period-1 links form independently conditional on $\mathbf{D}^0, \mathbf{A}$, taking the expectation over $\mathbf{D}^1$ gives:
$$\mathbb{E}[\mathbb{P}(Z_i=1 | \mathbf{D}^1, \mathbf{A}) 1(\mathcal{E}_1) | \mathbf{D}^0, \mathbf{A}] \leq \prod_{l\neq 1} \left(\mathbb{P}(D_{i_1 i_l 1}=0 | \mathbf{D}^0, \mathbf{A}) + e^{-\alpha_0+\bar{A}}e^{-\beta_0 c_1 N}\right).$$

On the event $\mathcal{E}_0$, $R_{kl0} > c_0 N$ for all $k,l$, so $\mathbb{P}(D_{i_1 i_l 1}=0 | \mathbf{D}^0, \mathbf{A}) \leq e^{|\alpha_0|+\bar{A}}e^{-\beta_0 c_0 N}$. Thus:
\begin{align*}
\mathbb{P}(Z_i=1, \mathcal{E}_1 | \mathbf{D}^0, \mathbf{A}) 1(\mathcal{E}_0) &\leq \prod_{l\neq 1} \left(e^{|\alpha_0|+\bar{A}}e^{-\beta_0 c_0 N} + e^{-\alpha_0+\bar{A}}e^{-\beta_0 c_1 N}\right) \\
&\leq C^{K-1} \exp(-\gamma_3 (K-1) N),
\end{align*}
for some constants $C>0$ and $\gamma_3>0$ independent of $N, K$. Consequently, taking the unconditional expectation ensures:
$$\mathbb{P}(Z_i=1, \mathcal{E}_0 \cap \mathcal{E}_1) \leq C^{K-1} \exp(-\gamma_3 (K-1) N).$$

To bound the expected number of identifying stars, we partition the expectation. By Corollary 1, any dyad belongs to at most one identifying star, thus the total number of identifying stars is deterministically bounded by $N/2$. We then obtain:
\begin{align*}
\mathbb{E}\left[\sum_{K} m_{K,N}\right] &= \mathbb{E}\left[\sum_{K} m_{K,N} 1(\mathcal{E}_0 \cap \mathcal{E}_1)\right] + \mathbb{E}\left[\sum_{K} m_{K,N} 1((\mathcal{E}_0 \cap \mathcal{E}_1)^c)\right] \\
&\leq  N \sum_{K=2}^{N-1} \binom{N-1}{K-1} \cdot \mathbb{P}(Z_i=1, \mathcal{E}_0 \cap \mathcal{E}_1) + \frac{N}{2} \mathbb{P}(\mathcal{E}_0^c \cup \mathcal{E}_1^c) \\
&\leq N \sum_{K=2}^{N-1} \binom{N-1}{K-1} \left(C \exp(-\gamma_3 N)\right)^{K-1} + \frac{N}{2} \left(N^2 \exp(-\gamma_0 N) + N^2 \exp(-\gamma_1 N)\right) \\
&= N \left( \left[1 + C \exp(-\gamma_3 N)\right]^{N-1} - 1- \left[C \exp(-\gamma_3 N)\right]^{N-1}  \right)\\
& \quad + \frac{N^3}{2} \left(\exp(-\gamma_0 N) + \exp(-\gamma_1 N)\right).
\end{align*}

Where the second inequality follows by the separation lemma and Corollary \ref{cor: separation-dyads-corro}. Since $(1+x)^m - 1 \sim mx$ for small $x$, the first term behaves as $N(N-1) C \exp(-\gamma_3 N)$, which converges to $0$. The second term is a polynomial multiplied by an exponentially decaying factor, which also clearly converges to $0$. Thus $\mathbb{E}\left[\sum_{K} \bold m_{K,N}\right] \xrightarrow{N\to\infty} 0$ as claimed.

\noindent \textbf{Case 2: $\beta_0<0$} \\
For this case we divide the proof into three steps.

\medskip
\noindent
\textbf{Step 1: the period-1 graph is empty with overwhelming probability.}

Let
\[
\mathcal E_N:=\{D_{ij1}=0\ \forall i<j\}
\]
denote the event that the period-1 graph is empty.

Fix a dyad $ij$. Since $\Lambda(x)\le e^x$ for all $x\in\mathbb R$, we have
\[
\Pr(D_{ij1}=1\mid \mathbf{D}_0, \mathbf{A})
=
\Lambda(\alpha_0 D_{ij0}+\beta_0 R_{ij0}+A_{ij})
\le
\exp\big(\alpha_+ + \beta_0 R_{ij0} + \bar A\big),
\]
where $\alpha_+ := \max\{\alpha_0,0\}$. Taking expectations and using $\beta_0<0$,
\[
\Pr(D_{ij1}=1)
\le
e^{\alpha_+ + \bar A}\,\mathbb E[e^{\beta_0 R_{ij0}}].
\]
Under $\mathbf{D}_0\sim G(N,p)$,
\[
R_{ij,0}\sim \mathrm{Bin}(N-2,p^2),
\]
hence
\[
\mathbb E[e^{\beta_0 R_{ij,0}}]
=
\big(1-p^2+p^2 e^{\beta_0}\big)^{N-2}
\le
\exp\big(-(1-e^{\beta_0})(N-2)p^2\big).
\]
Therefore there exist constants $C_1,C_2>0$ such that, for all $N$ large enough,
\[
\Pr(D_{ij1}=1)
\le
C_1 e^{-C_2 N p^2}.
\]
By a union bound,
\[
\Pr(\mathcal E_N^c)
\le
\sum_{i<j}\Pr(D_{ij1}=1)
\le
C_1 N^2 e^{-C_2 N p^2}.
\]
Since $N p^2/\log N\to\infty$, it follows that
\[
\Pr(\mathcal E_N^c)=o(N^{-m})
\qquad\text{for every fixed }m>0.
\]

\medskip
\noindent
\textbf{Step 2: on $\mathcal E_N$, a given candidate star is exponentially unlikely to be identifying.}

Fix $K\in\{2,\dots,N-1\}$ and consider a candidate $K$-star system
\[
\mathcal{S}=\{i_1i_2,\dots,i_1i_K\}
\]
with center $i_1$ and leaves $i_2,\dots,i_K$.

On $\mathcal E_N$, we have $D_{uv1}=0$ for every dyad $uv$, and hence also
\[
R_{uv1}=0
\qquad\text{for every }u\neq v.
\]
Therefore, conditional on $(\mathcal E_N, \mathbf{A})$, the period-2 link variables
$\{D_{uv2}\}_{u<v}$ are independent across dyads and satisfy
\[
\Pr(D_{uv2}=1\mid \mathcal E_N,\mathbf{A})=\Lambda(A_{uv}),
\qquad
\Pr(D_{uv2}=0\mid \mathcal E_N,\mathbf{A})=1-\Lambda(A_{uv})=\Lambda(-A_{uv}).
\]
Since $|A_{uv}|\le \bar A$, both probabilities are bounded above by
\[
c_*:=\Lambda(\bar A)<1.
\]

Now suppose that $\mathcal{S}$ is identifying. Because $\mathbf{D}_{1}=0$, Definitions \ref{def: stable-neighborhood} and \ref{def: identifying-star-system} imply the
following period-2 requirements:

\begin{enumerate}
\item each of the $K-1$ spoke dyads in $\mathcal{S}$ must satisfy
\[
D_{i_1i_m2}=1,\qquad m=2,\dots,K;
\]
\item each of the $\binom{K-1}{2}$ leaf-leaf dyads must satisfy
\[
D_{i_m i_l2}=0,\qquad 2\le m<l\le K;
\]
\item each dyad joining one of the $K$ star vertices to one of the $N-K$ outside vertices
must satisfy
\[
D_{uv2}=0.
\]
There are exactly $K(N-K)$ such dyads.
\end{enumerate}

Hence
\[
\Pr(\mathcal{S}\text{ is identifying}\mid \mathcal E_N,\mathbf{A})
\le
c_*^{\,L_{N,K}}.
\]
with \[
L_{N,K}
:=
(K-1)+\binom{K-1}{2}+K(N-K)
=
K(N-K)+\binom{K}{2}
= K\left(N-\frac{K+1}{2}\right).\]

\medskip
\noindent
\textbf{Step 3: summing over all sizes and all systems.}

Define:
\[
Y_{ij}:=1\{\text{dyad }ij\text{ is a spoke dyad of some identifying star system}\}.
\]
By Dyad Separation (see Corollary \ref{cor: separation-dyads-corro}), any dyad belongs to at most one identifying star system, and every
identifying star system contains at least one spoke dyad. Therefore
\[
\sum_K \bold m _{K,N}\le \sum_{i<j} Y_{ij}.
\]

By symmetry,
\[
\mathbb E\left[\sum_K \bold m _{K,N} 1(\mathcal E_N)\right]
\le
\binom{N}{2}\Pr(Y_{12}=1,\mathcal E_N)
=
\binom{N}{2}\,
\mathbb E\!\left[
1(\mathcal E_N)\Pr(Y_{12}=1\mid \mathcal E_N,\mathbf{A})
\right].
\]
Fix $K$. The number of candidate $K$-stars for which dyad $12$ is a spoke is
\[
2\binom{N-2}{K-2},
\]
since either node $1$ is the center and node $2$ is one of the leaves, or vice versa,
and then the remaining $K-2$ leaves are chosen from the other $N-2$ nodes. Hence, by a
union bound and the estimate from Step 2,
\[
\Pr(Y_{12}=1\mid \mathcal E_N,A)
\le
\sum_{K=2}^{N-1}
2\binom{N-2}{K-2} c_*^{\,L_{N,K}}.
\]

We now bound this sum.

\[\mathbb{E}\left[\sum_K \bold m _{K,N}| \mathcal E_N,A\right] \leq N\sum_{K=2}^{N-1} \binom{N-1}{K-1} c_*^{\,L_{N,K}}\]

For $2\le K\le N/2$, we have
\[
L_{N,K}
=
K\Big(N-\frac{K+1}{2}\Big)
\ge
\frac{K N}{2},
\]
so
\[
\binom{N-1}{K-1} c_*^{\,L_{N,K}}
\le
N^{K-1} c_*^{K N/2}
=
c_*^{N/2} \big(N c_*^{N/2}\big)^{K-1}.
\]
Since $c_*<1$, we have $N c_*^{N/2}\to 0$, and therefore, for all large $N$,
\[
\sum_{K=2}^{\lfloor N/2\rfloor}
 N \binom{N-1}{K-1}  c_*^{\,L_{N,K}}
\le
\sum_{K=2}^{\lfloor N/2\rfloor}
 N \binom{N-1}{K-1} (N c_*^{N/2}\big)^{K}\leq 2 N^3 c_*^N.
\]
where the last inequality follows by upper bounding the sum from $K=2$ to $\lfloor N/2\rfloor$ by the sum from $K=2$ to $N-1$ then using the binomial identity.

For $K>N/2$, the function $K\mapsto L_{N,K}$ is bounded below by a quadratic term; in
particular, for all large $N$,
\[
L_{N,K}\ge \frac{N^2}{4}.
\]
Hence
\[
\sum_{K=\lfloor N/2\rfloor+1}^{N-1}
N \binom{N-1}{K-1}  c_*^{\,L_{N,K}}
\le
N 2^N c_*^{N^2/4}.
\]

Combining the two ranges,
$$\mathbb{E}\left[\sum_K \bold m _{K,N}|A,\mathcal E_N\right]\leq N 2^N c_*^{N^2/4}+2N^3 c_*^N $$

Note that 
\begin{align*}
    \mathbb{E}\left[\sum_K \bold m _{K,N}|\mathbf{A}\right]&=\mathbb{E}\left[\sum_K \bold m _{K,N}| \mathbf{A}, \mathcal E_N\right]\Pr (\mathcal E_N)+\mathbb{E}\left[\sum_K \bold m _{K,N} 1(\mathcal E_N^c)|\mathbf{A}\right]\\
    &\leq N 2^N c_*^{N^2/4}+2N^3 c_*^N+ N \times \Pr(\mathcal E_N^c|\mathbf{A})
\end{align*}
where the inequality follows from the observation that $\sum_K \bold m _{K,N}\leq N$ almost surely,  by the separation lemma. Taking expectations, and recalling that $c_*<1$, we get:
$$ \mathbb{E}\left[\sum_K \bold m _{K,N}\right]\leq N 2^N c_*^{N^2/4}+2N^3 c_*^N+ \Pr(\mathcal E_N^c) \rightarrow 0,$$
as claimed.

\subsection{Proof of Proposition \ref{prop:RGG}}\label{proof:RGG}

For any  dyad $ij$,
\[
R_{ijt}\le \#\{k\neq i,j:\|z_k-z_i\|\le r,\ \|z_k-z_j\|\le r\}.
\]
The set on the right is contained in the intersection of two radius-$r$ balls, hence in a region of area at most $\pi r^2$. Since the square has area $N$, that dominating count is $\mathrm{Binomial}(N-2,p)$ with $p\le \pi r^2/N$, so it is tight, indeed $O_p(1)$. Thus, in sparse geometric graphs, the raw common-neighbor count is already locally bounded in probability.

We begin with the upper bound. If $\mathcal{S}_\bold i$ is identifying, then every spoke $i_1i_m$ must be feasible, because Definition \ref{def: identifying-star-system} requires
\[
D_{i_1i_m1}\neq D_{i_1i_m2},
\]
and that is impossible when $A_{i_1i_m}=-\infty$. Hence
\[
\Pr(Z_{\bold i,K}=1)
\le
\Pr\big(\|z_{i_1}-z_{i_m}\|\le r,\ m=2,\dots,K\big).
\]
Let $B(x,r)$ be the closed ball in $\mathbb{R}^2$ centered at $x$ with radius $r$. Conditioning on $z_{i_1}=x$,
\[
\Pr(\|z_{i_1}-z_{i_m}\|\le r \ \forall m\mid z_{i_1}=x)
=
\left(\frac{|B(x,r)\cap \left[0, \sqrt{N}\right]^2|}{N}\right)^{K-1}
\le
\left(\frac{\pi r^2}{N}\right)^{K-1}.
\]
Therefore
\[
\Pr(Z_{\bold i,K}=1)\le \left(\frac{\pi r^2}{N}\right)^{K-1}.
\]

We next turn to the lower bound. Let
\[
M:=\binom K2,
\]
the number of internal dyads among the $K$ star nodes. Consider the event $E_\bold i$ defined by:

\begin{enumerate}
\item the center is well inside the square:
\[
z_{i_1}\in [3r/2,\sqrt N-3r/2]^2;
\]

\item each leaf lies within distance $r/2$ of the center:
\[
z_{i_m}\in B(z_{i_1},r/2),\qquad m=2,\dots,K;
\]

\item no other agent lies in $B(z_{i_1},3r/2)$;

\item among the $M$ feasible internal dyads:
\begin{itemize}
\item at $t=0$, all are absent;
\item at $t=1$, all possible edges are absent;
\item at $t=2$, exactly the $K-1$ ``spoke-edges" are present and the remaining $M-(K-1)$ ``leaf-leaf edges" are absent.
\end{itemize}
\end{enumerate}

On $E_\bold i$, the candidate $K$-star is identifying:

\begin{itemize}
\item all spokes switch $1\to 2$, so Definition \ref{def: identifying-star-system}(ii) holds;
\item because all leaves are within $r/2$ of the center, any outside node that could connect to any star node would have to lie within $3r/2$ of the center, but event 3 rules that out;
\item hence the star has no outside feasible neighbors at either $t=1$ or $t=2$;
\item the period-1 neighborhood is empty, so Definition \ref{def: stable-neighborhood}(ii) is vacuous;
\item all non-spoke internal dyads are stable (absent) in $t=1$ and $t=2$, so Definition \ref{def: stable-neighborhood}(iii) holds.
\end{itemize}

Thus
\[
E_\bold i\subseteq \{Z_{\bold i,K}=1\}.
\]

We now compute the probability of $E_\bold i$. First, for large $N$,
\[
\Pr\!\left(z_{i_1}\in [3r/2,\sqrt N-3r/2]^2\right)
=
\frac{(\sqrt N-3r)^2}{N}
\ge \frac12.
\]

Second, conditional on that interior event and on $z_{i_1}$,
\[
\Pr\big(z_{i_m}\in B(z_{i_1},r/2),\ m=2,\dots,K\mid z_{i_1}\big)
=
\left(\frac{\pi r^2}{4N}\right)^{K-1}.
\]

Third, again conditional on the center being interior and the leaves lying in $B(z_{i_1},r/2)$,
\[
\Pr\big(\text{no other agent in }B(z_{i_1},3r/2)\big)
=
\left(1-\frac{9\pi r^2}{4N}\right)^{N-K},
\]
which is bounded below by a positive constant for all large $N$.

So the geometry part contributes
\[
\Pr(\text{items 1--3}) \ge c_{1,K}\,N^{-(K-1)}
\]
for some $c_{1,K}>0$.

Finally, conditional on items 1--3, all feasible dyads are internal to the $K$ nodes. On the event $E_{\bold i}$, all internal dyads are zero at $t=0$ and $t=1$, so every relevant common-neighbor count is  $
R_{ij0}=R_{ij 1}=0$.

The probability of item 4, conditional on items 1--3, is
\[
(1-\Lambda(a))^{M}
\cdot
(1-\Lambda(a))^{M}
\cdot
\Lambda(a)^{K-1}(1-\Lambda(a))^{M-(K-1)},
\]
which is a strictly positive constant depending only on $K$ and $a$. (Recall that $A_{ij}=a$ when nodes $i$ and $j$ are distance $r$ or less apart.) Therefore,
\[
\Pr(Z_{\bold i,K}=1)\ge \Pr(E_\bold i)\ge c_{1,K}(1-\Lambda(a))^{M}
\cdot
(1-\Lambda(a))^{M}
\cdot
\Lambda(a)^{K-1}(1-\Lambda(a))^{M-(K-1)}\,N^{-(K-1)}.
\]
This proves
\[
\Pr(Z_{\bold i,K}=1)\asymp N^{-(K-1)}.
\]

To conclude, note that the number of candidate $K$-stars is
\[
N\binom{N-1}{K-1},
\]
because the center is distinguished and the $K-1$ leaves are chosen from the remaining $N-1$ agents. Therefore
\[
\mathbb E[\bold{m}_{K,N}]
=
N\binom{N-1}{K-1}\Pr(Z_{\bold i,K}=1)
\asymp
N\cdot N^{K-1}\cdot N^{-(K-1)}
\asymp N.
\]
Hence the expected number of identifying $K$-systems is of order $N$. Summing over $K=2,\dots, \bar K$ yields
\[
\mathbb E\!\left[\sum_{K=2}^{ \bar K} \bold{m}_{K,N}\right]\asymp N
\]
for every fixed $\bar K$.
\end{proof}

\end{document}